\theoremstyle{plain}
\newtheorem{proposition}{Proposition}
\newtheorem{theorem}{Theorem}
\newtheorem{remark}{Remark}
\newtheorem{example}{Example}
\newtheorem{criterion}{Criterion}
\newtheorem{definition}{Definition}
\newcommand{\bbeta}{\bm{\beta}}
\newcommand{\z}{\bm{z}}
\newcommand{\T}{\text{T}}
\renewcommand{\tilde}{\widetilde}
\newcommand{\blind}{1}
\begin{document}

	\def\spacingset#1{\renewcommand{\baselinestretch}%
		{#1}\small\normalsize}
	\spacingset{1.1}
	\date{}
	
	\title{\bf \Large 
		On MCMC mixing for predictive inference under
		unidentified transformation models
	}
	\if1\blind
	{
		\author{ Chong Zhong\thanks{
				\scriptsize
				The author is a Research Associate of Department of Data Science and Artificial Intelligence, The Hong Kong Polytechnic University. },\hspace{.2cm}
			Jin Yang\thanks{  
				\scriptsize
				The author is a Senior Research Fellow of 
				Department of Applied Mathematics, The Hong Kong Polytechnic University.},\hspace{.2cm}
			Junshan Shen\thanks{
				\scriptsize
				The author is an Associate Professor of School of Statistics, Capital University of Economics and Business.},\hspace{.2cm}
			Zhaohai Li \thanks{
				\scriptsize
				The author is a Professor of Department of Statistics, George Washington University, Washington, DC.}, \\
			\hspace{.2cm} and 
			Catherine C. Liu\thanks{
				\scriptsize
				The author is an Associate Professor of Department of Data Science and Artificial Intelligence, The Hong Kong Polytechnic University.}
		}
	}\fi
	
	\if0\blind
	{
		
	} \fi
	%   \small
	%   Department of Applied Mathematics, The Hong Kong Polytechnic University}
	% \author{Chong Zhong$^1$, 
	% Junshan Shen$^2$, 
	% Jin Yang$^3$, Zhaohai Li$^4$, \\ 
	% and Catherine C. Liu$^1$\\
	% {\small \textit{$^1$ Department of Applied Mathematics, The Hong Kong Polytechnic University, HKSAR;}}\\
	% {\small \textit{$^2$ School of Statistics, Capital University of Economics and Business, China;}}\\
	% {\small \textit{$^3$ National Institute of Child Health and Human Development, NIH, USA; }}\\
	% {\small \textit{$^4$ Department of Statistics, George Washington University, Washington, DC, USA. }}
	% }
	\maketitle
	\begin{abstract}
		% Unidentified transformation models preserve the model flexibility and release from complicated identification constraints. 
		Reliable Bayesian predictive inference has long been an open problem under unidentified transformation models, since the Markov Chain Monte Carlo (MCMC) chains of posterior predictive distribution (PPD) values are generally poorly mixed. 
		We address the poorly mixed PPD value chains under unidentified transformation models through an adaptive scheme for prior adjustment. 
		Specifically, we originate a conception of sufficient informativeness, which explicitly quantifies the information level provided by nonparametric priors, and assesses MCMC mixing by comparison with the within-chain MCMC variance. 
		We formulate the prior information level by a set of hyperparameters induced from the nonparametric prior elicitation with an analytic expression, which is guaranteed by asymptotic theory for the posterior variance under unidentified transformation models. 
		The analytic prior information level consequently drives a hyperparameter tuning procedure to achieve MCMC mixing. 
		The proposed method is general enough to cover various data domains through a multiplicative error working model. 
		Comprehensive simulations and real-world data analysis demonstrate that our method successfully achieves MCMC mixing and outperforms state-of-the-art competitors in predictive capability.

		% The multi-modal posterior under unidentified nonparametric models yields poor mixing of   Markov Chain Monte Carlo (MCMC), which is a stumbling block to Bayesian predictions. 
		% In this article, we conceptualize a prior informativeness threshold that is essentially the variance of posterior modes and expressed by the uncertainty hyperparameters of nonparametric priors. 
		% The threshold plays the role of a lower bound of the within-chain MCMC variance to ensure MCMC mixing, and engines prior modification  through hyperparameter tuning to descend the mode variance. 
		% Our method distinguishes from existing postprocessing methods in that it directly samples well-mixed MCMC chains on the unconstrained space, and inherits the original posterior predictive distribution in predictive inference. 
		% Our method succeeds in Bayesian survival predictions under an unidentified nonparametric transformation model, guarded by the inferential theories of the posterior variance, under elicitation of two delicate nonparametric priors. 
		% Comprehensive simulations and real-world data analysis demonstrate that our method achieves MCMC mixing and outperforms existing approaches in survival predictions. 
	\end{abstract}
	{
		{\it Keywords:}  Bayesian nonparametrics; Identifiability; MCMC mixing;  Predictive inference; Prior information level. 
	}
	%\vfill
	
	\newpage
	\spacingset{1.9} % DON'T change the spacing!
	\section{Introduction}
	\label{sec:intro}
	
	We study the linear transformation model \citep{cuzick1988rank}, 
	\begin{align}\label{basicLTM}
		h(y) = \bbeta^\T  \bm{z} + \epsilon,
	\end{align}
	where $y \in \mathcal{Y} \subset \mathbb{R}$ is the response, $\bm{z} \in \mathcal{Z} \subset \mathbb{R}^p$ is the $p$-dimensional vector of covariates, $\bbeta \in \mathbb{R}^p$ is the corresponding vector of regression coefficients,  
	$h(\cdot)$ is a strictly increasing function, and $\epsilon$ is the continuous error term with cumulative distribution function (CDF) $F_\epsilon$.
	Over the past decades, numerous studies have contributed to statistical inference under the transformation model \eqref{basicLTM} \citep[][among others]{horowitz1996semiparametric, linton2008estimation, hothorn2018most, kowal2024monte},  
	and may be categorized into two approaches. 
	
	i)  A common strategy is the semiparametric regression approach that \textit{imposes a transformation on a specified reference distribution for the model error $\epsilon$} \citep[among others]{chen2002semiparametric,hothorn2014conditional, siegfried2023distribution, carlan2024bayesian, kowal2024monte, brachem2024bayesian}. 
	This strategy is straightforward and readily implementable, though it may encounter the risk of \textit{model misspecification}. 
	With this reference distribution strategy, most consistency results were established under log-concave-like assumptions on $F_\epsilon$ \citep[among others]{zeng2006efficient, hothorn2018most}. Such assumptions may fail in practice; in say, a normal mixture regression scenario \citep{soffritti2011multivariate, kasahara2015testing}. 
	
	ii) A second approach is to allow $h$ and $F_\epsilon$ to both be unspecified in model \eqref{basicLTM}. 
	\textit{Identification conditions such as scale and location normalization constraints were imposed on either $h$  or $F_\epsilon$} \citep{horowitz1996semiparametric, ye1997nonparametric, chiappori2015nonparametric}. 
	Despite their robustness and ideal theoretical properties, these methods are usually computationally intractable due to the employed kernel smoothing techniques. 
	Although \cite{chen2002rank} proposed a rank estimator of $h$ that does not require smoothing, he did not consider estimating $F_\epsilon$, and therefore cannot estimate the predictive distribution for future data. 
	\cite{mallick2003bayesian} imposed  a constrained Polya tree prior for $F_\epsilon$ to identify model \eqref{basicLTM}, but the posterior computation may not be stable since the posterior could suffer from slow mixing with an inappropriate center distribution \citep{Muller2015}. 
	
	In this article, we consider a third approach, where we allow the unspecified infinite-dimensional parameters $h$ and $F_\epsilon$ to be \textit{unidentified}. 
	That is, given the data, the likelihood is equal for a range of (infinite-dimensional) parameters $(h, \bm{\beta}, F_\epsilon)$; refer to \citet[pp. 105]{horowitz1996semiparametric} for explicit description. 
	We attempt to avoid complicated identification constraints for feasible computation. 
	Specifically, we focus on Bayesian predictive inference (BPI), i.e. estimating the posterior predictive distribution (PPD) for future observations. 
	
	Though the BPI under unidentified models is \textit{conceptually doable}, the key challenge that remains unresolved is the \textbf{poor mixing} of \textit{PPD value chains} due to unidentifiability.
	Given $n$ observed data pairs $\mathcal{D} = \{y_i, \bm{z}_i\}_{i=1}^n$ , suppose the future response $y^*$ is independent of $\mathcal{D}$ given 
	the future covariates $\bm{z}^*$. 
	The PPD value at a point $s \in \mathcal{Y}$ is 
	$    F_{y^*|\bm{z}^*}(s| \mathcal{D})
	= \int F_y(s|\bm{z}^*, h, \bm{\beta}, F_\epsilon) d\pi(h, \bm{\beta}, F_\epsilon|\mathcal{D}), 
	$
	where $F_y(s|\bm{z}^*, h, \bm{\beta}, F_\epsilon)$ is the conditional CDF at $y=s$ under model \eqref{basicLTM} given parameters $(h, \bm{\beta}, F_\epsilon)$, and $\pi(h, \bm{\beta}, F_\epsilon|\mathcal{D})$ is the joint posterior distribution. 
	In practice, the PPD value is numerically approximated by Markov Chain Monte Carlo (MCMC) draws. 
	Suppose that one draws $M$ parallel MCMC chains of the same length $N_d$, obtaining $MN_d$ draws of $(h^{(ml)}, \bm{\beta}^{(ml)},F_{\epsilon}^{(ml)})$, for $m=1, \ldots, M$, $l=1, \ldots, N_d$. 
	Then the PPD value at $s$ is approximated as the average of the PPD value chains: 
	\begin{align*}
		F_{y^*|\bm{z}^* }(s|\mathcal{D}) \approx (MN_d)^{-1} \sum_{m=1}^M \sum_{l=1}^{N_d} F_y(s|\bm{z}^*, h^{(ml)}, \bm{\beta}^{(ml)}, F_\epsilon^{(ml)}).
	\end{align*}
	However, under the unidentified model \eqref{basicLTM}, this approximation will NOT be reliable if the PPD value chains $F(s|\bm{z}^*, h^{(ml)}, \bm{\beta}^{(ml)}, F_\epsilon^{(ml)})$ are \textbf{poorly mixed} in the sense that the M-chain PPD value samples do NOT converge to the stationary distribution. 
	Poorly mixed PPD value chains usually incur poor BPI; see the lower expected log predictive densities \citep[Corollary 5]{yao2022stacking} for illustration.

	Our solution is an \textit{adaptive scheme that leverages prior adjustment to achieve MCMC mixing}. 
	Our scheme operates like a bridge, on the one side is a new insight that \textit{under unidentified transformation models}, \textit{the posterior variance is (asymptotically) dominated by the information level of the elicited Bayesian nonparametric priors} (BNPs). 
	This insight comes from a new asymptotic posterior variance decomposition, where \textit{the remainder term vanishes at a rate of $n^{-1}$}, and the \textit{dominating term is fully determined by the hyperparameters in BNP elicitation}; refer to Theorem \ref{theo: posterior variance}. 
	On the other side is the common principle that MCMC mixing occurs if the within-chain MCMC variance is sufficiently close to the posterior variance \citep[Section 6.1]{brooks2011handbook}. 
	
	The insight and the principle motivate us to conceptualize a \textit{sufficient informativeness criterion}: if the within-chain MCMC variance exceeds the dominating term (or its approximation) of the posterior variance, then the BNPs are sufficiently informative to reach MCMC mixing;
	accordingly, the \textit{prior information level} is defined by the inverse of the dominating term. 
	This criterion distinguishes the popular practice of computing the empirical between- and within-chain variances \citep[among others]{gelman1992inference, brooks1998general} for discrimination of mixing only, since the analytic expression of the  (approximated) prior information level (refer to Eq. \eqref{lowerbound: approximation}) can further activate expedient prior adjustment to achieve MCMC mixing; refer to Algorithm \ref{alg:Tuning}. 
	
	To derive the prior information level, we design an ideal BNP elicitation: a monotone spline model \citep{ramsay1988monotone} that possesses L\'{e}vy properties \citep{doksum1974tailfree}, and a Dirichlet process mixture model \citep{lo1984class} with a Weibull kernel. 
	The hyperparameters induced from the hyperpriors for the BNP elicitation yield a neat analytic expression for an upper approximation to the prior information level. 
	Consequently, prior adjustment is straightforwardly conducted by an adaptive hyperparameter tuning procedure without specific requirements on the initial values (refer to Section \ref{subsec: simu_illustration} for illustration).

	The major contributions of this article are summarized as follows. 
	\begin{itemize}
		\item We contribute a robust and computationally feasible method for predictive inference under transformation models. 
		Our methodology and theoretical results are general enough to cover the response types considered by conditional transformation models \citep{hothorn2014conditional, carlan2024bayesian}. 
		Specifically, we might be the first to establish the posterior inference theory under an unidentified nonparametric model, including the asymptotic posterior variance (Theorem \ref{theo: posterior variance}) and the properness of the joint posterior (Theorem \ref{theorem:proper}). 
		
		\item We contribute an easily implemented method to address the poor mixing of PPD value chains under unidentified transformation models.
		The hyperparameter tuning procedure is implemented under a general MCMC sampler \texttt{Stan} \citep{carpenter2017stan}, releasing us from developing tricky samplers for multimodal target distributions \citep[e.g.][]{pompe2020framework}. 
		
		\item We contribute a  quantile-knot I-spines BNP for nonnegative monotonic smooth functions. 
		The proposed I-spline model enjoys lower model complexity (a few knots are enough) compared with other I-spline variants \citep[e.g.][]{wang2011semiparametric, kim2017bayesian}, while maintaining the root-$n$ posterior contraction rate that guarantees the asymptotic mixture of normals (Theorem \ref{theo:BvM}).  
		
		\item We develop an \texttt{R} package \texttt{BuLTM}, %a computationally feasible package 
		for BPI under the transformation model \eqref{basicLTM}. 
		Comprehensive numerical studies demonstrate that \texttt{BuLTM} achieves the mixing of PPD value chains, and outperforms other state-of-art (SOTA) competitors in prediction tasks.

	\end{itemize}

	\noindent{\textbf{Organization. }}
	Section \ref{sec: method} presents an equivalent working model to model \eqref{basicLTM} and formulates the BNPs. 
	Section \ref{sec: nonparametricposteriorinference} presents the adaptive scheme for prior adjustment to achieve MCMC mixing under the unidentified model. 
	Section \ref{sec:parametricprior} establishes the properness of the joint posterior and introduces estimation of the parametric component. 
	Simulations and applications to real-world data are presented in Sections \ref{sec:sim} and \ref{sec:app} respectively. 
	Section \ref{sec:disc} contributes brief discussion. 
	Technical proofs, additional simulation results, and other related details are collected in the \textit{Supplement}. 
	The companion \texttt{R} package \texttt{BuLTM} and the reproducible code for the numerical studies are available on GitHub \href{https://github.com/LazyLaker/BuLTM}{https://github.com/LazyLaker/BuLTM}.
	
	% \noindent{\textbf{Notations.}} 
	% For a parameter $\theta$, denote by $\pi(\theta)$  and $\pi(\theta|\mathcal{D})$ the prior and the posterior respectively. 
	% Denote by $\mathbb{E}_{\theta}$ and $\mathbb{V}_{\theta}$ the expectation and variance operator with respect to parameter $\theta$ under law $\pi(\theta)$. 
	
	\section{Transformed modeling and nonparametirc priors}
	\label{sec: method}
	\subsection{Multiplicative error working model}
	\label{subsec: ModTrans}
	We first perform a transformation $\tilde{h}$ on the response $y$ to transfer its support to (a subset of) $(0, \tau)$, for an arbitrary positive constant $\tau$. 
	In this article, we consider the $\tau$-Sigmoid function $\tilde{h}(y) = \tau/(1+e^{-y})$. 
	Let $\tilde{y} \in (0, \tau)$ be the transformed response and let $\circ$ be the composition of two functions operator. 
	Based on model \eqref{basicLTM}, we still have a  transformation model $h^* (\tilde{y})= \bbeta^T \bm{z} + \epsilon$, 
	where  $ h^*  =  h \circ \tilde{h}^{-1}$ is n monotone increasing function, with $\tilde{h}^{-1}$ being the (known) inverse function of $\tilde{h}$. 
	% In this article, we allow the transformed response $\tilde{y}$ to be either complete or right-censored.  

	Denote the transformation $\exp(h^*)$ by $H$, and accordingly $\xi = \exp({\epsilon})$ with CDF $F_{\xi}$. 
	We further have the following working model which is equivalent to model \eqref{basicLTM} in the sense of identical conditional distribution $F_{y|\bm{z}}$ under the two models
	\begin{align} \label{expmod1}
		H(\tilde{y}) = \xi\exp(\bbeta^\T  \z). 
	\end{align}
	The equivalence is based on the fact that
	$
	Pr\{y \le s|\bm{z}\} = Pr\{h(y) \le h(s)|\bm{z}\} =  Pr\{h^* (\tilde{y}) \le h^*(s) | \bm{z}\}= Pr\{H(\tilde{y}) \le H(s)|\bm{z}\}. 
	$
	Under working model \eqref{expmod1}, the following result holds naturally.
	\begin{proposition}
		\label{prop: H(0) = 0}
		$
		H(0) = 0 
		$ 
		if covariate $\bm{z}$ is independent of model error $\xi$.
	\end{proposition}
	The independence assumption between $\bm{z}$ and $\xi$ is general \citep{cuzick1988rank, horowitz1996semiparametric, chen2002rank}.
	As a result,  the space of $H$ is compressed to the space of nonnegative monotonic functions that passes through the origin. 
	
	\begin{remark}[\textbf{Nonlinearity}]
		\label{rmk: covariate transformation}
		The linear transformation model \eqref{basicLTM} and theworking model \eqref{expmod1} are sufficiently general  to incorporate nonlinear covariate effects. 
		Let $\bm{z} = (z_1, \ldots, z_p)$. 
		Let $\{\phi_{jk}\}_{k=1}^{K_j}$ be some basis functions (e.g. B-spline basis or Fourier basis) on $z_j$'s, for $j=1, \ldots, p$. 
		Let $\tilde{\bm{z}}_j = (\phi_{j1}(z_j), \ldots, \phi_{jK_j}(z_j))^T$ and $\otimes$ be the Kroneker product operator. 
		Based on the tensor product basis \citep{pya2015shape, carlan2024bayesian}, a smooth function $f: \mathbb{R}^p \to \mathbb{R}$ can be rewritten as $f(\bm{z}) = \tilde{\bm{\beta}}^T \tilde{\bm{z}}$, where $\tilde{\bm{z}} = \tilde{\bm{z}_1} \otimes \ldots \otimes \tilde{\bm{z}_p}$, and $\tilde{\bm{\beta}} \in \mathbb{R}^{\prod_{j=1}^p K_j}$. 
		To avoid the curse of dimensionality, one may consider an additive structure for $f$ \citep{linton2008estimation, chen2024semi} such that $f(\bm{z}) = \sum_{j=1}^p f_j(z_j)$, where $f_j(z_j) = \sum_{k=1}^{K_j} \beta_{jk} \phi_{jk}(z_j) \equiv \tilde{\bm{\beta}}^T \tilde{\bm{z}}$, where $\tilde{\bm{z}} = (\tilde{\bm{z}}_1, \ldots, \tilde{\bm{z}}_p)^T$ and $\tilde{\bm{\beta}} = (\beta_{11}, \ldots, \beta_{1K_1}, \ldots, \beta_{p1}, \ldots, \beta_{pK_p})^T$. 
		
	\end{remark}

	\subsection{Bayesian nonparametric priors}
	\label{subsec: nonpara prior}
	
	\subsubsection{Quantile-knot I-splines prior}
	\label{subsec: ispline}
	% \noindent{\textbf{Quantile-knot I-splines prior for $H$}}\\
	Given the aforementioned working model, the observed data $\mathcal{D}$ become independent pairs of $\{\tilde{y}_i, \bm{z}_i\}_{i=1}^n$. 
	For the transformed response $\tilde{y}_i$ observed on the interval $D = (0, \tau)$,   a natural method to model $H$ and its derivative $H'$ is to use the monotone spline basis,
	\begin{align}\label{ispline}
		H(s) = \sum_{j=1}^{K}\alpha_j B_j(s),  ~H'(s) = \sum_{j=1}^{K}\alpha_j B_j'(s),
	\end{align}
	where $\{\alpha_j\}_{j=1}^{K}$ are positive coefficients to guarantee nondecreasing monotonicity, $\{B_j(s)\}_{j=1}^{K}$ are I-spline functions \citep{ramsay1988monotone} on $D$ and $\{B_j'(s)\}_{j=1}^K$ are corresponding derivatives. 
	Once $\{\alpha_j\}_{j=1}^K$ are specified, $H$ and $H'$ are uniquely determined. 
	By Proposition \ref{prop: H(0) = 0}, we set $H(0) = 0$ directly, unlike existing I-splines approaches that include an unknown intercept.  
	A fundamental problem in spline modeling is how to specify the number of basis functions $K$, which is the sum of the number of interior knots and the order of smoothness $r$, defined by the existence of the $(r-1)$th order derivative. 
	Empirically, the degree $r$ may take a value from $2$ to $4$ and we take the default value $r=4$ in \texttt{R} package \texttt{splines2} \citep{splines2-paper}.
	The remaining task is to specify the number and locations of the interior knots. 
	
	We select interior knots from quantiles of the observed data, fitting a quantile-knots I-splines model, rather than using equally spaced knots. 
	Let $\hat{F}_{n}(s) = n^{-1}\sum_{i=1}^n I(\tilde{y}_i  \le  s)$ be the empirical CDF of $\tilde{y}$ and $\hat{Q}_{\tilde{y}}(q) = \hat{F}_{n}^{-1}(q) = \inf \{s: q\le \hat{F}_{n}(s)\}$ be the corresponding empirical quantile function, for $s\in (0, \tau)$ and $q \in (0, 1)$. 
	{We first specify $N_I$, the number of interior knots (we set $N_I = 4$ in this article as the default choice). 
		Then the interior knots are set as $s_j = \hat{Q}_{\tilde{y}}(j/N_I)$, for $j=0, \ldots, N_I - 1$. 
		Such a quantile-knot configuration guarantees that the observed data lie uniformly between the knots. }
	% In the case where the observations are censored, we can further introduce a knot-interpolation procedure to incorporate the information of censored observations; refer to \textit{Supplement} B for details. }

	% In Algorithm \ref{alg:quantile-knot}, Step 1 fixes the initial number of interior knots $N_I$, and then specifies the equally spaced empirical quantiles of fully observed observations as the initial quantiles; 
	% The remaining steps are a supplement to incorporate more information from censored observations so as to overcome the challenge of high censoring rate. 
	% \begin{figure}[!htb]
	%   \centering
	% \includegraphics[scale = .5]{}
	% \caption{\label{QuantKnots}\footnotesize Example with $5$ initial knots. Step 2 interpolates 3 quantiles of all observed data as interior knots. }
	% \end{figure}
	
	% Take $5$ initial knots i.e. $3$ quartiles and $2$ endpoints of censored transformed observations for instance. 
	% In Figure \ref{QuantKnots}, there are apparent deviations between uncensored and observed curves on the first three interior knots.
	% Therefore, we interpolate by three new knots $s_j^{*} = Q_{\tilde{y}}((j+1)/4)$, for $j=0, 1, 2$.  
	% Finally, we obtain $(s_0^*, s_0, s_1^*, s_1, s_2^*, s_2, s_3, s_4=\tau)$ as our interior knots.
	
	Our quantile-knot I-spline BNP is appealing since one only needs a few knots rather than an increasing number of interior equally spaced knots \citep{wang2011semiparametric, kim2017bayesian}, and hence has lower computational complexity. 
	This BNP is not sensitive to the choice of the number of initial knots; refer to \textit{Supplement} D.3. 
	By assigning independent and identically distributed hyperpriors for the coefficients $\alpha_j$, the proposed quantile-knot I-spline BNP is closely related to the L\'{e}vy process \citep{doksum1974tailfree}; refer to Proposition A.1 in \textit{Supplement}.  
	This proposition guarantees the local asymptotics in Theorem \ref{theo:BvM} below.

	\subsubsection{Dirichlet process mixture model}
	% \noindent{\textbf{Drichlet process mixture model for $f_\xi$}}\\
	For the prior for $F_\xi$ we consider the common Dirichlet process mixture (DPM) model \citep{lo1984class}. 
	Here we employ a truncated stick-breaking construction of the DPM, denoted as 
	\begin{align*}
		F_\xi(\cdot) = \int F_0(\cdot|\bm{u})dG(\bm{u}), ~f_\xi(\cdot) = \int f_0(\cdot|\bm{u})dG(\bm{u}), ~
		G = \sum_{l=1}^L p_l\delta_{\bm{u}_l}, ~\bm{u}_l \sim G_0,
	\end{align*}
	where $F_0$ and $f_0$ are called kernels from a distribution family parameterized by $\bm{u}$, $L$ is a truncation number of the Dirichlet process, $p_l$ are corresponding sticking-breaking weights, and $\bm{u}_l$ are i.i.d. atoms from the base measure $G_0$. 
	More justifications for the truncation level $L$ are deferred to \textit{\textit{Supplement}} B. 
	
	Note that $\xi$ is an arbitrary \textit{continuous positive} random variable. 
	In this article, we select the Weibull kernel for the DPM model,
	\begin{align}
		\label{DPM}
		F_\xi(\cdot) = \sum_{l=1}^L p_l F_w(\cdot|\psi_l, \nu_l), ~ f_{\xi}(\cdot) =(F_\xi)' =\sum_{l=1}^L p_l f_w(\cdot|\psi_l, \nu_l),
	\end{align}
	where $F_w(x|\psi_l, \nu_l) = 1- \exp\{-(x/\psi_l)^{\nu_l}\}$ and $f_w(x|\psi_l, \nu_l) = \nu_l\psi_l^{-\nu_l}x^{\nu_l -1}\exp\{-(x/\psi_l)^{\nu_l}\} $ are the CDF and the pdf of the Weibull distribution with parameters $\{(\psi_l, \nu_l)\}_{l=1}^L$. 
	{Expression \eqref{DPM} yields a Weibull mixture model that has $L$ allocations of mixture components $(\psi_l, \nu_l)$, each with DP weights $p_l$. }

	The above Weibull kernel has at least two advantages: i) it can capture the shape of both monotone and nonmonotone hazards \citep{kottas2006nonparametric}, and ii) it guarantees that the joint posterior under the unidentified working model \eqref{expmod1} is proper; refer to Theorem \ref{theorem:proper}. 
	
	\subsubsection{Exponential hyperpriors and hyperparameters}
	{Our nonparametric prior elicitation is completed by assigning hyperpriors to the parameters in the quantile-knot I-spline prior \eqref{ispline} and DPM model \eqref{DPM}. 
		Let $\bm{\alpha} = \{\alpha_j\}_{j=1}^K$, $\bm{p} = \{p_l\}_{l=1}^L$, $\bm{\psi} = \{\psi_j\}_{j=1}^K$, and $\bm{\nu} = \{\nu_j\}_{j=1}^K$. 
		The hyperprior for $\bm{p}$ is naturally the stick-breaking prior \citep{sethuraman1994constructive}. 
		For $(\bm{\alpha}, \bm{\psi}, \bm{\nu})$, we assign exponential hyperpriors 
		\begin{align}
			\label{hyperprior: exponential}
			\pi(\bm{\alpha}) = \prod_{j=1}^K \text{Exp}(\alpha_j; \eta), ~\pi(\bm{\psi}) =\prod_{l=1}^L  \text{Exp}(\psi_l; \zeta), ~\pi(\bm{\nu}) = \prod_{l=1}^L \text{Exp}(\nu_l; \rho). 
		\end{align}
		The rationale for employing the exponential hyperpriors is straightforward. 
		For $\bm{\alpha}$, a Gamma hyperprior is preferable to link the I-splines model \eqref{ispline} with a Gamma process; refer to Proposition A.1 in the \textit{Supplement}; 
		for $(\bm{\psi}, \bm{\nu})$ in the DPM with Weibull kernels, Gamma hyperpriors are becoming popular choices  \citep{shi2019low}.
		We use exponential hyperpriors to avoid mathematically complicated formulations, though our theoretical results hold for arbitrary Gamma hyperpriors. 
		With exponential hyperpriors, the nonparametric priors for $(H, F_\xi)$ are parameterized by the hyperparameters $(\eta, \zeta, \rho)$. 
	}

	% Without loss of generality, we restrict $\bm{p}$ to be a decreasing series on the $(L-1)$-dimensional probability simplex to avoid label switching issues of $(\bm{p}, \bm{\psi}, \bm{\nu})$ \citep{mena2015bayesian}.
	% With the ordered restriction, $\bm{p}$ are identified, while $(\bm{\alpha}, \bm{\psi}, \bm{\nu}, \bm{\beta})$ remains unidentified.
	
	% For parameters $(\bm{\alpha}, \bm{\psi}, \bm{\nu})$, we consider the following exponential hyperpriors 
	%     $$
	%     \pi(\bm{\alpha}) = \prod_{j=1}^K \text{Exp}(\alpha_j; \eta), ~\pi(\bm{\psi}) =\prod_{l=1}^L  \text{Exp}(\psi_l; \zeta), ~\pi(\bm{\nu}) = \prod_{l=1}^L \text{Exp}(\nu_l; \rho). 
	%     $$
	% Then, the set of hyperparameters is $\widetilde{\bm{\theta}} = (\eta, \zeta, \rho)$. 
	% The rationale for employing the exponential hyperpriors is straightforward. 
	% For $\bm{\alpha}$, a Gamma hyperprior is preferable to link the I-splines model \eqref{ispline} with a Gamma process; refer to Proposition a.1 in the \textit{Supplement}; 
	% for $(\bm{\psi}, \bm{\nu})$ in the DPM with Weibull kernels, Gamma hyperpriors are becoming popular choices  \citep{shi2019low}.
	%  We use exponential hyperpriors to avoid mathematically too complicated formulations on the mode variance, though our theoretical results hold for arbitrary Gamma priors. 
	% We leave the discussion on $\bbeta$ in \ref{sec:parametricprior} since $\bbeta$ does not affect the mode variance of $H(s)$. 

	\section{Adaptive scheme for prior adjustment}
	\label{sec: nonparametricposteriorinference}
	The (infinite-dimensional) parameters in working model \eqref{expmod1} are still unidentified. 
	Suppose equation \eqref{expmod1} holds for a special triplet solution $(H_0, \bbeta_0, F_{\xi_0})$. 
	Then equation \eqref{expmod1} also holds on the set 
	$
	\mathcal{C}\{(H, \bbeta, F_\xi)\} = \{(c_1H_0^{c_2}, c_2\bbeta_0, F_{c_1\xi_0^{c_2}})\}
	$
	for any pair of positive constants $(c_1, c_2) \in \mathbb{R}_{+}^2$.
	In this section, we introduce an adaptive scheme to address the poorly mixing of PPD value chains under the unidentified working model \eqref{expmod1}. 
	In Section \ref{subsec: pv decomp}, we focus on the asymptotic posterior variance first. 
	If the posterior variance is divergent, no mixing results can be guaranteed; 
	otherwise, it is possible for a general MCMC sampler to sufficiently explore the posterior uncertainty. 
	In Section \ref{subsec: sufficient informative}, we formulate the sufficient informativeness criterion based on the theoretical results in Section \ref{subsec: pv decomp}, and elucidate how to use the criterion to adaptively tune the hyperparameters to achieve MCMC mixing for trustworthy BPI. 
	From now on, denote the expectation and variance operator with respect to a parameter $\theta$ under law $\pi(\theta)$ by $\mathbb{E}_{\theta}$ and $\mathbb{V}_{\theta}$ respectively, where $\pi(\theta)$ denotes the prior distribution of the parameter $\theta$.  
	
	\subsection{Posterior variance under transformation models}
	\label{subsec: pv decomp}
	
	Specifically, we focus on $\mathbb{V}\{H(s)|\mathcal{D}\}$, the posterior variance of $H(s)$ for some specific $s \in (0, \tau)$. 
	Our motivation is the following conditional cumulative hazard function of the transformed response $\tilde{y}$ given covariates $\bm{z}$. 
	With the nonparametric prior elicitation \eqref{ispline} and \eqref{DPM}, for $s \in (0, \tau)$, we have 
	\begin{eqnarray}
		\label{G_xi}
		\begin{aligned}
			\Lambda_{\tilde{y}|\bm{z}}(s)  = \log\left\{\sum_{l=1}^L p_l \exp\left(-\left\{\frac{\sum_{j=1}^K \alpha_j B_j(s) \exp(-\bbeta^T \z)}{\psi_l}\right\}^{\nu_l}\right)\right\}.  \\
		\end{aligned}
	\end{eqnarray}
	In \eqref{G_xi}, the DPM components $(\bm{p}, \bm{\psi}, \bm{\nu})$ encounter the label-switching issue since the conditional cumulative hazard $\Lambda_{\tilde{y}|\bm{z}}$ is invariant {under any permutations of the indices of the allocation $(\psi_l, \nu_l)$}; refer to \citep[pp. 1156]{mena2015bayesian} for general {illustration}. 
	As a result, it is impossible to identify these parameters individually even if $F_{\xi}$ is specified. 
	Fortunately, for a fixed $s \in (0, \tau)$, $H(s)$ does NOT encounter the label-switching issue as any permutations of DPM components has no impact on $\bm{\alpha}$ or $H(s)$, since $H$ is fully determined by $\bm{\alpha}$. 
	This fact partially explains why we focus on the posterior variance of $H(s)$. 
	% Furthermore, the MCMC chains of PPD value at $s$ are mixed if the MCMC chains of $H(s)$ are mixed.
	% Note that the PPD is always unique. 
	% Hence, once the drawn MCMC samples of $H(s)$ converge to the target distribution, the corresponding samples of $F_\xi(s)$ and $\bm{\beta}$ will also converge to their target distributions since $(H, F_\xi, \bm{\beta})$ must appear in a triplet, and so do the samples of the PPD value at $s$. 
	
	\subsubsection{Preliminary: identified scenario}
	We start from a preliminary result in the case where $F_{\xi}$ is specified. 
	With DPM model \eqref{DPM}, specifying $F_\xi =  F_{\xi_0}$ is equivalent to specifying $(\bm{p}, \bm{\psi}, \bm{\nu})$ at the ground truth $(\bm{p}_0, \bm{\psi}_0, \bm{\nu}_0)$. 
	The following conditions are further assumed. \\
	(A1) All transformed response $\tilde{y_i}$ are distinct. \\
	(A2) There exists a constant $ 0< M_{\z} < \infty$ such that  $||\z||_1 < M_{\z}$ with probability 1.\\
	(A3) The prior $\pi(\bbeta)$ is continuous and $\pi(\bbeta) >0 $ on $\mathbb{R}^p$. \\
	(A4) The ``true" $F_{\xi_0}$ can be expressed in the form of \eqref{DPM}; in \eqref{DPM}, $p_l > \delta$ for some positive constant $\delta$, $\sum_{l=1}^L \nu_l < \infty$ for $l = 1, \ldots, L$. 
	
	Conditions (A1), (A2), and (A3) are general conditions in the literature for semiparametric Bernstein-von Mises (BvM) results \citep{kim2006bernstein, kim2017bayesian}. 
	Condition (A4) requires $F_{\xi0}$ to be from the Weibull-kernel DPM family. 
	In practice, (A4) can be relaxed so that the ``true" $F_{\xi_0}$ falls into Weibull-kernel DPM's 
	Kullback-Leibler neighborhood, which is quite general \citep[Theorem 13]{wu2008kullback}. 
	
	The following theorem describes the asymptotic marginal posterior distribution of $H(s_j)$, with ``ground truth" $F_{\xi_0}$ given. 
	The proof is deferred to \textit{Supplement} A. 
	\begin{theorem}[Asymptotic mixture of normals]
		\label{theo:BvM}
		Suppose the ``ground truth" $(\bm{p}_0, \bm{\psi}_0, \bm{\nu}_0)$ is known. 
		Let $H_0$ be the corresponding ``true" transformation. 
		Under conditions (A1) to (A4), with nonparametric priors \eqref{ispline} and \eqref{DPM}, and hyperprior \eqref{hyperprior: exponential}, for prespecified interior knots $s_j$ of the I-spline basis, for $j=1, \ldots, J$, as the data size $n \to \infty$, we have 
		$$
		\pi[\sqrt{n} \{H(s_j) - H_0(s_j)\}|\mathcal{D}, \bm{p}_0, \bm{\psi}_0, \bm{\nu}_0] \xrightarrow{~~\text{d}~~~} \sum_{l=1}^L p_{l0} N\left \{0, ~p_{l0}^{-1} \left(\frac{\psi_{l0}}{\nu_{l0}}\right)^2
		H_0(s_j)^{\frac{2}{\nu_{l0}}-2}U_l(s_j)\right\}, 
		$$
		where $U_l(s) = \int_0^s \{S_l^0(\mathcal{D}, \bbeta_0)\}^{-1} d\Lambda_
		{l0}(s)$, with $\Lambda_{l0}(s) = \{H_0(s)/\psi_l\}^{\nu_l}$  
		and $S_l^0(\mathcal{D}, \bbeta_0)$ is some positive constant depending on $\bbeta_0$, $\nu_{l0}$ and data $\mathcal{D}$, for $l=1, \ldots, L$. 
	\end{theorem}
	
	Theorem \ref{theo:BvM} relies on the fact that $H(s_j)$ are sampled from a L\'{e}vy process $\mathcal{H}$ (refer to Proposition A.1 in the \textit{Supplement}), which guarantees that $n^{1/2}(\mathcal{H} - H_0)|\mathcal{D}$ weakly converges to a mixture of Gaussian processes. 
	Consequently, the local posterior on a specific $s_j$ converges to a mixture of normals. 
	The mixture of normals in Theorem \ref{theo:BvM} comes from the mixture structure of $F_{\xi 0}$. 
	Theorem \ref{theo:BvM} also holds for censored data under the condition $\lim \limits_{n \to \infty} n_1/n  >0$, where $n_1$ denotes the number of uncensored observations. 
	In the special proportional hazard case where $L=1$ and $\psi_{10} = \nu_{10} = 1$, Theorem \ref{theo:BvM} reduces to the BvM theorem \citep[Theorem 3.3]{kim2006bernstein}. 
	\begin{remark}
		Theorem \ref{theo:BvM} can be extended to establish $\sqrt{n}$-consistency of $H(s)$ for all $s \in (0, \tau)$ by further assuming the following conditions: i) the number of knots $J\equiv J_n \to \infty$ as $n \to \infty$ such that $\max\limits_{j=1, \ldots, J} |s_j - s_{j+1}| \lesssim n^{-1/2}$; ii) $H_0$ is absolutely continuous on $[0, \tau]$. 
		Nevertheless,  empirically a few knots are sufficient for estimation of PPDs. 
		Meanwhile, to derive the sufficient informativeness criterion (refer to Criterion \ref{criterion: MCMC mixing} in the next subsection), we only need the $\sqrt{n}$-consistency of $H$ with respect to each knot $s_j$.     
	\end{remark}

	% Note that the distributional approximation to the posterior of $H(s_j)$ is strictly unimodal, differing from those multi-modal normal mixture (e.g. \cite{castillo2015bayesian}; \cite{li2024semiparametric}). 
	% Therefore, given $F_{\xi_0}$ specified, both the posterior mode and the posterior mean of $H(s_j)$ converge to the corresponding ``ground truth" $H_0(s_j)$.

	\subsubsection{Unidentified scenario}
	Under the unidentified model \eqref{expmod1}, where $F_\xi$ is drawn from the DPM model \eqref{DPM},  the ``ground truth" $F_{\xi_0}$ is no longer a fixed distribution, but, a sample of random functions.
	In this case, the posterior becomes multi-modal and the posterior variance will not vanish anymore.
	The following theorem formulates the asymptotic posterior variance of $H(s_j)$ in the unidentified scenario. 
	
	\begin{theorem}[Asymptotic posterior variance]
		\label{theo: posterior variance}
		Assume conditions (A1) to (A4). 
		Let $w_{j'} = B_{j+j'}(s_j) -  B_{j+j'}(s_{j-1})$ for $j' = 1, \ldots, r$ in the I-splines model \eqref{ispline}. 
		As $n \to \infty$, under model \eqref{expmod1}, for parameter $\nu_1$ in DPM model \eqref{DPM},  with hyperprior \eqref{hyperprior: exponential}, there exist series of positive constants $\{c_{lj}\}_{l=1}^L$ and $\{r_{l}\}_{l=1}^L$ with $r_{1} = 1$,  such that for $j=1, \ldots, J$, 
		$g_{s_j}(\nu_1, \eta, \zeta)  \equiv \zeta \sum_{l=1}^L c_{lj}^{\frac{1}{r_l\nu_1}} + \eta (j+ \sum_{j'=1}^r w_{j'})^{-1},$ and 
		\begin{eqnarray}
			\label{lowerbound: posteriorvariance}
			{
				\begin{aligned}
					\mathbb{V}\{H(s_j)| \mathcal{D}\} &= \left[\mathbb{V}_{\nu_1}\left\{
					g_{s_j}^{-1}(\nu_1, \eta, \zeta) 
					\right\} + \mathbb{E}_{\nu_1}\left\{g_{s_j}^{-2}(\nu_1, \eta, \zeta) \right\}\right]  + O(n^{-1}), \\
					&\equiv \mathcal{V}_{s_j}(\eta, \zeta, \rho) +  O(n^{-1}).  
				\end{aligned}
			}
		\end{eqnarray}
	\end{theorem}
	The first term on the RHS of \eqref{lowerbound: posteriorvariance} can be fully expressed in terms of the hyperparameters $(\eta, \zeta, \rho)$ since $\nu_1$ is integrated out. 
	The second term is a remainder that vanishes at a rate of $n^{-1}$, which is a direct consequence of Theorem \ref{theo:BvM}. 
	Indeed, Theorem \ref{theo: posterior variance} is an explicit form of the following law of total variance under working model \eqref{expmod1}
	\begin{eqnarray}
		\label{total variance}
		\begin{aligned}
			\mathbb{V}\{H(s_j)|\mathcal{D}\} = \underbrace{\mathbb{V}_{F_{\xi_0}}\{\mathbb{E}(H(s_j)|\mathcal{D}, F_{\xi_0})\}}_{\mathcal{V}_{s_j}(\eta, \zeta, \rho) } + \underbrace{\mathbb{E}_{F_{\xi_0}}\{\mathbb{V}(H(s_j)|\mathcal{D}, F_{\xi_0})\}}_{O(n^{-1})}. 
		\end{aligned}
	\end{eqnarray}
	In \eqref{total variance}, we call $\mathbb{V}_{F_{\xi_0}}\{\mathbb{E}(H(s_j)|\mathcal{D}, F_{\xi_0})\}$ the \textbf{mode variance} since it is the variance of the posterior modes $\mathbb{E}(H(s_j)|\mathcal{D}, F_{\xi_0})$, and call $\mathbb{E}_{F_{\xi_0}}\{\mathbb{V}(H(s_j)|\mathcal{D}, F_{\xi_0})\}$ the \textbf{local variance} since it is the average of the variance around each local mode of the posterior.
	Obviously, if the model is identified, the mode variance disappears since the posterior mode is unique and fixed. 
	In this unidentified scenario, the take-home messages of Theorem \ref{theo: posterior variance} are: i) under unidentified transformation models, asymptotically, the posterior variance will be dominated by the mode variance, which is expressed by the hyperparameters $(\eta, \zeta, \rho)$;
	ii) the large mode variance accounts for the poor mixing of MCMC chains, since the multiple chains should be sufficiently dispersed, if the single-chain variation is not enough to recover the mode variance.

	% Theorem \ref{theo: posterior variance} is a consequence of the following total variance formula
	% \begin{align*}
	% \mathbb{V}\{H(s_j)|\mathcal{D}\} =  \mathbb{V}_{F_{\xi_0}}\{\mathbb{E}(H(s_j)|\mathcal{D}, F_{\xi_0})\} + \mathbb{E}_{F_{\xi_0}}\{\mathbb{V}(H(s_j)|\mathcal{D}, F_{\xi_0})\}, ~j=1, \ldots, J.  
	% \end{align*}
	% An important message delivered by Theorem \ref{theo:BvM} is that once ``ground truth" $F_{\xi_0}$ is specified, $\mathbb{V}\{H(s_j)|\mathcal{D}, F_{\xi_0}\}$, 
	% the posterior variance of $H(s_j)$ has a convergence rate of $n^{-1}$. 
	% Consequently,  with an increasing data size $n$, the term $\mathbb{V}_{F_{\xi_0}}\{\mathbb{E}(H(s)|\mathcal{D}, F_{\xi_0})\}$ dominates the posterior variance of $H(s_j)$, while the contribution of $\mathbb{E}_{F_{\xi_0}}\{\mathbb{V}(H(s)|\mathcal{D}, F_{\xi_0})\}$ can be omitted as $O(n^{-1})$. 
	% Then Theorem \ref{theo: posterior variance} 
	
	% The following theorem tells the asymptotic behavior of the conditional posterior variance of $H(s_j)$ in \eqref{ispline} under model \eqref{expmod1}, for $j=1, \ldots, J$. 
	
	To calculate the mode variance, we still needs to know the two positive constant series $\{c_{lj}\}_{l=1}^L$ and $\{r_l\}_{l=1}^L$, which are, however, unobservable. 
	Fortunately, they have specific interpretations. 
	Based on \eqref{G_xi}, given ``true" $\bm{\nu}_0$, at each knot $s_j$, the ratio between ``true" parameters $\bm{\psi}_0$ and $H_0$
	$$
	c_{lj} \equiv \left\{\frac{\psi_{l0}}{H_0(s_j)} \right\}^{\nu_{l0}} = \left\{\frac{\psi_{l0}}{\sum_{j=1}^K \alpha_{0j} B_j(s_j)} \right\}^{\nu_{l0}}, ~ j =1, \ldots, K
	$$
	is uniquely determined, for $l = 1, \ldots, L$.
	Furthermore, we can show that all ``true" $\bm{\nu}_0$ fall in the space $\{(\nu_{10}, \ldots, \nu_{L0}): \nu_{l0}/\nu_{10} = r_l, l=2, \ldots, L\}$, where $r_l$ are some fixed positive constants; refer to Proposition A.4 in the \textit{Supplement}. 
	Based on these interpretations,  we present an approximation to $\{c_{lj}\}_{l=1}^L$ in our adaptive scheme in the next subsection. 
	
	\subsection{Sufficient informativeness for MCMC mixing}
	\label{subsec: sufficient informative}
	The above decomposition of the posterior variance  motivates an adaptive scheme for prior adjustment to achieve the mixing of PPD value chains under unidentified transformation models. 
	For a specific knot $s_j$, let $\mathbb{V}_{\text{WI}}(H(s_j)|\mathcal{D})$ be the within-chain MCMC variance of $H(s_j)$ of $M$ MCMC chains of length $N_d$:  
	$$
	\mathbb{V}_{\text{WI}}(H(s_j)|\mathcal{D}) = M^{-1}\sum_{m=1}^M\sum_{l=1}^{N_d} \{H^{(ml)}(s_j) - \bar{H}^{(m)}(s_j)\}^2, ~ \bar{H}^{(m)}(s_j) = N_d^{-1} \sum_{l=1}^{N_d}H^{(ml)}(s_j). 
	$$
	The following criterion assesses the MCMC mixing via the mode variance $\mathcal{V}_{s_j}$. 
	
	\begin{criterion}[Sufficient informativeness criterion]
		\label{criterion: MCMC mixing}
		Under working model \eqref{expmod1}, the chains of PPD value of the transformed response $\widetilde{y}$ at the point $s_j$ are well mixed if $
		\mathbb{V}_{\text{WI}}\{H(s_j)|\mathcal{D}\} \ge \mathcal{V}_{s_j}(\eta, \zeta, \rho),  
		$ 
		for $j=1, \ldots, J$. 
		Then the BNPs for $H$ and $F_\xi$ are sufficiently informative. 
	\end{criterion}
	
	Criterion \ref{criterion: MCMC mixing} identifies the mixing of PPD value chains if the within-chain MCMC variance $\mathbb{V}_{\text{WI}}\{H(s_j)|\mathcal{D}\}$ exceeds the mode variance.
	Criterion \ref{criterion: MCMC mixing} stands on two facts: i) in well mixed MCMC chains, $\mathbb{V}_{\text{WI}}\{H(s_j)|\mathcal{D}\}$ should approach (from below) the posterior variance $\mathbb{V}\{H(s_j)|\mathcal{D}\}$ based on the ergodic theorem \citep{birkhoff1942ergodic}, and ii) the mode variance $\mathcal{V}_{s_j}(\eta, \zeta, \rho)$ is smaller than but dominates $\mathbb{V}\{H(s_j)|\mathcal{D}\}$ based on Theorem \ref{theo: posterior variance}. 
	Consequently, we compare the within-chain variance with the mode variance to examine the convergence of MCMC chains to the target distribution. % from the aspect of variance. 
	
	% \begin{remark}
	% Note that $
	% \mathbb{V}_{\text{WI}}\{H(s_j)|\mathcal{D}\} \ge \mathcal{V}_{s_j}(\eta, \zeta, \rho)$ indeed indicates the mixing of chains of $H(s_j)$. 
	% In this sense, Criterion \ref{criterion: MCMC mixing} actually examines the mixing of PPD value chains based the mixing of transformation value chains. 
	
	\begin{remark}
		Note that the mode variance $\mathcal{V}_{s_j}$ is a function of the hyperparameters $(\eta, \zeta, \rho)$ for BNP elicitation \eqref{ispline} and \eqref{DPM}. 
		The hyperparameters $(\eta, \zeta, \rho)$ fully determine the uncertainties of the hyperpriors. 
		Consequently, we call $\mathcal{V}_{s_j}^{-1}$ the ``\textbf{prior information level}": the smaller the mode variance, the more informative the BNPs are. 
		Nevertheless, too informative BNPs can hinder the prior-to-posterior updating. 
		Thus, our criterion uses the inverse of the within-chain MCMC variance $\mathbb{V}_{\text{WI}}^{-1}\{H(s_j)|\mathcal{D}\}$ as a lower bound to determine the ``sufficient" prior information level that achieves MCMC mixing and avoids slow posterior sampling. 
	\end{remark}
	
	Intuitively, by tuning the hyperparameters $(\eta, \zeta, \rho)$, we can increase the prior information level (or equivalently, decrease the mode variance $\mathcal{V}_{s_j}$) to satisfy Criterion \ref{criterion: MCMC mixing}. 
	The remaining question is to approximate the unobservable 
	constant series $\{c_{lj}\}_{l=1}^L$ and $\{r_l\}_{l=1}^L$. 
	
	By observing the  form  of $g_{s_j}(\nu_1, \eta, \zeta)$ on the RHS of \eqref{lowerbound: posteriorvariance}, 
	if there exists a knot $s_j$ such that $c_{lj} \approx 1$, we can cancel $\nu_1$ and obtain a simple closed-form approximation for $\mathcal{V}_{s_j}(\eta, \zeta, \rho)$. 
	Particularly, we only need a lower bound for $\mathcal{V}_{s_j}(\eta, \zeta, \rho)$ since Criterion \ref{criterion: MCMC mixing} requires that the within-chain variance exceeds the information level. 
	Therefore, to apply Criterion \ref{criterion: MCMC mixing}, it suffices to distinguish whether $c_{lj} < 1$ or not. 
	Based on \eqref{G_xi}, we have 
	\begin{align*}
		F_{\tilde{y}|\bm{z} = \bm{0}_p}(s_j) =1- \sum_{l=1}^L p_l \exp\left[-\left\{\frac{H(s_j)}{\psi_l}\right\}^{\nu_l}\right]\equiv 1- \sum_{l=1}^L p_l \exp\left(-c_{lj}^{-\nu_l}\right). 
	\end{align*}
	Suppose there exists $s_{j_0}$ such such that $F_{y^*|\bm{z} = \bm{0}_p}(s_j) \ge 1 - e^{-1}$. 
	We have $\sum_{l=1}^L p_l c_{lj_0}^{-\nu_l} \ge 1$. 
	That is, for $l=1, \ldots, L$, there exists at least one $c_{l_{j_0}} < 1$. 
	Then, we obtain the analytic expression of a lower approximation to $\mathcal{V}_{s_{j_0}}$ by replacing $c_{lj_0}$ to 1
	\begin{align}
		\label{lowerbound: approximation}
		{
			\mathcal{V}_{s_{j_0}} \ge \left(L\zeta   + \frac{\eta}{ j+ \sum_{j'=1}^r w_{j'}} \right)^{-1} + \left(L\zeta   + \frac{\eta}{j+ \sum_{j'=1}^r w_{j'}} \right)^{-2} \equiv \tilde{\mathcal{V}}_{s_{j_0}}(\eta, \zeta). 
		}
	\end{align}
	To use this approximation, we have to first specify the knot $s_j$. 
	In practice, we consider the knot $s_{j_0}$ in the I-splines model \eqref{ispline} such that $s_{j_0}$ is the smallest among the knots that are greater than the $1-e^{-1}$ quantile of the transformed responses. 
	We summarize this as the following criterion, an applicable version of Criterion \ref{criterion: MCMC mixing}. 
	{
		\begin{criterion}[Applicable sufficient informativeness criterion]
			\label{cri: weaklyinformative NTM}
			Under the working model \eqref{expmod1}, suppose we draw $M>1$ parallel MCMC chains. 
			In the I-splines model \eqref{ispline}, let 
			$
			s_{j_0} = \hat{Q}_{\tilde{y}} (q_0/N_I)$ be the specific knot used for our criterion , where $q_0 = \min\limits_{q = 0, \ldots, N_I-1} \left\{1- q/N_I < e^{-1} \right\}
			$. 
			%     $$
			% s_{j_0} = \max\left\{\hat{Q}_{\tilde{y}_c} (q_0/N_I),   \hat{Q}_{\tilde{y}} (q_0/N_I) \right\}, ~~
			% q_0 = \min\limits_{q = 0, \ldots, N_I-1} \left\{1- \frac{q}{N_I } < e^{-1} \right\}
			% $$
			
			Then the BNPs are sufficiently informative if
			\begin{align}
				\label{threshold}
				\mathbb{V}_{WI}\left\{H(s_{j_0})|\mathcal{D}\right\} \ge \tilde{\mathcal{V}}_{s_{j_0}}(\eta, \zeta).  
			\end{align}
		\end{criterion}
	}
	% Criterion \ref{cri: weaklyinformative NTM} compares the within-chain MCMC variance $\mathbb{V}_{WI}\left\{H(s_{j_0})|\mathcal{D}\right\}$ with $\tilde{\mathcal{V}}_{s_{j_0}}(\eta, \zeta)$, an approximate lower bound for the mode variance. 
	Criterion \ref{cri: weaklyinformative NTM} requires pre-configuration of the hyperparameter $\rho$ before MCMC sampling. 
	We recommend specifying $\rho = 1$ such that $E(\nu_l) = 1$, which is the same as the expectation of the LIO Weibull kernel hyperprior \citep[pp. 690]{shi2019low}.

	\begin{algorithm*}[!htb]\footnotesize
		\caption{Adaptive tuning of hyperparameters $(\eta, \zeta)$ to reach MCMC mixing. }\label{alg:Tuning}
		\begin{algorithmic}[1]
			\State Specify $s_{j_0}$ in Criterion \ref{cri: weaklyinformative NTM}  based on data $\mathcal{D}$;   set initial values for $(\eta, \zeta) \leftarrow  (\eta_0, \zeta_{0})$. 
			\State  Draw $M > 1$ MCMC chains with $N_d$ draws and examine the mixing by Criterion \ref{cri: weaklyinformative NTM}. 
			\If{inequality \eqref{threshold} does not hold}
			\State Select candidates $(\eta_{\text{new}}, \zeta_{\text{new}})$ and set $\eta \leftarrow \eta_{\text{new}}$, $\zeta \leftarrow \zeta_{\text{new}}$;  repeat 2 until Criterion \ref{cri: weaklyinformative NTM} is met. 
			\EndIf
		\end{algorithmic}
	\end{algorithm*}
	
	The adaptive tuning procedure to select hyperparameters $(\eta, \zeta)$ is summarized in Algorithm \ref{alg:Tuning}.
	The choice of the initial values $(\eta_0, \zeta_{0})$ is arbitrary. 
	We recommend starting from very small values $(\eta_0, \zeta_0)$ to elicit ``noninformative" BNPs. 
	The selection of tuning candidates $\eta_{\text{new}}$ and $\zeta_{\text{new}}$ and the tuning procedure is illustrated and visualized in Section \ref{subsec: simu_illustration}. 
	The number of MCMC draws $N_d$ in each chain is related to the effective sample size and the MCMC sampler used. 
	In \texttt{Stan}, we recommend using a chain length of $N_d = 500$ (after a warm-up phase of the same length).
	Sensitivity analysis finds that longer MCMC chains will not change the tuning result. 
	Detailed discussions on the chain length needed are deferred to \textit{Supplement} D.1.

	\section{Joint posterior and parametric estimation}
	\label{sec:parametricprior}
	In this section, we attempt to answer the following two questions related to Bayesian inference under the working model \eqref{expmod1}:
	the first and most basic question is ``is the joint posterior proper without identifiability?" and 
	the next question is ``can we estimate $\bm{\beta}$ with correct uncertainty quantification?"

	\subsection{The joint posterior is proper}
	\label{subsec: proper posterior}
	Let $\bm{\theta} = (\bm{\alpha}, \bm{\psi}, \bm{\nu},  \bm{p}, \bm{\beta} )$ be the collection of all parameters under working model \eqref{expmod1}. 
	For the parametric component $\bbeta$, we consider the objective improper uniform  prior $\pi(\bm{\beta}) \propto 1$. 
	The following general conditions are assumed.  \\
	(B1) $\pi(\bm{p})$, $\pi(\bm{\psi})$, and $\pi(\bm{\nu
	})$ in model \eqref{DPM} and $\pi(\bm{\alpha})$ in model \eqref{ispline} are proper; \\
	(B2) $0 < K, L < \infty$ in models \eqref{DPM} and \eqref{ispline}; \\
	(B3) The kernel $f_w$ in model  \eqref{DPM} satisfies $xf_w(x) < \infty$ for all $x>0$; \\
	(B4) The $n \times p$ covariate matrix $\bm{Z}$ is of full rank $p$. \\
	Conditions (B1) and (B2) are naturally satisfied, and (B3) is satisfied by the Weibull kernel. 
	(B4) is similar to the condition $(ii)$ in \cite{de2014bayesian}, which is practical and easily validated. 
	The following theorem tells us that, even with an improper prior for $\bbeta$, the joint posterior of $\bm{\theta}$ is still proper. 
	The proof is deferred to \textit{Supplement} A.5. 
	\begin{theorem}
		\label{theorem:proper}
		Assume conditions (A1) to (A3) and (B1) to (B4). 
		With the improper uniform prior for $\bbeta$, under model \eqref{expmod1}, the posterior of $\bm{\theta}$ is proper. 
	\end{theorem} 
	
	Theorem \ref{theorem:proper} contradicts the results for unidentified parametric linear models, where proper priors lead to improper posteriors \citep{GelfandSahu1999JASA}. 
	This observation may imply that the infinite-dimensional parameters play a dominant role if a nonparametric model also has parametric components. 
	Theorem \ref{theorem:proper} can be further extended to right-censored data by relaxing (B4) to $\bm{Z}^*$, the $n_1 \times p$ covariate matrix of uncensored observations is of full rank $p$. 
	
	% Since the posterior of $\bm{\theta}$ is proper and the domain is continuous, we adopt the the No-U-Turn sampler \citep{hoffman2014no}
	% as the MCMC sampler to sample posterior $\pi(\bm{\theta}|\mathcal{D})$.
	% We implement NUTS in \texttt{Stan} \citep{carpenter2017stan} and develop our \texttt{R} package \texttt{BuLTM} based on \texttt{Stan}.  
	% We name our method \texttt{BuLTM} hereafter. 

	\subsection{Parametric estimation with posterior projection}
	\label{subsec: identified parametric estimation}
	Under unidentified model \eqref{expmod1}, the marginal posterior intervals of $\bbeta$ are generally too long to correctly quantify the uncertainty \citep{gelman2013bayesian}. 
	Therefore, we are driven to obtain the posterior of $\bbeta^*$, the identified counterpart of $\bbeta$ with certain normalization. 
	Specifically, we consider unit-norm normalization such that $||\bbeta^*||_2 = 1$, where $||\cdot||_2$ denotes the Euclidian norm on $\mathbb{R}^p$. 
	This differs from the element-one constraint, which needs extra effort to choose the covariate with coefficient fixed at 1 (\cite{song2007semiparametric}; \cite{lin2017robust}; among others). 
	
	Rather than sampling $\bbeta^*$ from the constrained space directly, we adopt the posterior projection \citep{sen2022constrained} to project the marginal posterior of unconstrained $\bbeta$ to the constrained parameter space of $\bbeta^*$, the unit hyper-sphere $\text{St}(1, p)$ in $\mathbb{R}^p$.
	The metric projection operator $m_\mathcal{A}: \mathbb{R}^p \to \mathcal{A}$ of a set $\mathcal{A}$ is 
	\begin{align*}
		m_{\mathcal{A}}(\bm{x}) = \{\bm{x}^*\in \mathcal{A}: ||\bm{x}-\bm{x}^*||_2=\inf \limits_{\bm{v}\in \mathcal{A}} ||\bm{x} - \bm{v}||_2\}.
	\end{align*}
	By definition, the metric projection of a vector $\bm{\beta} \in \mathbb{R}^p$ into $\text{St}(1, p)$ is $m_{\text{St}(1, p)}(\bm{\beta}) = \bm{\beta}/||\bm{\beta}||_2$.
	Note that projecting the posterior of the unconstrained $\bbeta$ to $\bbeta^*$ does not cause any extra computational burden. 
	Meanwhile, it is anticipated that the posterior of the projected $\bbeta^*$ is $\sqrt{n}$-consistent based on Theorem \ref{theo:BvM}, since the posterior contraction rate of the projected posterior is at least that of the original posterior \citep[Theorem 2]{sen2022constrained}.
	Numerical studies valid the claim that the projection leads to accurate estimation of $\bbeta$ with uncertainty correctly quantified by the induced posterior interval; refer to \textit{Supplement} C.5. 
	
	\section{Simulations}
	\label{sec:sim}
	Extensive simulations are conducted to evaluate two aspects of the proposed method: 
	i) {how the proposed adaptive scheme guarantees the mixing of PPD value chains}; and ii) 
	{how the proposed \texttt{BuLTM} package performs in predictive inference}. 
	In Section \ref{subsec: simu_illustration}, we present examples that illustrate the hyperparameter tuning procedure for mixing PPD value chains;
	in Section \ref{subsec: sim_predictive}, we compare \texttt{BuLTM} with other competitors.

	\noindent{\textbf{Simulation setting}}. 
	Our data setting covers two domains of response: (a) a real-valued response and (b) a positive response.
	In both settings, the simulated data are generated from model \eqref{basicLTM}. 
	In setting (a), the transformation $h$ is set to be the inverse (signed) Box-Cox function with $\lambda = 0.5$, the same as the \texttt{box-cox} setting in \cite{kowal2024monte}. 
	Two types of model error distributions are considered:
	(a.1) a standard normal distribution, a benchmark setting; 
	(a.2) a normal mixture distribution, a model-misspecification setting for semiparametric methods. 
	In setting (b), we allow the observations to be right-censored in a noninformative censoring scheme. 
	Two types of model error distributions are considered:
	(b.1) an extreme-value distribution, the popular proportional hazard setting \citep{cox1972regression}; 
	(b.2) a normal mixture distribution, a model-misspecification setting for semiparametric
	methods.
	
	In each simulation, we generate $n=200$ samples as the training set and $n_{\text{test}} = 20$ independent samples as the test set, and independently replicate the simulation runs $100$ times. 
	An additional simulation setting (c) generated from nonlinear transformation models is deferred to \textit{Supplement} C.

	\subsection{Visualization of the adaptive scheme for prior adjustment}
	\label{subsec: simu_illustration}
	We use examples from Setting (a.2) to illustrate the adaptive prior adjustment Algorithm \ref{alg:Tuning} achieving well mixed PPD value chains. 
	We examine two aspects of the mixing of PPD value chains: i) visualizing the trace plots of MCMC chains; ii) checking whether the rank normalized $\hat{R}$ statistic \citep{vehtari2021rank} exceeds $1.01$.
	We use the chains of the sum of the log posterior density of the observed data $\mathcal{D}$ given by MCMC samples, denoted by \texttt{lp\_\_} in \texttt{Stan}, as an alternative to the PPD value chains for simplicity. 
	In each example, we use the $\tau$-Sigmoid transformation with $\tau = 5$ as the data transformation mentioned in Section \ref{subsec: ModTrans}, and set the chain length for tuning as the default $N_d = 500$. 
	Examples in other settings and sensitivity analysis of the chain length are deferred to \textit{Supplement} C and D respectively. 
	In all examples, we set the hyperparameter $\rho = 1$ as stated in Section \ref{subsec: sufficient informative}. 
	
	\begin{example}
		\label{example: 1}
		Set initial values $(\eta_0, \zeta_0) = (0.01, 0.01)$, yielding very vague priors for $H$ and $F_\xi$.
		After drawing MCMC samples, we compare the within-chain variance of $H(s_{j_0})$ with the inverse of the prior information level $\tilde{\mathcal{V}}_{s_{j_0}}$. 
		As shown in Figure \ref{zeta_plot_example1}, the within-chain MCMC variance is much less than $\tilde{\mathcal{V}}_{s_{j_0}}(\eta_0, \zeta_0)$. 
		Thus, we assert that the BNPs are not sufficiently informative to achieve mixing of PPD value chains. 
		As evidence, Figure \ref{lp_example1} shows that the MCMC traces of  ``lp\_\_" are poorly mixed, with $\hat{R} = 1.25$. 
		Accordingly, the effective sample size (ESS) of \texttt{lp\_\_} is only 7, which is definitely insufficient. 
	\end{example}
	
	\begin{figure}[!htb]
		\centering
		\subfigure[]{
			\begin{minipage}[t]{0.45\linewidth}
				\centering
				\includegraphics[width=2.0in]{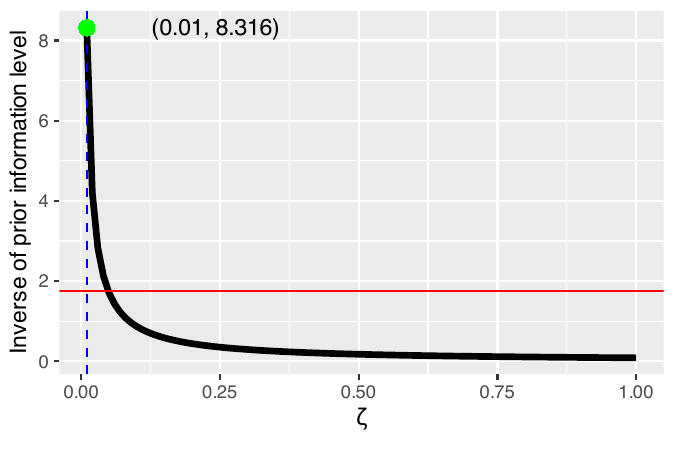}
				\label{zeta_plot_example1}
			\end{minipage}
		}
		\vspace{-.5cm}
		\subfigure[]{
			\begin{minipage}[t]{0.45\linewidth}
				\centering
				\includegraphics[width=2.0in]{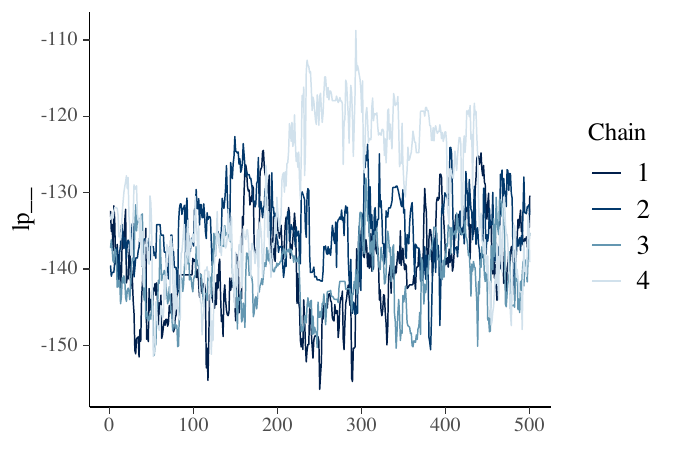}
				\label{lp_example1}
			\end{minipage}
		}
		
		\caption{\footnotesize (a) The curve of $\tilde{\mathcal{V}}_{s_{j_0}}(\eta, \zeta)$ with $\eta = 0.01$ fixed; horizontal line: the within-chain MCMC variance sampled with hyperparameters $(\eta, \zeta) = (0.01, 0.01)$. (b) Trace plot of chains of \textit{lp\_\_}. }
		\label{Fig: example 1}
	\end{figure}
	
	\begin{figure}[!htb]
		\centering
		\subfigure[]{
			\begin{minipage}[t]{0.45\linewidth}
				\centering
				\includegraphics[width=2.0in]{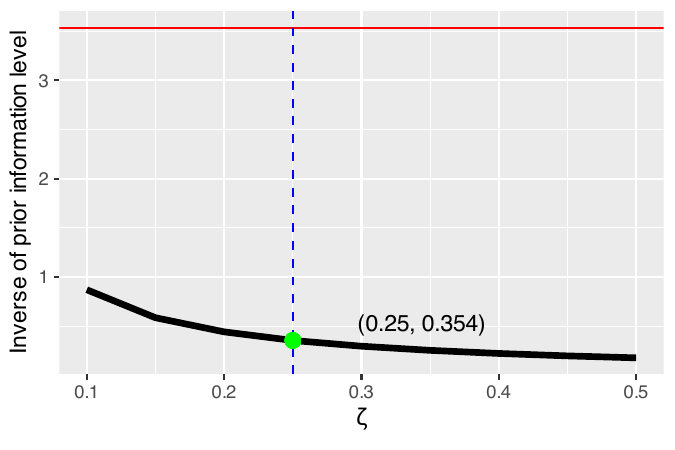}
				\label{zeta_plot_example2}
			\end{minipage}
		}
		\vspace{-.5cm}
		\subfigure[]{
			\begin{minipage}[t]{0.45\linewidth}
				\centering
				\includegraphics[width=2.0in]{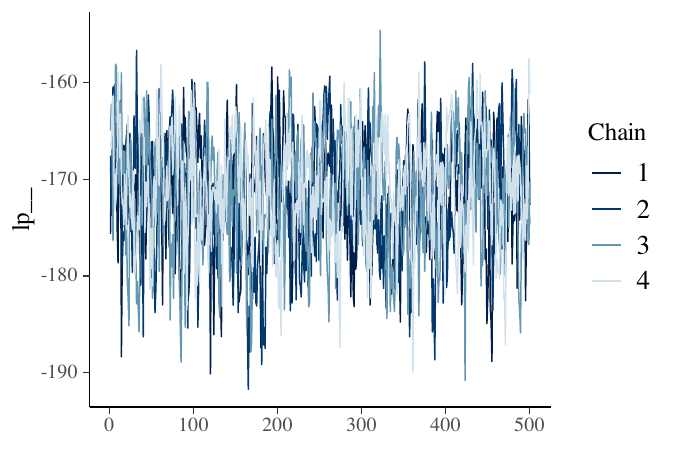}
				\label{lp_example2}
			\end{minipage}
		}
		
		\caption{\footnotesize (a) The curve of $\tilde{\mathcal{V}}_{s_{j_0}}(\eta, \zeta)$ with $\eta = 0.01$ fixed; horizontal line: the within-chain MCMC variance sampled with hyperparameters $(\eta, \zeta) = (0.01, 0.25)$. (b) Trace plot of chains of \textit{lp\_\_}. }
		\label{fig: example 2}
	\end{figure}

	\begin{example}
		\label{example: 2}
		Figure \ref{zeta_plot_example1} illustrates hyperparameter tuning on $(\eta, \zeta)$. 
		With $\eta = 0.01$ fixed, candidates for updating $\zeta$ should enable the curve of $\tilde{\mathcal{V}}_{s_{j_0}}$ against $\zeta$  to fall below the within-chain MCMC variance (the horizontal line). 
		Meanwhile, the curve falls sharply on the interval $(0,. 0.25]$, and decreases gently on the interval $[0,25, 1]$. 
		Consequently, we set $(\eta, \zeta) = (0,01, 0.25)$ as the updated tuning hyperparameters. 
		Figure \ref{zeta_plot_example2} shows that the within-chain MCMC variance exceeds $\tilde{\mathcal{V}}_{s_{j_0}}$, indicating that the BNPs are sufficiently informative. 
		As a result, the MCMC chains of \textit{lp\_\_} mix well as shown by Figure \ref{lp_example2} with $\hat{R} = 1.006$, demonstrating the efficacy of the tuning procedure. 
		Meanwhile, the obtained ESS of 520 is sufficient to represent the log posterior densities. 
	\end{example}
	According to \cite{margossian2023many}, 
	reliability diagnostics ($\hat{R}$ and ESS) demonstrate that the estimated PPD in Example \ref{example: 2} is reliable. 
	This example also illustrates that the hyperparameter configuration $(\eta, \zeta, \rho) = (0.01, 0.25, 1)$ achieves mixing of PPD value chains under setting (a.2). 
	We further use this hyperparameter setting as the initial values throughout all numerical studies. 
	Interestingly, this hyperparameter setting achieves mixing of PPD value chains in all our numerical studies.

	We do NOT recommend increasing prior information by increasing $\eta$ only since in \eqref{lowerbound: approximation}, $|\partial \tilde{\mathcal{V}}_{s_{j_0}}/\partial \zeta|$ is much larger than $|\partial \tilde{\mathcal{V}}_{s_{j_0}}/\partial \eta|$. % by taking partial derivatives of $\tilde{\mathcal{V}}_{s_{j_0}}(\eta, \zeta)$. 

	\begin{figure}
		\centering
		\subfigure[]{
			\begin{minipage}[t]{0.45\linewidth}
				\centering
				\includegraphics[width=2.0in]{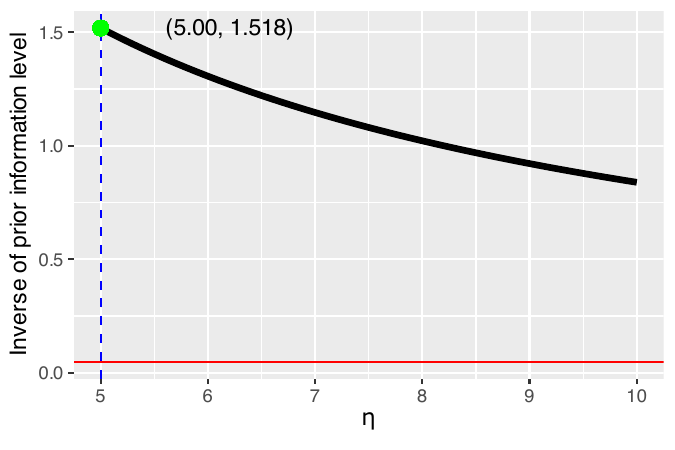}
				\label{eta_plot_example3}
			\end{minipage}
		}
		\vspace{-.5cm}
		\subfigure[]{
			\begin{minipage}[t]{0.45\linewidth}
				\centering
				\includegraphics[width=2.0in]{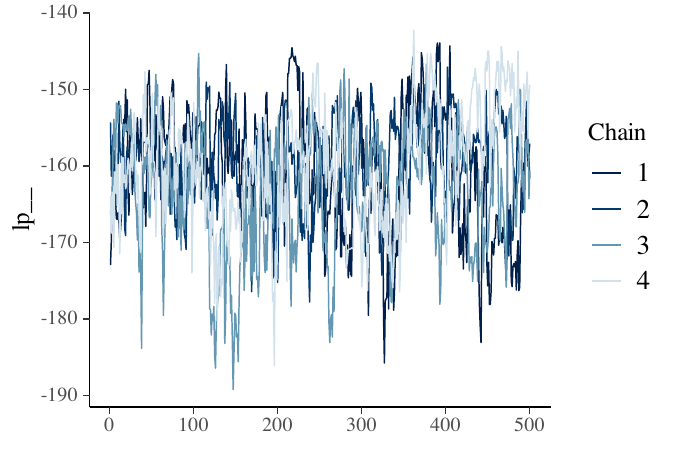}
				\label{lp_example3}
			\end{minipage}
		}
		
		\caption{\footnotesize (a) The curve of $\tilde{\mathcal{V}}_{s_{j_0}}(\eta, \zeta)$ with $\zeta = 0.01$ fixed; horizontal line: the within-chain MCMC variance sampled with hyperparameters $(\eta, \zeta) = (5, 0.01)$. (b) Trace plot of chains of \textit{lp\_\_}. }
		\label{fig: example 3}
	\end{figure}

	\begin{example}
		As a counter example, based on the initial values in Example \ref{example: 1}, we fix $\zeta = 0.01$ and set $\eta = 5$, yielding a vague prior for $f_\xi$ and a highly informative prior for $H$.
		Unfortunately, this hyperparameter setting is insufficiently informative since the within-chain MCMC variance falls below $\tilde{\mathcal{V}}_{s_{j_0}}$ as shown in Figure \ref{eta_plot_example3}. 
		Thus, the PPD value chains are poorly mixed as shown by Figure \ref{lp_example3}, with $\hat{R} = 1.056$. 
		
	\end{example}
	Note that it is meaningless to further increase $\eta$ since too informative a prior for $H$ leads to extremely slow sampling. 
	This counter example illustrates another aspect of the utility of the proposed prior adjustment scheme:  
	it strikes a balance between the noninformative priors that yield poor mixing and the too informative priors that hinder sampling.

	% \begin{table}[!htb]
	% \setcellgapes{\tabcolsep}                   % added (from makecell)
	%     \makegapedcells  
	%   \centering
	%   \begin{tabular}{cccc}
	%     \raisebox{-0.5\height}{Text 1} & \includegraphics[width=2in,valign=m]{} & \includegraphics[width=2in,valign=m]{} & \includegraphics[width=2in, valign=m]{figures/Transformation/lp_example1.pdf} 
	%     \\
	%     & \multicolumn{3}{c}{{\scriptsize (a)}}\\
	%     \hline
	%     \raisebox{-0.5\height}{Text 2} & \includegraphics[width=2in,valign=m]{} & \includegraphics[width=2in,valign=m]{} & \includegraphics[width=2in,valign=m]{figures/Transformation/lp_example2.pdf}
	%     \\
	%     & \multicolumn{3}{c}{{\scriptsize (b)}}\\
	%     \hline
	%    \raisebox{-0.5\height}{Text 3} & \includegraphics[width=2in,valign=m]{} & \includegraphics[width=2in,valign=m]{} & \includegraphics[width=2in,valign=m]{figures/Transformation/lp_example3.pdf}
	%     \\
	%     & \multicolumn{3}{c}{{\scriptsize (c)}}\\
	%   \end{tabular}
	% \end{table}

	\subsection{Predictive capability evaluation}
	\label{subsec: sim_predictive}
	This subsection evaluates the predictive capability of the \texttt{BuLTM} package under transformation models. 
	In \texttt{BuLTM}, we use the estimated PPD as the predictive distribution; 
	for the predicted value, in Setting (a), we use the median of the estimated PPD; in Setting (B), we use the quantile of the estimated PPD that corresponds to the censoring rate.

	\noindent{\textbf{Competitors}}. 
	In Setting (a), competitors are  the packages or open-source algorithms for fitting semiparametric transformation models. 
	All competitors adopt the standard normal distribution as the reference distribution. 
	\begin{itemize}
		\item \texttt{R} package \texttt{SeBR} \citep{kowal2024monte}.
		We use the empirical CDF of predicted Monte Carlo samples as the predictive distribution, and use the default predicted value. 
		
		\item \texttt{R} code \texttt{BCTM.lin} \citep[BCTM,][]{carlan2024bayesian}. 
		We use the default PPD as the predictive distribution and use predictive median as the predicted value.

		\item Add-on \texttt{R} package \texttt{tram} \citep{siegfried2023distribution}. 
		An add-on package in \texttt{mlt} \citep{hothorn2020most}, the implementation of conditional transformation models \citep{hothorn2014conditional, hothorn2018most}.
		We use the default predictive distribution and the predicted meadian as the predicted value.
		
		\item \texttt{Python} library \texttt{liesel\_ptm} \citep[PTM,][]{brachem2024bayesian}. 
		We use the default predictive distribution and predicted value. 
	\end{itemize}

	\noindent{In Setting (b), competitors are  the packages for semiparametric survival models.}
	
	\begin{itemize}
		\item \texttt{R} package \texttt{spBayesSurv} \citep{zhou2020spbayessurv}, a Bayesian package for semiparametric survival model fitting and model selection.

		\item \texttt{R} package \texttt{mlt} \citep{hothorn2020most}. 
		
		\item \texttt{R} package \texttt{TransModel} \citep{zhou2022transmodel}, fitting a semiparametric transformation model based on \cite{chen2002semiparametric}.
		
	\end{itemize}
	In Setting (b.1), all competitors use the correct reference distribution; 
	in Setting (b.2), competitors use the reference distribution selected by \texttt{spBayesSurv} following the model selection procedure in \cite{zhou2018unified}.

	\noindent{\textbf{Assessments}}. 
	We evaluate two capabilities: (i) the capability of recovering the predictive distribution; (ii) the performance of a single predicted value. 
	Capability (i) is evaluated by the root integrated mean square error (RIMSE) between estimated predictive distribution $\hat{f}$ and the truth $f$: $\text{RIMSE}(\hat{f}, f) = \sqrt{\int_a^b (\hat{f}(s) - f(s))^2ds}$ on an interval $(a, b)$. 
	For capability (ii) assessment, we use the mean absolute error (MAE) in Setting (a), and the C-index \citep{harrell1982evaluating} in Setting (b). 
	
	\noindent{\textbf{Knot interpolation with censored data}}. 
	{
		We introduce a knot interpolation procedure to incorporate information from censored observations. 
		Let $\tilde{y}$ be the uncensored transformed observations, and $\tilde{y}_c$ be the collection of both censored and uncensored observations. 
		We begin with the $N_I$ interior knots specified by the quantiles of the uncensored observations, denoted by $s_0, \ldots, s_{N_i-1}$. 
		Then we interpolate the quantiles of $\tilde{y}_c$ that are located far from the same quantiles of $\tilde{y}$ and a complement of the knots. 
		We summarize the two-step procedure in Algorithm \ref{alg:quantile-knot}. 
		An example that illustrates and visualizes the procedure is deferred to \textit{Supplement} E. 
		
		\begin{algorithm*}[!htb]\footnotesize
			\caption{Knot interpolation with censored data}\label{alg:quantile-knot}
			\begin{algorithmic}[1]
				% \Require Training data $\{\mathcal{X}_i, \bm{Y}_i = (y_{i1}, y_{i2})\}_{i=1}^n$. 
				% \Ensure CeCNN $(\hat{g}_{1}, \hat{p}_{2})$. 
				\State Configure initial knots. 
				Let $N_I > 1$ be the number of initial knots. 
				For $j=0, \ldots, N_I - 1$, let $s_j = \hat{Q}_{\tilde{y}}(j/N_I)$.
				Sort initial knots $0<s_0<\cdots<s_{N_I - 1} < \tau$. 
				\If{ the transformed observations $\tilde{y}$ are right-censored } 
				\State 
				Let $\tilde{y}_c$ be the collection of all observations, and $\tilde{y}$ be the uncensored observations. 
				For $s_j$ such that $|\hat{F}_{\tilde{y}_c}(s_j) - \hat{F}_{\tilde{y}}(s_j)| \ge 0.05$, interpolate a new knot $s_j^* = \hat{Q}_{\tilde{y}_c}(j/N_I)$. 
				\State Output sorted series of $\{s_1, \ldots, s_j, s_j^*, \ldots, s_{N_I-1}\}$ as final interior knots. 
				\EndIf
			\end{algorithmic}
		\end{algorithm*}
	}
	
	\noindent{\textbf{Setting (a)}}. 
	\begin{figure}[!htp]
		\centering
		\subfigure[]{
			\begin{minipage}[t]{0.45\linewidth}
				\centering
				\includegraphics[width=1.75in]{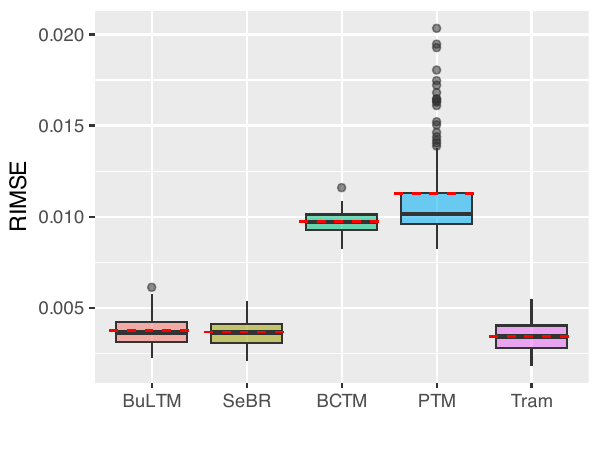}
				\label{RIMSE_Gauss}
			\end{minipage}
		}
		\vspace{-.5cm}
		\subfigure[]{
			\begin{minipage}[t]{0.45\linewidth}
				\centering
				\includegraphics[width=1.75in]{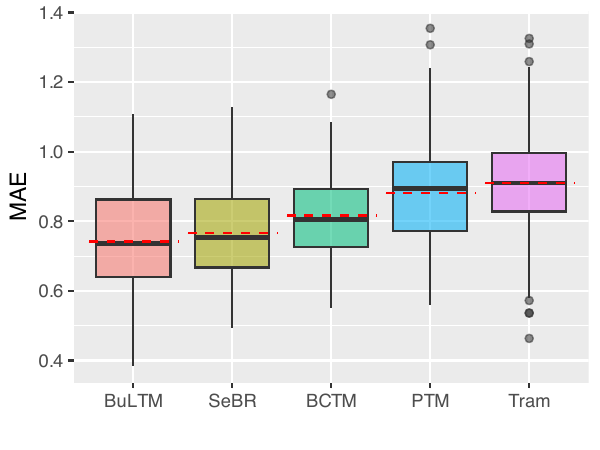}
				\label{MAE_Gauss}
			\end{minipage}
		}
		
		\caption{\footnotesize Box-plots of predictive assessments under Setting (a.1). (a), RIMSE; 
			(b), MAE. }
		\label{Box_Gauss}
	\end{figure}
	The box-plots of assessments among all replicative simulations in Settings (a.1) and (a.2) are presented in Figures \ref{Box_Gauss} and \ref{Box_Mix} respectively.
	In setting (a.1), where all competitors correctly specify the model, \texttt{BuLTM} is competitive with \texttt{SeBR} in recovering predictive distributions (two-sided paired t-test $p$-value: 0.15 against \texttt{SeBR}), and significantly outperforms the remaining competitors except \texttt{tram}.
	However, \texttt{BuLTM} outperforms all competitors including \texttt{tram} in fitting the predicted values (one-sided paired t-test $p$-values : $1.147 \times10^{-5}$ against \texttt{SeBR}; $2.14\times 10^{-15}$ against BCTM). 
	In setting (a.2), where all semiparametric methods encounter model misspecification, \texttt{BuLTM} significantly outperforms all competitors in both recovering predictive distributions and fitting the predicted values (one-sided paired t-test $p$-values : $0.0001$ against \texttt{SeBR}; $3.16\times 10^{-8}$ against BCTM).

	\begin{figure}[!htp]
		\centering
		\subfigure[]{
			\begin{minipage}[t]{0.45\linewidth}
				\centering
				\includegraphics[width=1.75in]{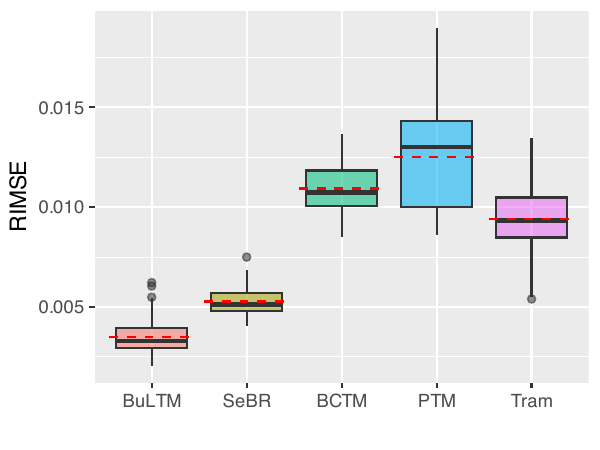}
				\label{RIMSE_Mix}
			\end{minipage}
		}
		\vspace{-.5cm}
		\subfigure[]{
			\begin{minipage}[t]{0.45\linewidth}
				\centering
				\includegraphics[width=1.75in]{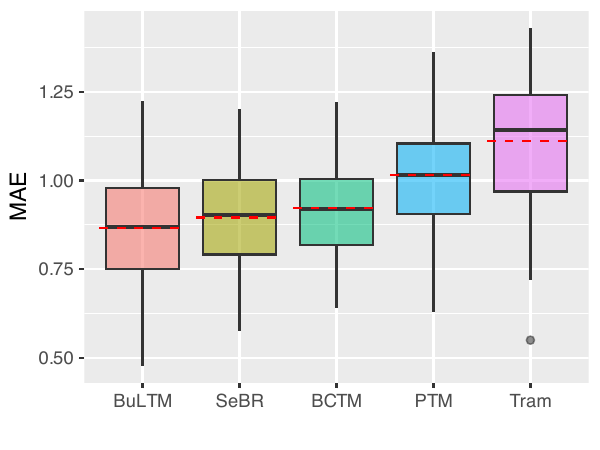}
				\label{MAE_Mix}
			\end{minipage}
		}
		
		\caption{\footnotesize Box-plots of predictive assessments under Setting (a.2). (a), RIMSE; 
			(b), MAE. }
		\label{Box_Mix}
	\end{figure}

	\begin{figure}[!htp]
		\centering
		\subfigure[]{
			\begin{minipage}[t]{0.45\linewidth}
				\centering
				\includegraphics[width=1.75in]{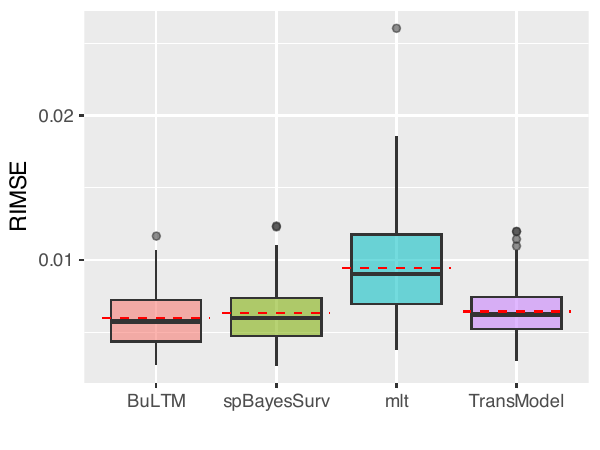}
				\label{RIMSE_PH}
			\end{minipage}
		}
		\vspace{-.5cm}
		\subfigure[]{
			\begin{minipage}[t]{0.45\linewidth}
				\centering
				\includegraphics[width=1.75in]{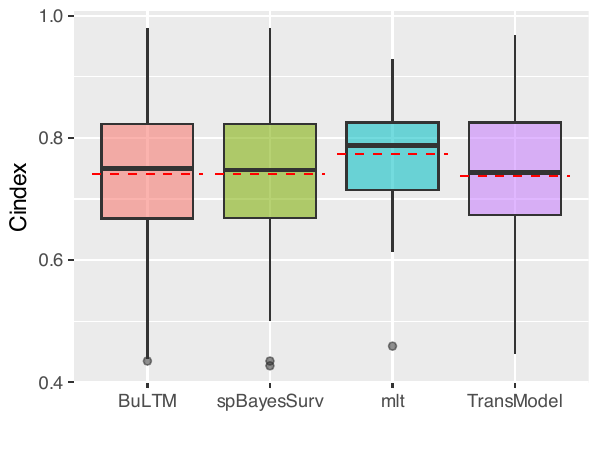}
				\label{Cind_PH}
			\end{minipage}
		}
		
		\caption{\footnotesize Box-plots of predictive assessments under Setting (b.1). (a), RIMSE; 
			(b), C-index. }
		\label{Simu_PH}
	\end{figure}
	
	\noindent{\textbf{Setting (b)}}. The box-plots of assessments among all replicative simulations in Settings (b.1) and (b.2) are presented in Figures \ref{Simu_PH} and \ref{Simu_Non} respectively. 
	In Setting (b.1), the commonly used proportional hazard setting, \texttt{BuLTM} significantly outperforms \texttt{spBayesSurv} (one-sided paired t-test $p$-values: $0.0002$) and \texttt{TransModel} ($2.03\times 10^{-5}$) in recovering the predictive distributions, and is comparable with \texttt{spBayesSurv} and \texttt{TransModel} in C-index, while \texttt{mlt} outperforms. 
	In Setting (b.2), where the competitors encounter model misspecification, \texttt{BuLTM} significantly outperforms the competitors in recovering the predictive distributions, and slightly outperforms in C-index. 
	
	In summary, \texttt{BuLTM} is robust against model misspecification for both real-valued and positive responses, and is competitive (generally significantly outperforms) in both predictive distribution recovery and single-value predictions.

	\begin{figure}[!htp]
		\centering
		\subfigure[]{
			\begin{minipage}[t]{0.45\linewidth}
				\centering
				\includegraphics[width=1.75in]{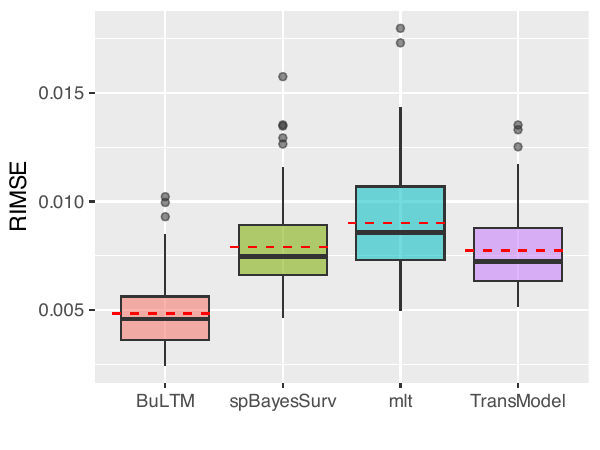}
				\label{RIMSE_Non}
			\end{minipage}
		}
		\vspace{-.5cm}
		\subfigure[]{
			\begin{minipage}[t]{0.45\linewidth}
				\centering
				\includegraphics[width=1.75in]{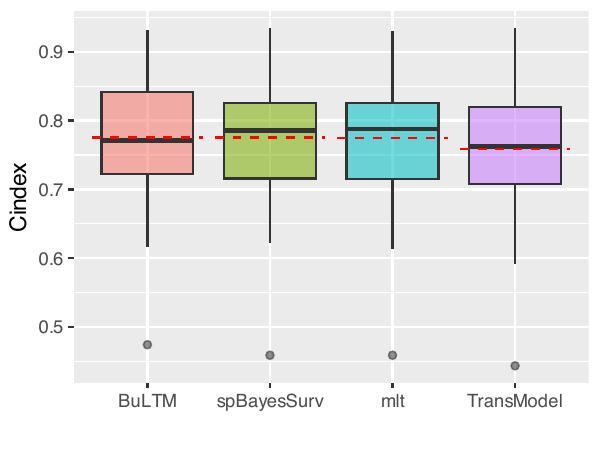}
				\label{Cind_Non}
			\end{minipage}
		}
		
		\caption{\footnotesize Box-plots of predictive assessments under Setting (b.2). (a), RIMSE; 
			(b), C-index. }
		\label{Simu_Non}
	\end{figure}
	
	% In summary, \texttt{BuLTM} outperforms SOTA competitors on simulated data for both real-valued and positive responses. 
	% Predictive capabilities of semiparametric competitors are discounted when encountering model misspecification. 

	\section{Applications}
	\label{sec:app}
	
	\subsection{Auto MPG data}
	We first apply \texttt{BuLTM} to Auto MPG \citep{MPGdata}, a benchmark machine learning dataset. 
	The response is city-cycle fuel consumption in miles per gallon (MPG) and the predictors are 3 multivalued discrete and 5 continuous covariates. 
	We preprocess the data by transforming all continuous predictors to $(0, 1)$ and center the response to $\mathbb{R}$. 
	We split the data into a 90\% training set and a 10\% test set and repeat the split for 10 runs to compare the out-of-sample predictive performance of \texttt{BuLTM} with other competitors. 
	We allow for nonlinear covariate effects through covariate transformation with an additive structure such that $ h(y) = \sum_{j=1}^8 f_j(z_j) + \epsilon$, where $f_j(z_j) = \sum_{k=1}^K \beta_{jk} \phi_j(z_j)$ and the $\phi_j$ are basis functions. 
	On this dataset, we select the Fourier basis and set $K = 8$. 
	We apply the nonlinear model transformation to both \texttt{BuLTM} and \texttt{SeBR}, namely \texttt{BuLTM.nonlin} and \texttt{SeBR.nonlin}, respectively.

	Two metrics on the test sets are used for assessment: i) the MAE between the predicted and the true values, and ii) the coverage probability (CP) of the 95\% prediction intervals. 
	We compare \texttt{BuLTM} with the competitors in Section \ref{sec:sim} under simulation Setting (A). 
	To reduce computational burden, we fit the linear model for BCTM without the smooth transformation. 
	Thus, the results of BCTM can also be treated as a baseline model.

	\begin{figure}[!htp]
		\centering
		\subfigure[]{
			\begin{minipage}[t]{0.45\linewidth}
				\centering
				\includegraphics[width=2.0in]{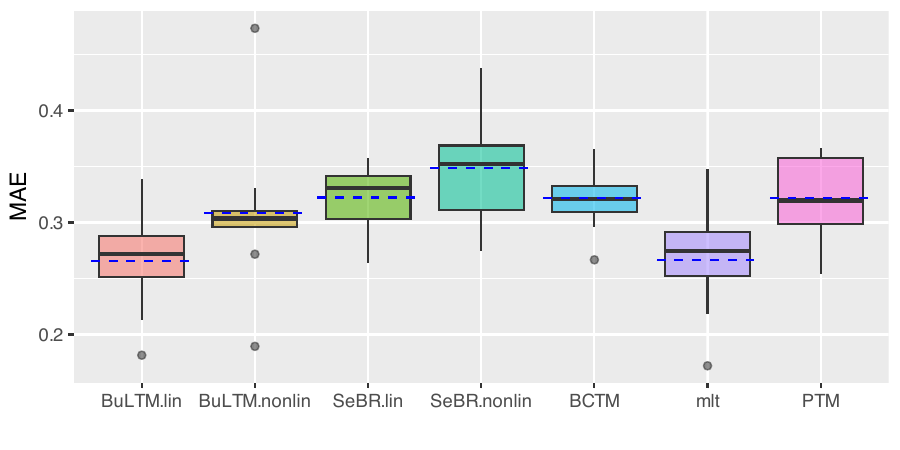}
				\label{MAE_MPG}
			\end{minipage}
		}
		\vspace{-.5cm}
		\subfigure[]{
			\begin{minipage}[t]{0.45\linewidth}
				\centering
				\includegraphics[width=2.0in]{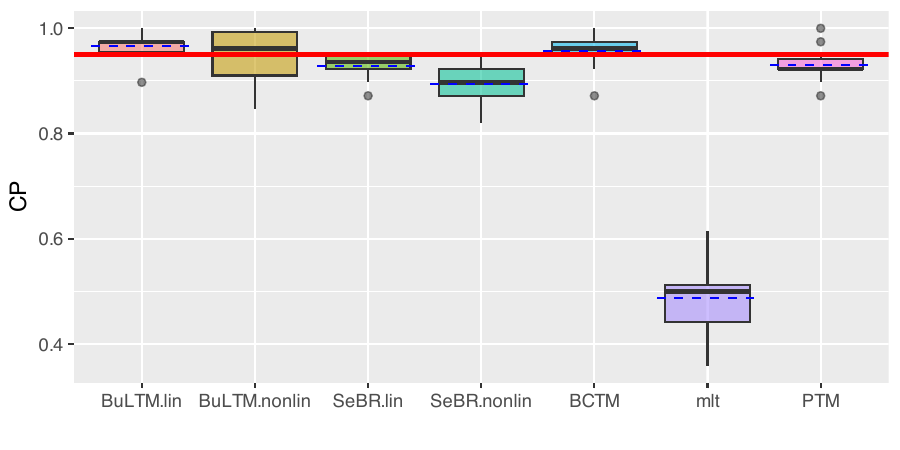}
				\label{CP_MAE}
			\end{minipage}
		}
		
		\caption{\footnotesize Box-plots of predictive assessments on the MPG dataset. (a), MAE; 
			(b), CP, the horizontal line is the nominal level of coverage. }
		\label{App: MPG}
	\end{figure}
	
	The box-plots of the assessment metrics are presented in Figure \ref{App: MPG}. 
	We find that for both \texttt{BuLTM} and \texttt{SeBR}, the linear model enjoys lower MAE than the nonlinear model, indicating that a linear transformation model is adequate to fit the MPG data. 
	On the MAE metric, \texttt{BuLTM.lin} is competitive with \texttt{mlt} (two-sided paired t-test $p$-value: 0.062) and significantly outperforms the other competitors. 
	On the CP metric, both \texttt{BuLTM} and \texttt{BuLTM.nonlin} achieve the nominal coverage, while \texttt{mlt} fails to do so. 
	This real-word example demonstrates the superiority of \texttt{BuLTM} in both fitting predicted values and recovering predictive distributions for real-valued data.

	\subsection{Heart failure clinical records data}
	The second real-world example is the heart failure clinical records data. 
	The dataset records 299 heart failure patients collected at the Faisalabad Institute of Cardiology and at the Allied Hospital in Faisalabad, from April to December in 2015 \citep{ahmad2017survival}. 
	The dataset consists of 105 women and 194 men, with a range of ages between 40 and 95 years old. 
	In the dataset, 96 subjects are recorded as dead and the remaining 203 are censored, leading to a censoring rate of 67.9\%, which is relatively high. 
	The dataset contains 11 covariates reflecting subject's clinical, body, and lifestyle information. 
	% Among the 11 covariates, 5 of them are binary variables: anaemia, high blood pressure, diabetes, sex, and smoking. 
	% The dataset considers a patient having anaemia if hematocrit levels were lower than 36\%, while 
	% the criterion for high blood pressure is unclear in the study. 
	% Other continuous covariates are age (year), creatinine phosphokinas (level of the creatinine phosphokinas enzyme in the blood, mcg$/$L), ejection fraction (percentage of blood leaving the heart at each contraction), platelets (platelets in blood, kiloplatelets$/$mL), serum creatinine (level of creatinine in blood, mg$/$dL), and serum sodium (level of sodium in blood, mEq$/$L). 
	In this dataset, \texttt{spBayesSurv} selects the PH model and thus, \texttt{TransModel} specifies $r=0$, and \texttt{mlt} uses the reference distribution \texttt{"MinExtrVal"}. 
	We conduct 10 runs of 5-fold cross validation. 
	The results of estimation of regression coefficients are deferred to \textit{Supplement} F.1. 
	
	On the heart failure dataset, we use two metrics to assess the predictive capabilities: 
	i) the C-index, where we use the $70\%$ quantile of the predictive distribution (close to the censoring rate) as the predicted survival time of a future observation;
	ii) the Integrated Brier Score \citep[IBS][]{graf1999assessment} on the follow-up time interval (the lower the IBS, the better the prediction). 
	
	%% Figure 7
	\begin{figure}[!htp]
		\subfigcapskip=-10pt
		\centering
		\subfigure[]{
			\begin{minipage}[t]{0.45\linewidth}
				\centering
				\includegraphics[width=1.5in]{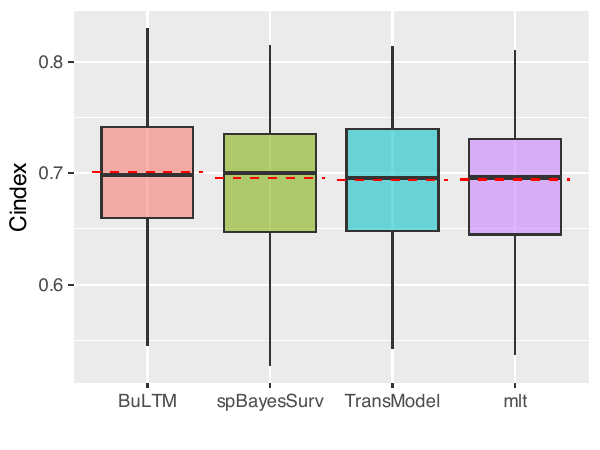}
				\label{CindexHeart}
			\end{minipage}
		}
		\vspace{-.5cm}
		\subfigure[]{
			\begin{minipage}[t]{0.45\linewidth}
				\centering
				\includegraphics[width=1.5in]{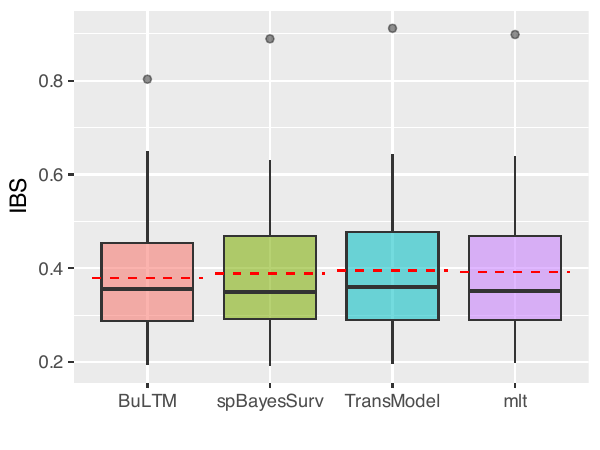}
				\label{IBSHeart}
			\end{minipage}
		}
		\caption{\footnotesize{Prediction comparison between BuLTM, spBayesSurv, and TransModel on the heart failure dataset; (a), C index; (b), Integrated Brier score; red dashed lines: the mean of the metrics. } }
		\label{Transplant}
	\end{figure}

	Box-plots of the assessment metrics are presented in Figure \ref{Transplant}. 
	\texttt{BuLTM} significantly outperforms other competitors in both C-index and IBS:  
	for C-index, the one-sided paired t-test $p$-values are $0.01$ against \texttt{spBayesSurv}, $0.0038$ against \texttt{TransModel}, and $0.0046$ against \texttt{mlt}; 
	for IBS, the one-sided paired t-test $p$-values are $0.002$ against \texttt{spBayesSurv}, $1.47 \times 10^{-5}$ against \texttt{TransModel}, and $2.8 \times 10^{-4}$ against \texttt{mlt}. 
	This example demonstrates the superiority of \texttt{BuLTM} in the  prediction of censored data.

	\section{Discussion}
	\label{sec:disc}

	Under unidentified transformation models, the proposed sufficient informativeness criterion extends the Gelman-Rubin (G-R) statistic \citep{gelman1992inference} from MCMC checking to covering prior adjustment, by \textit{taking another view of MCMC convergence}. 
	The G-R statistic diagnoses whether the MCMC transitions converge to the stationary distribution \citep{roy2020convergence}. 
	By contrast, we examine whether the within-chain MCMC variance is close to the true posterior variance (dominated by the inverse of prior information level; refer to Theorem \ref{theo: posterior variance}) under unidentified transformation models.

	As the AE and referees' sharp insights have helped us clarify, the application scope of \texttt{BuLTM} covers general data domains through the multiplicative error working model. 
	We want to emphasize that \texttt{BuLTM} offers a robust  alternative toolbox to survival analysis in addition to \texttt{spBayesSurv} \citep{zhou2018unified} and \texttt{TransModel} \citep{zhou2022transmodel}: 
	it can estimate conditional survival functions and conditional hazard functions for future data, and provides a reliable Bayesian estimate of identified regression coefficient $\bm{\beta}$ with a tractable unit-norm constraint.

	% Over the past decade, the relation between ``identifiability" and ``informative priors" remained to be inconclusive in the Bayesian community; see the links \href{https://statmodeling.stat.columbia.edu/2022/07/29/discussion-on-identifiability-and-bayesian-inference/}{Online discussion A} and \href{https://statmodeling.stat.columbia.edu/2014/02/12/think-identifiability-bayesian-inference/}{Online discussion B} for example. 
	% These discussions call for quantitative measures on the informativeness of data or priors. 
	% Most literature was contributed to Dawid's unidentifiability \citep{Dawid1979JRSSB} caused by so-called ``uninformative data" \citep[among others]{XieCarlin2006JSPI, koop2013identification}, which is different to the unidentifibility met by nonparametric transformation models.
	
	In our unidentified scenario caused by a flat likelihood, only a few discussions have mentioned that ``weakly informative priors" may resolve the poor mixing phenomenon (\cite{reich2019bayesian}; \cite{mcelreath2020rethinking}), but they did not quantify how ``weak" the priors can be. 
	In this sense, this article might be the first to quantitatively link prior informativeness with MCMC mixing under unidentified models: 
	we rigorously demonstrate that the variance of multi-modal posteriors does not vanish with increasing data size, but rather is dominated by the prior information level; 
	and we comprehensively illustrate how to achieve MCMC mixing by increasing the prior information through an analytic expression for prior information, an algorithm of hyperparameter tuning, and visualization of the whole procedure. 
	
	Our method addresses the poor mixing of PPD value chains under unidentified transformation models. 
	Nonetheless, mixing of other parameters such as the Dirichlet process mixture components remains unsolved due to the label-switching issue. 
	It is anticipated that an ordered Dirichlet process stick-breaking construction \citep{zarepour2012rapid} can resolve the problem and speed up our MCMC sampling, but the implementation is so far unavailable in \texttt{rstan} since it needs the Boost \texttt{C++} libraries. 
	Our method may be further extended to other unidentified models such as latent Dirichlet allocation \citep[LDA,][]{blei2003latent} and Bayesian additive regression trees \citep[BART,][]{chipman2012bart}, where new prior elicitation and new quantifications of prior information are needed.

	% Obtaining predictions from poorly mixed PPD chains through Bayesian stacking \citep{yao2022stacking} is parallel to our work. 
	% There are three major differences: i) our method is a prior adjustment scheme to draw mixed PPD chains, while Bayesian stacking is a post process on non-mixing MCMC chains; ii) our BPI inherits the original definition of PPD; 
	% Yao et al.'s stacked predictive distribution is another fashion that is suitable in the machine learning context, though it violates the PPD definition; iii) our method only requires drawing several MCMC chains, while the Bayesian stacking needs many (at least 8) parallel MCMC chains, which may lead to heavy computational burden especially under unidentified models. 

	\newpage
	\begin{center}
		{\large\bf SUPPLEMENTARY MATERIAL}
	\end{center}
	
\section{Techniacal proofs}
\subsection{Proof of Proposition 1}
\begin{proof}
Suppose $H(0) = a$ , where $a$ is a positive constant. 
Since $\widetilde{y}$ lay on interval $(0, \tau)$, $Pr\{\widetilde{y} \ge 0\}=1$. 
Thus we have
$$
Pr\{\widetilde{y}\ge 0\} = \int_{D}  Pr\{\widetilde{y}\ge 0|\bm{z} = \bm{s}\}f_{\z}(\bm{s}) d\bm{s} = 1,
$$
where $D$ is the support of covariate $\z$ and $f_{\z}$ denotes the density of $\z$.  
Since $H$ is monotonic, we have 
$$
Pr\{\widetilde{y}\ge 0|\z=\bm{s}\} = Pr\{H(\widetilde{y})\ge a|\z=\bm{s}\}=Pr\{\xi\exp(\bbeta^T\bm{s})\ge a\} = Pr\{\xi\ge a\exp(-\bbeta^T \bm{s})\}.
$$ 
As a counterexample,  suppose the covariate $\z  \sim N(0, 1)$ is univariate, the model error $\xi \sim \exp(1)$, and $\bbeta = \beta_1=-1$. 
Since $\xi$ and $\z$ are independent, we have 
$
Pr\{\widetilde{y}\ge 0\} = \int_{\mathbb{R}}\int_{a\exp(z)}^{+\infty} \exp(-s) \phi(z)dsdz <1, 
$
where $\phi(\cdot)$ denotes the density of $N(0,1)$. 
This contradicts the fact that $Pr\{\widetilde{y}\ge 0\}=1$. 
\end{proof}

\subsection{Propositions of the quantile-knot I-spline prior}
\label{propSpline}
We first present the following proposition, which reveals the relationship between the proposed quantile-knot I-splines prior and the L\'{e}vy process.
Such a proposition is the key to prove the results like the Bernstein-von Mises theorem. 
\begin{proposition}[I-splines prior and L\'{e}vy process]
\label{prop:SplineOrdinator}
Suppose all transformed data $\widetilde{y}_i$ are observed distinctly on $ (0, \tau)$. 
For $H(s)$ modeled by quantile-knot I-spline prior with fixed knots taken on $s_1 < s_2 < \cdots < s_J$ and prespecified smooth order $r$, 
there exists a L\'{e}vy process $\mathcal{H}$ such that $H(s_j)$ are samples of $\mathcal{H}(s_j)$, for $j=1, \ldots, J$. 
Specifically, given that $\alpha_j \sim \text{Gamma}(a, b)$, there exists a Gamma process $\mathcal{H} = \Gamma \mathcal{P}(bc(s), b)$ from which $H(s_j)$ are sampled, where $c(s)$ is a nonnegative nondecreasing function determined by the Gamma hyperparameter $a$  and the I-spline functions $\sum_{j=1}^K B_j(s)$. 
\end{proposition}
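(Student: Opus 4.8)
The plan is to transfer the Lévy (independent-increments) structure from the coefficient vector $\bm{\alpha}=(\alpha_1,\dots,\alpha_K)$ to the random function $H=\sum_{j}\alpha_j B_j$ by exploiting that $H$ is a \emph{fixed} nonnegative linear image of $\bm{\alpha}$, and that when $\alpha_j\sim\mathrm{Gamma}(a,b)$ i.i.d.\ the vector $\bm{\alpha}$ is exactly the vector of atom masses of a Gamma completely random measure (a driftless subordinator) on $[0,\tau]$. First I would assemble the properties of the order-$r$ I-splines that are actually used: each $B_j$ is continuous, nondecreasing, $B_j(0)=0$, $B_j\uparrow 1$, supported on a window of $r+1$ consecutive basis knots, and its derivative $B_j'$ is an M-spline (a density), so that $\sum_j B_j$ is itself a nonnegative nondecreasing function on $[0,\tau]$ vanishing at the origin. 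Together with Proposition~\ref{prop: H(0) = 0}, these facts already force $H$ onto the cone of nonnegative nondecreasing functions through the origin, which is the qualitative part of the claim.

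Next I would construct the candidate process $\mathcal{H}$. Writing $H(s_j)=\sum_{k}w_{kj}\alpha_k$ with increment weights $w_{kj}=B_k(s_j)-B_k(s_{j-1})\ge 0$ (set $s_0=0$), the compact-support property of I-splines makes each telescoped increment a finite sum of the $\alpha_k$'s, so $H$ is assembled additively from independent nonnegative blocks. I would then identify $\mathcal{H}$ and its shape function $c(s)$ through the Laplace transform $\mathbb{E}\exp\{-\theta H(s)\}=\prod_{k}(1+\theta B_k(s)/b)^{-a}$: by infinite divisibility of the Gamma law this has the form of the Laplace transform of $\Gamma\mathcal{P}(bc(\cdot),b)$ evaluated at $s$ for a function $c$ depending only on $a$ and on $\sum_k B_k$, and monotonicity of the $B_k$ gives that $c$ is nonnegative and nondecreasing. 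Finally I would record the general (non-Gamma) version: for arbitrary i.i.d.\ nonnegative $\alpha_j$, the same additive block structure exhibits $(H(s_1),\dots,H(s_J))$ as the image, under the deterministic weight matrix $(w_{kj})$, of the independent block increments of a subordinator $\mathcal{H}$, which is the Lévy-process representation feeding Theorem~\ref{theo:BvM}.

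The main obstacle is pinning down the sense in which ``$H(s_j)$ are samples of $\mathcal{H}(s_j)$'': because the I-spline supports overlap, a coefficient $\alpha_k$ feeds several consecutive increments $[s_{j-1},s_j]$, so the \emph{process} $s\mapsto H(s)$ does not itself have independent increments and is not literally a Lévy process. The resolution — and the delicate point — is to keep the Lévy object at the level of the underlying random measure $\bm{\alpha}$, to show that the joint law of $(H(s_1),\dots,H(s_J))$ is the pushforward under $(w_{kj})$ of the joint law of the independent increments of $\mathcal{H}$, and to verify that this pushforward description (together with the boundary terms in which all but $O(r)$ weights equal $0$ or $1$) is precisely what the downstream weak-convergence argument for $n^{1/2}(\mathcal{H}-H_0)\mid\mathcal{D}$ needs. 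A secondary, routine technical point is bookkeeping the M-spline normalization so that the hyperparameter $a$ enters $c(s)$ with the correct constant, which is immediate once the Laplace-transform identity above is established.
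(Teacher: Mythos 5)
Your central identification step fails. You propose to read the Gamma process off the Laplace transform $\mathbb{E}\exp\{-\theta H(s)\}=\prod_{k}(1+\theta B_k(s)/b)^{-a}$, asserting that infinite divisibility makes this the transform of $\Gamma\mathcal{P}(bc(\cdot),b)$ evaluated at $s$. But the marginal of $\Gamma\mathcal{P}(bc(\cdot),b)$ at $s$ is $\mathrm{Gamma}(bc(s),b)$, with transform $(1+\theta/b)^{-bc(s)}$, and $\prod_{k}(1+\theta B_k(s)/b)^{-a}$ has that form only when every nonzero $B_k(s)$ equals $1$: otherwise $H(s)$ is a convolution of independent Gammas with common shape $a$ but unequal scales $B_k(s)/b$, which is infinitely divisible yet not Gamma, so no choice of $c(s)$ reconciles the two transforms. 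Infinite divisibility of each marginal is far from identifying the process. Relatedly, your fallback --- keeping the L\'{e}vy object at the level of the coefficient measure $\bm{\alpha}$ and describing $(H(s_1),\dots,H(s_J))$ as a deterministic pushforward of its independent blocks --- proves something strictly weaker than the proposition asserts and than the downstream argument consumes: the proof of Theorem \ref{theo:BvM} applies the Bernstein--von Mises machinery for L\'{e}vy-process priors to the prior process of $H$ (and of $\{H/\psi_{l0}\}^{\nu_{l0}}$) itself, so what is needed is precisely that $H$ restricted to the knots is distributed as a L\'{e}vy/Gamma process, not merely as a linear image of one. You candidly concede that you do not prove this, so the proposal does not close.

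The paper's route is different and worth contrasting. It first disposes of $r=1$, where the I-splines are pairwise disjoint and the knot increments are driven by distinct coefficients. For $r>1$ it partitions the $K=J+r$ basis functions into $r$ groups $\{B_\iota,B_{\iota+r},B_{\iota+2r},\dots\}$ of pairwise disjoint splines (so each inter-knot interval is crossed by exactly one member of each group), writes $H=\sum_{\iota=1}^{r}H_\iota$ with $H_\iota=\sum_{k}\alpha_{\iota+kr}B_{\iota+kr}$, argues that each $H_\iota$ has independent increments at the knots, and concludes from the fact that a finite sum of independent L\'{e}vy processes is L\'{e}vy; in the Gamma case it takes $H(s_{j+1})-H(s_j)\sim\mathrm{Gamma}(ra,b)$ and defines $c$ through $c(s_{j+1})-c(s_j)=ra/b$. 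Your observation that a single $\alpha_k$ feeds up to $r$ consecutive knot increments is sharp --- it is exactly the boundary-weight phenomenon the paper tracks via $w_{j'}=B_{j+j'}(s_j)-B_{j+j'}(s_{j-1})$ in Theorem \ref{theo: posterior variance} --- but raising the difficulty is not resolving it: to prove the proposition as stated you must either carry out a disjoint-grouping decomposition of the paper's type, or show that your pushforward representation suffices for every downstream use, and the proposal does neither.
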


Proposition \ref{prop:SplineOrdinator} suggests the use of Gamma hyperprior for $\bm{\alpha}$, since the Gamma process naturally leads to the nonparametric Bernstein-von Mises (BvM) theorem \citep{kim2006bernstein}. 
Meanwhile, it also justifies why we cannot identify $H$ by imposing constrained priors to $\bm{\alpha}$, as the property of independent increments no longer holds. 
In the following, we present the proof of Proposition \ref{prop:SplineOrdinator}. 

\subsubsection*{Properties of I-spline functions}
We begin with some properties of I-splines functions \citep{ramsay1988monotone}. 
Let $s_0=0 < s_1 < s_2 < \cdots < s_J=\tau$ and we get $J$ disjoint partitions  $[0, s_1], (s_1, s_2], \cdots, (s_{J-1}, s_J]$ of $(0, \tau]$. 
Note that each I-spline function starts at $0$ in an initial flat region, increases in
the mid region, and then reaches $1$ at the end \citep{wang2011semiparametric}.
Therefore, the range of all I-spline functions is $[0, 1]$. 
Then we determine the I-spline basis functions with knots $s_0=0 < s_1 < s_2 < \cdots < s_J=\tau$ and smoothness order $r > 1$. 
When $r = 1$, the I-spline functions are defined as the piecewise linear function connecting the knots $s_1, \ldots, s_J$. 
\begin{definition}[Joint and disjoint I-splines]
Two I-spline functions $B_{j_1}(s)$ and $B_{j_2}(s)$ are joint on a certain interval $D_j$ for $j=1, \cdots, J$, if $\exists s' \in D_i$ such that $B_{j_1}(s'), B_{j_2}(s') \in (0, 1)$. 
Otherwise, they are disjoint on $D_i$. 
\end{definition}

\begin{definition}[Crossing of I-splines]
An I-spline function $B_j(s)$ crosses an interval $D_i$ if $\exists s' \in D_i$ such that $0 <B_j(s')<1$.
\end{definition}
The following propositions are direct results of the definition of I-splines functions \citep[Eq. (5)]{ramsay1988monotone}. 
\begin{proposition}
\label{property: crossspline}
For each interval $D_j$, there are at most $r$ I-spline functions crossing the interval. 
\end{proposition}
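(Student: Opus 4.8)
The plan is to reduce the statement to the well-known local support property of the order-$r$ M-spline (equivalently B-spline) basis, exploiting the fact that each I-spline is by construction the integral of an M-spline. First I would recall from \citet[Eqs. (2) and (5)]{ramsay1988monotone} that the $j$-th I-spline of order $r$ is $B_j(s) = \int_{s_0}^{s} M_j(t)\,dt$, where $M_j$ is the $j$-th M-spline of order $r$ on the same augmented knot sequence; $M_j$ is nonnegative, supported on exactly $r$ consecutive knot subintervals, and integrates to $1$. Consequently $B_j$ is identically $0$ to the left of $\mathrm{supp}(M_j)$, strictly increasing on the interior of $\mathrm{supp}(M_j)$ (since $M_j > 0$ there), and identically $1$ to the right. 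In particular, $B_j(s') \in (0,1)$ for some $s' \in D_i$ if and only if the interior of $D_i$ meets the interior of $\mathrm{supp}(M_j)$.

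Next I would invoke the standard fact that on any single knot interval $D_i = (s_{i-1}, s_i]$ at most $r$ of the order-$r$ M-splines fail to vanish identically — concretely, only $M_{i-r+1}, \dots, M_i$ (with the obvious truncation near the ends of $(0,\tau)$) can be positive on $D_i$, because an order-$r$ M-spline spans exactly $r$ intervals. Combining this with the equivalence of the previous paragraph, the set of indices $j$ for which $B_j$ crosses $D_i$ is contained in $\{i-r+1, \dots, i\}$, hence has cardinality at most $r$, which is the claim. For the degenerate case $r = 1$, the I-splines are the piecewise-linear ``ramp-to-plateau'' functions and exactly one of them is strictly between $0$ and $1$ on each $D_i$, so the bound holds trivially.

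The only genuine subtlety — and the step I would treat most carefully — is the bookkeeping at the two ends of $(0,\tau)$: with the repeated boundary knots used to construct the basis, one must still check that the number of M-splines overlapping the first and last subintervals is $\le r$, and that the ``flat $0$ / strictly increasing / flat $1$'' description of $B_j$ remains valid for basis functions whose support abuts the boundary. This is routine given the explicit recursion for $M_j$ in \citet{ramsay1988monotone}, so I would state it and cite that recursion rather than re-derive the M-spline support property from scratch.
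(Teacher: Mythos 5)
Your proof is correct and is essentially the argument the paper intends: the paper offers no explicit proof, asserting the proposition as a ``direct result'' of the definition of I-splines in \citet[Eq.~(5)]{ramsay1988monotone}, and your reduction to the local support property of the order-$r$ M-splines (each I-spline being flat at $0$, strictly increasing over the $r$-interval support of its M-spline, then flat at $1$, with at most $r$ M-splines active on any knot span) is precisely the standard justification that citation points to. The boundary-knot bookkeeping you flag is indeed the only point requiring care, and it resolves as you describe.
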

\begin{proposition}
\label{property: croos spline2}
For $j =1, \ldots, K = J+r$, for the intervals $D_j$, the I-spline function $B_j(s)$ will cross at least one interval and cross no moren than $r$ intervals. 
\end{proposition}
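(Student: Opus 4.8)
The plan is to derive both bounds from the compact-support structure of the underlying M-splines, using the description already recalled above (and in \citep[Eq.~(5)]{ramsay1988monotone}): each order-$r$ I-spline $B_j$ is identically $0$ on an initial flat region, \emph{strictly} increasing on a middle region, and identically $1$ afterwards. Consequently $\{s : 0<B_j(s)<1\}$ is exactly the interior of that middle region, and ``$B_j$ crosses $D_i$'' is equivalent to ``$D_i$ meets the middle region in a set of positive length.'' So it suffices to identify the middle region and count how many of the elementary subintervals $D_1,\dots,D_J$ it overlaps.

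First I would use the fact that $B_j$ is the (normalized) integral of the $j$-th order-$r$ M-spline $M_j$, whose support is the knot window $[t_j,t_{j+r}]$ of the extended knot sequence, with $M_j>0$ on its interior; hence the middle region of $B_j$ is precisely $(t_j,t_{j+r})$. Because the extended sequence is built from the distinct knots $s_0=0<s_1<\cdots<s_J=\tau$, an index window of width $r$ spans at most $r$ of the gaps $D_1,\dots,D_J$ between consecutive distinct knots: $r+1$ consecutive $D_i$'s would require $r+1$ such gaps, i.e.\ a strictly wider knot window. This yields ``no more than $r$ intervals.'' For the lower bound I would note that the admissible range $j\in\{1,\dots,K\}$ is precisely the set of indices for which $t_j<t_{j+r}$, so $(t_j,t_{j+r})$ is a nondegenerate open interval, which therefore overlaps at least one $D_i$ in positive length; on that overlap $0<B_j<1$, so $B_j$ crosses at least one interval. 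One could instead combine Proposition~\ref{property: crossspline} with the fact that $\sum_j B_j$ is strictly increasing across each $D_i$ (so every $D_i$ is crossed by \emph{some} $B_j$, and a nonempty crossing set for each $B_j$ falls out of the count), but the direct support argument is shorter.

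The main obstacle I anticipate is the boundary bookkeeping: under the \texttt{splines2} convention the endpoints $0$ and $\tau$ carry knots of multiplicity $r$, so for $j$ near $1$ or near $K$ several of $t_j,\dots,t_{j+r}$ coincide and $(t_j,t_{j+r})$ spans strictly fewer than $r$ distinct subintervals --- I must confirm it still spans at least one, i.e.\ that $t_j<t_{j+r}$ holds throughout $j=1,\dots,K$, and reconcile the off-by-one relations among ``number of interior knots,'' ``number of subintervals $J$,'' and ``number of basis functions $K$.'' Once the knot-multiplicity accounting is pinned down, the two counting claims are immediate.
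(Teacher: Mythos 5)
Your argument is correct and matches what the paper intends: the paper states this proposition without proof, asserting it as a direct consequence of the definition of I-splines in Ramsay (1988), and your support-window argument (that $\{s:0<B_j(s)<1\}=(t_j,t_{j+r})$ is a window of $r$ consecutive knot gaps, which meets at least one and at most $r$ of the subintervals $D_i$) supplies exactly the details that assertion relies on. The boundary bookkeeping you flag does go through: with boundary knots of multiplicity $r$, every window $[t_j,t_{j+r}]$ still contains two distinct knots, so $t_j<t_{j+r}$ for all $j=1,\dots,K$, and the only residual wrinkle is the paper's own off-by-one between $K=J+r$ and the count of interior knots, which does not affect either bound.
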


With $J$ interior knots and smooth degree $r$, the I-spline functions are uniquely determined, denoted as $\{B_j(s)\}_{j=1}^{K}$, where $K = J+r$ is the total number of I-spline functions. 
We divide all $K = J+r$ I-spline basis functions into $(r)$ groups. 
For $\iota = 1, \ldots, r$, the $\iota$th group consists of $\{B_\iota, B_{\iota+r}, B_{\iota+2r}, \ldots\}$. 
Propositions \ref{property: croos spline2} and \ref{property: crossspline} guarantee that all I-spline functions within the same group are disjoint. 
That is, for any $D_i$, only one of the I-spline functions within the $\iota$th group crosses the interval $D_i$.  
We define the combination of I-spline functions within the $\iota$th group as
$$
H_\iota(s) = \sum_{k\ge1} \alpha_{\iota+kr}B_{\iota+kr}(s). 
$$
Then $H_\iota(s)$ has independent increments on all knots $s_0=0 < s_1 < s_2 < \cdots < s_J=\tau$, if the coefficients $\{\alpha_{\iota+kr}\}_{k\ge1}$ are independent positive variables. 
Then we rewrite the equation (6) in the manuscript, the I-splines model into the sum of $H_\iota$
\begin{align}
\label{Groupspline}
H(s) = \sum_{j=1}^{K=J+r}\alpha_j B_j(s) = \sum_{\iota=1}^{r} H_\iota(s).
\end{align}

Before we prove Proposition \ref{prop:SplineOrdinator}, we present the definition of the L\'{e}vy process first. 
\begin{definition}[L\'{e}vy process \citep{doksum1974tailfree}]
\label{Levyprocess}
A process $A(s)$ is a L\'{e}vy process such that:, $(i),$ $A(s)$ has independent increments for any $m$ and $0 < s_1 < \cdots < s_m < \infty$; $(ii)$, $A(s)$ is nondecreasing a.s; $(iii)$, $A(t)$ is right continuous a.s; $(iv)$, $A(s) \to \infty$ a.s as $s \to \infty$; $(v)$, $A(s) = 0$ a.s.  
\end{definition}

We are now in a position to complete the proof of Proposition \ref{prop:SplineOrdinator}. 
\begin{proof}
We start from the case where $r = 1$, the linear spline model. 
When $r=1$, all the I-spline functions are disjoint, and therefore, the increments between two knots $s_j$ and $s_{j+q}$ are independent variables for all $q \ge 1$ and do not depend on $s_j$. 
Meanwhile, $H(s_j)$ are nondecreasing and $H(0) =0$ surely. 
Therefore, when $r = 1$, $H(s_j)$ can be sampled from a Levy process. 

When $r > 1$, Eq. \eqref{Groupspline} tells that the I-splines model can be represented as the sum of $r$ groups of disjoint I-spline functions. 
Based on the above proof, the knots of each group of disjoint I-spline functions can be sampled from a L\'{e}vy process. 
Note that a finite sum of independent L\'{e}vy processes is still a L\'{e}vy process. 
Hence, at the knots $s_j$, $H(s_j)$, the sum of $r$ groups of disjoint I-spline functions can be sampled from a L\'{e}vy process too. 

Given that increments on the knots are gamma-distributed, we have 
$
H(s_{j+1}) - H(s_j) \sim \text{Gamma}(ra, b). 
$
Therefore, as those knots $s_j$ are fixed, one can construct a nonnegative nondecreasing function $c(t)$ such that 
$$
c(s_{j+1}) - c(s_{j}) = ra/b
$$
holds for all knots $s_j$. 
By the definition of Gamma processes \citep[pp. 50]{ibrahim2001bayesian}, we conclude that there exists a Gamma process from which $H(s_j)$ are sampled.

\end{proof}

\subsection{Proof of Theorem 1}
Proposition \ref{prop:SplineOrdinator} unveils the relationship between the proposed I-splines model and the L\'{e}vy process and hence enables us to prove Theorem 1.  

\begin{proof}
Note that once $(\bm{p}_0, \bm{\psi}_0, \bm{\nu}_0)$ is specified, the unidentified nonparametric transformation model reduces to an identified semiparametric transformation model \citep{cheng1995analysis}. 
Hence, the ``ground truth'' $H_0$ is unique and fixed. 

Our proof starts from the proportional hazard (PH) case. 
In the PH case, the conditional cumulative hazard function 
$$
\Lambda_{\widetilde{y}|\bm{z}} (s) = H(s) \exp(-\bbeta^T \bm{z}). 
$$
That is, the transformation $H$ plays the role of baseline cumulative hazard function in the PH case. 
Note that with our exponential hyperprior configuration, the posterior samples of $H(s_j)$ are sampled from the posterior of a Gamma process under the PH model (based on Proposition \ref{prop:SplineOrdinator}).
Such insight allows us to employ the Bernstein-von Mises theorem by \cite{kim2006bernstein} in the PH case. 

Let $\mathcal{H}(\cdot)$ be the prior L\'{e}vy process for $H$. 
We introduce the following notations. 
\begin{align*}
U_0(s) = \int_0^s \frac{dH_0(s)}{S^0(s:\bbeta_0)}, ~~
S^0(s:\bbeta) = E(\exp(-\bbeta^T \z) I \{\widetilde{y} > s)\}. 
\end{align*}
Note that the Gamma process satisfies the conditions (C1) and (C2) and our assumptions match conditions (A1) to (A5) in \cite{kim2006bernstein}. 
By integrating out $\bbeta$ in \citet[Theorem 3.3]{kim2006bernstein}, one immediately obtains that 
$$
\pi\{\sqrt{n}(\mathcal{H}(\cdot) - \hat{H}(\cdot) |\mathcal{D}\} \xrightarrow{~~\text{weakly}~~}  \mathcal{GP}(0, K(\cdot, \cdot)), ~K(s, t) = \min(U_0(s), U_0(t)), 
$$
where $\hat{H}$ is the nonparametric maximum likelihood estimator (MLE),
$\mathcal{GP}(0, K(\cdot, \cdot))$ is the Gaussian process with mean function $0$ and the kernel $K$. 
Note that for the knots $s_j$, the posterior of $H(s_j)$ is marginalized from $\mathcal{H}(s_j)$. 
Hence, marginally for all $s_j$, under the PH model, we have 
$$
\pi\{\sqrt{n} (H(s_j) - \hat{H}(s_j))|\mathcal{D}\} \xrightarrow{~~\text{d}~~} N(0, U_0(s_j)). 
$$

Next, we extend the results under PH models to general semiparametric transformation models. 
Under the Weibull mixture model, we have the following relationship between $H$ and the conditional survival function $S_{\widetilde{y}|\z}$, 
\begin{eqnarray}
\label{model:survival}
\begin{aligned}
S_{\widetilde{y}|\z}(t) &= \sum_{l=1}^L p_{l0} \exp\left(-\left\{\frac{H(s)\exp(-\bbeta^T \z)}{\psi_{l0}}\right\}^{\nu_{l0}}\right) \\
&\equiv \sum_{l=1}^L p_{l0} S_l(s)^{\exp(-\nu_{l0} \bm{\beta}^T \bm{z})}, ~~\text{where}~ S_l(s) = \exp\left\{-\left(\frac{H(s)}{\psi_{l0}}\right)^{\nu_{l0}}\right\}. 
\end{aligned}
\end{eqnarray}
Looking into each $S_l(t)$ for $l=1, \ldots, L$ separately, one finds that $S_l(t)$ is the conditional survival function of a Cox's model with reregssion coefficient $\nu_l \bm{\beta}$. 
Hence, we conclude that with the Weibull mixture prior for $f_\xi$, the conditional survival model for $T$ becomes a mixture of PH models. 

For the $l$th component in the mixture of PH models, the baseline cumulative hazard function is 
$
\{H(s)/\psi_{l0}\}^{\nu_{l0}}.
$
Given that $H(t)$ are sampled from a L\'{e}vy prior process, 
$
\{H(s)/\psi_{l0}\}^{\nu_{l0}}
$
is also sampled from a L\'{e}vy process (subordinator). 
Thus, to prove the Bernstein-von Mises results, we only need to verify the conditions (C1) and (C2) in \cite{kim2006bernstein} for $
\{H(s)/\psi_{l0}\}^{\nu_{l0}}.
$

Under the Gamma process prior $\Gamma\mathcal{P}(c(t), b)$ in Proposition \ref{prop:SplineOrdinator}, the L\'{e}vy measure of the prior L\'{e}vy process for $H(s)$ is 
$$
\mu_H(ds, dx) = x^{-1} \exp(-x/b)dxdc(s), ~ s,  x \in (0, +\infty). 
$$
Since $
\{H(s)/\psi_l\}^{\nu_l}
$
is a point-wise transformation on $H(s)$, 
the prior process for $
\{H(s)/\psi_{l0}\}^{\nu_{l0}}
$
share the same Poisson random measure as that of $H(s)$. 
Let  $x(u) = \psi_{l0} u^{1/\nu_{l0}}$ be the inverse map from $H(s)$ to $
\{H(s)/\psi_{l0}\}^{\nu_{l0}}
$ for any fixed $s$, where $u$ is a placeholder. 
The Levy measure of the prior Levy process for $
\{H(s)/\psi_l\}^{\nu_l}
$
is then given by 
\begin{align*}
\mu_{(H, \psi_{l0}, \nu_{l0})}(du, dt) &= \{bx(u)\}^{-1} \exp(-x(u)/b)dx(u) d \{(c(s) /\psi_{l0})^{\nu_{l0}}\}\\
& = \frac{\exp(-\psi_{l0} u ^{1/\nu_{l0}}/ b)}{b\nu_{l0} u}dud \{(c(s) /\psi_{l0})^{\nu_{l0}}\}\\
& \equiv \frac{f(u)}{u} du \lambda(s) ds,  
\end{align*}
where $f(u) = (b\nu_{l0})^{-1}\exp(-\psi_{l0} u ^{1/\nu_{l0}}/ b)$. 
Let $g_s(u) = f(u) / \lambda(s)$. 
Then following \cite{kim2006bernstein}, we need to verify:\\
$i)$, there exists a positive constant $h$ such that 
$$\sup \limits_{s \in [0, \tau], u \in (0, \infty)} (1-u)^{h} g_s(u) < \infty. $$ 
$ii)$, there exits a function $0 < \inf \limits_{s \in [0, \tau]} \gamma(s) < \sup\limits_{s \in [0, \tau]} \gamma(s) < \infty$, such that 
for some $m > 1/2$ and $0 < M < \infty$, 
$$
\sup\limits_{s \in [0, \tau], u \in (0, M)} \left|\frac{g_s(u) - \gamma(s)}{u^m}\right| < \infty. 
$$

Suppose that $ 0 <dc(s) < \infty$ for $s \in [0, \tau]$. 
This can be easily obtained if $(r, a, b, \tau)$ are finite. 
Given that $(\psi_{l0}, \nu_{l0})$ are finite positive constants,  condition $i)$ is verified since for any $s \in [0, \tau]$ and fixed positive constant $h$,  $\lim\limits_{u \to \infty}(1-u)^h g_s(u) \to 0$. 
Condition $ii)$ is verified by selecting a continuous function $\gamma(s)$ such that $\gamma(0) = g_0(0)$. 

Now we are in the position to give the Bernstein-von Mises results. 
Let $\Lambda_{l0}$ be the ``true" baseline cumulative hazard function of the $l$th component of the mixture of PH models under the prespecified ground truth $H_0$. 
That is, 
$$
\Lambda_{l0}(s) = \left\{\frac{H_0(s)}{\psi_{l0}}\right\}^{\nu_{l0}}. 
$$
Suppose that there are $n_l$ data from the $l$th PH component. 
Condition (B4) guarantees that $n_l \to \infty$ as $n \to \infty$. 
Without loss of generality, assume that $\widetilde{y}_{l(1)} < \cdots < \widetilde{y}_{l(n_l)}$ be the ordered sequence transformed responses from the $l$th PH component. 
We define 
\begin{align*}
U_l(s) = \int_0^t \frac{d\Lambda_
	{l0}(s)}{S_l^0(s:\bbeta_0)}, ~~
S_l^0(s:\bbeta) = E(\exp(-\nu_{l0}\bbeta^T \z_{l(1)}) I \{\widetilde{y}_{l(1)} > s)\}), 
\end{align*}
where $\z_{l(1)}$ and $\widetilde{y}_{l(1)}$ and the covariate vector and the transformed response of the $l$th PH component. 
Hence, the Bernstein-von Mises result holds such that
$$
\pi(\sqrt{n_l} (\Lambda_{l}(s_j) - \hat{\Lambda}_l(s_j))|\mathcal{D}) \xrightarrow{~~\text{d}~~~} N(0, U_l(s_j)), 
$$
where $\Lambda_l = (H/\psi_{l0})^{\nu_{l0}}$, and $\hat{\Lambda}_l$ is the corresponding nonparametric MLE. 
Note that under this semiparaemtric setting, $\hat{\Lambda}_l$ converges to the true cumulative hazard 
$
\Lambda_{l0}(t)
$
uniformly \citep{zeng2006efficient}. 
Then by employing the delta method we have, for the $l$th PH component, at each fixed $s_j$,
$$
\pi[\sqrt{n_l} \{\psi_{l0} \Lambda_l^{1/\nu_{l0}}(s_j) - H_0(s_j)\}|\mathcal{D}, \bm{p}_0, \bm{\psi}_0, \bm{\nu}_0] \xrightarrow{~~\text{d}~~~} N\{0, (\psi_{l0}/\nu_{l0})^2H_0(s_j)^{2/\nu_{l0} -2} U_l(s_j)\}.  
$$

Finally, we aggregate all the $L$ components. 
Note that $n_l/n \to p_{l0}$ as $n \to \infty$. 
Hence, we obtain
$$
\pi[\sqrt{n} \{H(s_j) - H_0(s_j)\}|\mathcal{D}, \bm{p}_0, \bm{\psi}_0, \bm{\nu}_0] \xrightarrow{~~\text{d}~~~} \sum_{l=1}^L p_{l0} N(0, p_[l0]^{-1} (\psi_{l0}/\nu_{l0})^2H_0(s_j)^{2/\nu_{l0} -2}U_l(s_j)).  
$$
\end{proof}

\subsection{Proof of Theorem 2}
Recall that specifying $F_{\xi_0}$ is equivalent to specifying the parameters $(\bm{p}_0, \bm{\psi}_0, \bm{\nu}_0)$. 
Consequently, Theorem 2 can be obtained from Theorem 1 by integrating out $(\bm{p}_0, \bm{\psi}_0, \bm{\nu}_0)$ step by step. 
Without loss of generality, we assume that $\bm{p}_0$ is in an decreasing order. 
Any permutation of the DPM index will not change the result on $H$. 

\begin{proof}
\noindent{\textbf{Step 1}}\\
By the total variance formula, we have 
$$
\mathbb{V}(H(s_j)|\mathcal{D}, \bm{\psi}_0, \bm{\nu}_0) = \mathbb{E}\{\mathbb{V}(H(s_j)|\mathcal{D}, \bm{p}_0,  \bm{\psi}_0, \bm{\nu}_0)\} + \mathbb{V}\{E(H(s_j)|\mathcal{D}, \bm{p}_0,  \bm{\psi}_0, \bm{\nu}_0)\}.
$$
Based on Theorem 1, we obtain that
\begin{align}
\mathbb{V}(H(s_j)|\mathcal{D}, \bm{p}_0, \bm{\psi}_0, \bm{\nu}_0) &\to n^{-1}\left(\sum_{l=1}^L p_{l0} \psi_{l0}^2 H_0(s_j)^{2/\nu_{l0}} U_l(s_j)\right) \label{E variance}, \\
\mathbb{E}(H(s_j)|\mathcal{D}, \bm{p}_0,  \bm{\psi}_0, \bm{\nu}_0) &=  H_0(s_j)| \bm{p}_0, \bm{\psi}_0, \bm{\nu}_0. \label{Var E} 
\end{align}
Taking expectation with respect to $\bm{p}$ on the right-hand side of \eqref{E variance}, we obtain that 
$$
\mathbb{E}\{\mathbb{V}(H(s_j)|\mathcal{D}, \bm{p}_0, \bm{\psi}_0, \bm{\nu}_0)\} \to n^{-1}\mathbb{E}_{\bm{p}}\left(\sum_{l=1}^L p_{l0} \psi_{l0}^2 H_0(s_j)^{2/\nu_{l0}} U_l(s_j)\right) \equiv n^{-1} q_j(\bm{\psi}_0, \bm{\nu}_0). 
$$
For the right-hand side of \eqref{Var E}, note that $ H_0$ is uniquely specified once $(\bm{p}_0, \bm{\psi}_0, \bm{\nu}_0)$ are specified. 
Consequently, we have 
$$
\mathbb{V}\{\mathbb{E}(H(s_j)|\mathcal{D}, \bm{p}_0,  \bm{\nu}_0, \bm{\psi}_0)\} =  0. 
$$
\noindent{\textbf{Step 2}}\\
Again employing the total variance formula, we have 
\begin{align*}
\mathbb{V}(H(s_j)|\mathcal{D}, \bm{\nu}_0) &= \mathbb{E}\{\mathbb{V}(H(s_j)|\mathcal{D}, \bm{\psi}_0, \bm{\nu}_0)\} + \mathbb{V}\{\mathbb{E}(H(s_j)|\mathcal{D}, \bm{\psi}_0, \bm{\nu}_0)\}\\
& = n^{-1}\mathbb{E}_{\bm{\psi}_0}q_j(\bm{\psi}_0, \bm{\nu}_0) + \mathbb{V}_{\bm{\psi}_0}(H_0(s_j)|\bm{\nu}_0, \bm{\psi}_0) \\
&\equiv
n^{-1}q_j + \mathbb{V}(H_0(s_j)|\bm{\nu}_0).
\end{align*}
The remaining is to derive $\mathbb{V}(H_0(s_j)|\bm{\nu}_0)$, the variance of ``ground truth" of $H_0(s_j)$. 
Recall that the sample space of ``true" model parameters $(H_0(s_j), \bm{\psi}_0)$ given $\bm{\nu}_0$ is
$$
\Omega(H_0(s_j), \psi_{10}, \ldots, \psi_{L0}) = \{(H_0(s_j), c_{1j}^{1/\nu_{10}}H_0(s_j), \ldots, c_{Lj}^{1/\nu_{L0}}H_0(s_j))\}. 
$$
Define the function $\mathcal{X}: \Omega \to \mathbb{R}_+$ such that $\mathcal{X}\{H_0(s_j), \bm{\psi}_0\} = H_0(s_j)$. 
Obviously, $\mathcal{X}$ is a one-to-one map and thus $\mathcal{X}$ is a random variable ($\mathcal{X}$ is a measurable map). 
Now, we only need to specify the density of $\mathcal{X}$, denoted as $f_{\mathcal{X}}$. 

Base on the definition of $\mathcal{X}$, for $x \in \mathbb{R}_+$, we have 
$$
f_{\mathcal{X}}(x) \propto f_{H_0(s_j)}(x) \prod_{l=1}^L f_{\psi_l}(c_{lj}^{1/\nu_{l0}} x) . 
$$
At its margin, $H_0(s_j)$ is fully determined by the quantile-knot I-sline model. 
Given that the I-spline coefficients $\alpha_j \sim \exp(\eta)$ i.i.d., one can obtain the marginal distribution of $H_0(s_j)$. 
By the Proposition \ref{property: croos spline2}, we have, in priori, 
$$
H_0(s_j) = \sum_{j'=1}^j \alpha_j + \sum_{j'1=1}^r w_{j'}\alpha_{j + j'} 
$$
Hence, given that $\pi(\alpha_j) = \text{Exp}(\eta)$, at the margin of $H_0(s_j)$, the distribution is 
$\pi\{H_0(s_j)\} = \text{Exp}(\eta/(j+ \sum_{j'=1}^r w_{j'}) )$. 

Note that on each margin of $\psi_l$, $\pi(\psi_l) = \text{Exp}(\zeta)$. 
Therefore, we have 
\begin{eqnarray}
\label{density: H and psi}
\begin{aligned}
f_{\mathcal{Y}}(y) &\propto \exp\left(- \frac{y \eta}{j+ \sum_{j'=1}^r w_{j'}}\right)\prod_{l=1}^L \exp(-y\zeta c_{lj}^{1/\nu_l})  \\
& = \exp\left[-y\left\{ \zeta \sum_{l=1}^L c_{lj}^{1/\nu_l} + \frac{\eta}{j+ \sum_{j'=1}^r w_{j'}}
	\right\} \right]. 
\end{aligned}
\end{eqnarray}
Therefore, we have 
$$
\mathbb{V}(H_{0}(s_j)|\bm{\nu}_0) = \mathbb{V}(\mathcal{X})=\left\{\zeta \sum_{l=1}^L c_{lj}^{1/\nu_l} + \eta/(j+ \sum_{j'=1}^r w_{j'}) \right\}^{-2}. 
$$

% \begin{align}
%     \mathbb{V}(H_{0}(s_j)|\bm{\nu}) &= (\zeta \sum_{l=1}^L c_{lj}^{\nu_l} + \eta/(j+ \sum_{j'=1}^r w_{j'}) )^{-2}, \label{variance given nu}\\
%     E(H_0(s_j)|\bm{\nu}) &= (\zeta \sum_{l=1}^L c_{lj}^{\nu_l} + \eta/(j+ \sum_{j'=1}^r w_{j'}) )^{-1} \label{expectation given nu}. 
% \end{align}
% E.q. \eqref{variance given nu} completes the proof. 

The next step is to integrate $\bm{\nu}_0$ out. 
We start from a trivial proposition on parameters $\bm{\nu}_0$ of  ``true" $f_\xi$. 
\begin{proposition}
\label{prop: nu}
Under the conditions of Theorem 1, suppose exists a ``true" value of $f_{\xi 0^*}$ with parameters $(\nu_{10}, \ldots, \nu_{L0})$.
Then for all ``true" $f_{\xi0}$, the parameters $\bm{\nu}_0$ are of the form 
$\{\bm{\nu} = (c\nu_{10}, \ldots, c\nu_{L0})\}$, where $c$ is an arbitrary positive constant. 
\end{proposition}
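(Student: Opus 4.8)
\medskip\noindent\textbf{Proof strategy.} The plan is to read the statement off the indeterminacy set already recorded before \eqref{G_xi} in Section~\ref{sec: nonparametricposteriorinference}. If $(H_0,\bbeta_0,F_{\xi_0})$ solves the working model \eqref{expmod1}, then so does every member of $\mathcal{C}\{(H,\bbeta,F_\xi)\}=\{(c_1H_0^{c_2},\,c_2\bbeta_0,\,F_{c_1\xi_0^{c_2}})\}$, $(c_1,c_2)\in\mathbb{R}_+^2$, and by the identification theory for linear transformation models \citep[pp.~105]{horowitz1996semiparametric}, transported through the exponential reparametrization of Section~\ref{subsec: ModTrans}, these are the \emph{only} solutions. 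So it suffices to track how the fixed monotone map $\xi\mapsto c_1\xi^{c_2}$ acts on the shape parameters of a Weibull mixture.

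First I would do the one-component change of variables. If $X\sim\mathrm{Weibull}(\psi,\nu)$, i.e.\ $\Pr\{X\le x\}=1-\exp\{-(x/\psi)^{\nu}\}$, then for $Y=c_1X^{c_2}$ a direct computation gives $\Pr\{Y\le y\}=1-\exp[-\{y/(c_1\psi^{c_2})\}^{\nu/c_2}]$, so $Y\sim\mathrm{Weibull}(c_1\psi^{c_2},\,\nu/c_2)$. Because $\xi\mapsto c_1\xi^{c_2}$ is applied to the whole mixture, it acts componentwise and leaves the stick-breaking weights untouched: $\sum_{l=1}^L p_l\,\mathrm{Weibull}(\psi_l,\nu_l)$ is sent to $\sum_{l=1}^L p_l\,\mathrm{Weibull}(c_1\psi_l^{c_2},\,\nu_l/c_2)$. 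In particular the entire shape vector is rescaled by the common factor $1/c_2$, and — fixing, as in the proof of Theorem~\ref{theo: posterior variance}, a decreasing ordering of the weights, which the map preserves — component $l$ is carried to component $l$, so no relabelling is introduced.

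Combining the two pieces closes the argument: given one admissible ``true'' error law with shape vector $(\nu_{10},\dots,\nu_{L0})$, every other admissible $f_{\xi_0}$ is the law of $c_1(\xi_0^{*})^{c_2}$ for some $(c_1,c_2)\in\mathbb{R}_+^2$, hence has shape vector $(\nu_{10}/c_2,\dots,\nu_{L0}/c_2)$; putting $c=1/c_2>0$ yields exactly the claimed form $\bm{\nu}=(c\nu_{10},\dots,c\nu_{L0})$. I would also note in passing that the image law is still a Weibull-kernel mixture with $L$ components, so it stays within the class allowed by condition (A4), and that this is equivalent to the ratio statement $\nu_{l0}/\nu_{10}=r_l$ used elsewhere.

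The step I expect to be the main obstacle — and would spell out most carefully — is the \emph{completeness} of the indeterminacy set $\mathcal{C}$: that the pair $(c_1,c_2)$ exhausts all parameter configurations inducing the same conditional law $F_{y\mid\bm{z}}$, so that no ``true'' $F_{\xi_0}$ lies outside the orbit above. This reduces to the classical ``identification up to location and scale'' result for transformation models, pushed through the $\exp(h^\ast)$ reparametrization, together with the fact that, once the weights are ordered, a finite Weibull mixture determines its shape vector up to label permutation, so that ``the parameters $\bm{\nu}_0$'' is unambiguous. The remaining ingredients are the elementary change of variable above and require no further care.
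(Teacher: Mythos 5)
Your proposal is correct and follows essentially the same route as the paper's own proof: both read the claim off the indeterminacy orbit $\mathcal{C}\{(H,\bbeta,F_\xi)\}=\{(c_1H_0^{c_2},c_2\bbeta_0,F_{c_1\xi_0^{c_2}})\}$ and observe via the change of variables that $c_1\xi^{c_2}$ carries $\mathrm{Weibull}(\psi_l,\nu_l)$ to $\mathrm{Weibull}(c_1\psi_l^{c_2},\nu_l/c_2)$ componentwise, so the shape vector is rescaled by the common factor $c=1/c_2$. The only difference is that you spell out the completeness of the orbit and the label-ordering issue, which the paper leaves implicit (it simply fixes $c_1=1$ and calls the mixture computation trivial).
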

\begin{proof}
Recall that the recast NTM holds on the set $C\{(H, \bbeta, \xi)\} = \{c_1H_0^{c_2}, c_2\bbeta_0, c_1 \xi_0^{c_2}\}$. 
Without loss of generality, we fix $c_1 = 1$. 
Consider the Weibull mixture model
$$
f_{\xi_0}(t) = \sum_{l=1}^L p_l f_w(t| \psi_{l0}, \nu_{l0}).
$$
It is trivial to show that 
$
f_{\xi_0^{c_2}}(t) = \sum_{l=1}^L p_l f_w(t| \psi_{l0}^{c_2}, \nu_{l0}/c_2).
$

\end{proof}
Based on Proposition \ref{prop: nu}, one can conclude that 
\begin{itemize}
\item For all ``true" $f_{\xi0}$, for $l=2, \ldots, L$, the ratio $\nu_{l0}/\nu_{10}$ is fixed, denoted as $r_{l}$. 
Specifically, we note $r_{1} = 1$. 
\item There exists a ``true" $f_{\xi0}$ such that $\nu_{10} = 1$. 
Conditional on $\bm{\nu}_0 = (1, \ldots, \nu_{L0})$, the constants $c_{lj}$ in Theorem 1 are uniquely specified. 
\end{itemize}

\noindent{\textbf{Step 3}}\\
Again by the total variance formula we have 
$$
\mathbb{V}(H(s_j)|\mathcal{D}) = \mathbb{E}_{\bm{\nu}_0}\{\mathbb{V}(H(s_j)|\mathcal{D}, \bm{\nu}_0)\} +  \mathbb{V}_{\bm{\nu}_0}\{\mathbb{E}(H(s_j)|\mathcal{D}, \bm{\nu}_0)\}. 
$$
By Theorem 1 and \eqref{density: H and psi}, we have 
\begin{align*}
\mathbb{V}(H(s_j)|\mathcal{D}, \bm{\nu}_0)  &= \mathbb{V}(H_0(s_j)|\bm{\nu}) + O(n^{-1}) = \left(\zeta \sum_{l=1}^L c_{lj}^{1/\nu_{l0}} + \frac{\eta}{j+ \sum_{j'=1}^r w_{j'}}\right)^{-2} + O(n^{-1}),\\
\mathbb{E}(H(s_j)|\mathcal{D}, \bm{\nu}_0) &= H_0|\bm{\nu}_0 = \mathbb{E}(\mathcal{Y})=\left(\zeta \sum_{l=1}^L c_{lj}^{1/\nu_{l0}} + \frac{\eta}{j+ \sum_{j'=1}^r w_{j'}} \right)^{-1}. 
\end{align*}
Note that for ``true" $f_{\xi0}$, $\bm{\nu}_0$ are constrained on the subset of $\mathbb{R}^L$
$$
\{(\nu_{10}, \ldots, \nu_{L0})\} = c(1, r_2=, \ldots, r_L) \equiv(\nu_{10}, r_2\nu_{10}, \ldots, r_K \nu_{10}).   
$$
Meanwhile, based on the definition of $d_{lj}$, for any $\nu_{10}$, we have  we have $\psi_{l0}/H_0(s_j) = c_{lj}^{1/(r_l \nu_{10})}$. 
Consequently, we have the following unified expression 
\begin{align*}
\mathbb{V}(H(s_j)|\mathcal{D}, \nu_{10})  &= \left(\zeta \sum_{l=1}^L c_{lj}^{
	\frac{1}{r_l \nu _{10}}} + \frac{\eta}{j+ \sum_{j'=1}^r w_{j'}} \right)^{-2} + O(n^{-1}), \\
E(H(s_j)|\mathcal{D}, \nu_{10}) &= \left(\zeta \sum_{l=1}^L c_{lj}^{\frac{1}{r_l \nu_{10}}} + \frac{\eta}{j+ \sum_{j'=1}^r w_{j'}} \right)^{-1}. 
\end{align*}
Taking expectation and variance on the above two terms respectively completes the proof. 

\end{proof}

\subsection{Proof of Theorem 3}
\begin{proof}
Let $\bm{\theta} = (\bm{\alpha}, \bbeta, \bm{p}, \bm{\psi}, \bm{\nu})$ and $\pi(\bm{\theta})$ be their priors. 
To show the posterior $\pi(\theta|\mathcal{D})$ is proper is equivalent to show that $\int_{\bm{\Theta}} \pi(\bm{\theta}|\mathcal{D})d\bm{\theta} < \infty$, where $\bm{\Theta}$ is the domain of $\bm{\theta}$.

Let $B_j$ be the I-splines functions, for $j=1, \ldots, K$. 
Let $f_w\{\cdot; \psi_{l}, \nu_{l}\}$ be the Weibull PDFs with parameters $\psi_{l}$ and $\nu_{l}$, for $l = 1, \ldots, L$. 
Based on our BNP elicitation, we have 
\begin{align*}
\mathcal{L}(\bm{\theta})&= \prod_{i=1}^{n} f_\xi\{H(\widetilde{y}_i) \exp(-{\bbeta}^T \z_i)\}H'(\widetilde{y}_i)\exp(-{\bbeta}^T \z_i)\\
&= \prod_{i=1}^{n}  \sum_{j=1}^{K}\{\alpha_j B_j'(\widetilde{y}_i) \exp(-{\bbeta}^T \z_i)\sum_{l=1}^{L} p_{l} f_w\{\exp(-{\bbeta}^T \z_i)\}\sum_{j=1}^{K}\alpha_j B_j(\widetilde{y}_i); \psi_{l}, \nu_{l}\}. 
\end{align*}
In the right-censored case, let $n_1$ be the number of uncensored observations. 
We have 
\begin{align*}
\mathcal{L}(\bm{\theta}) < {\mathcal{L}}^*(\bm{\theta}) &\equiv \prod_{i=1}^{n_1} f_\xi\{H(\widetilde{y}_i) \exp(-{\bbeta}^T \z_i)\}H'(\widetilde{y}_i)\exp(-{\bbeta}^T \z_i)\\
&= \prod_{i=1}^{n_1}  \sum_{j=1}^{K}\alpha_j B_j'(\widetilde{y}_i) \exp(-{\bbeta}^T \z_i)\sum_{l=1}^{L} p_{l} f_w\{\exp(-{\bbeta}^T \z_i)\sum_{j=1}^{K}\alpha_j B_j(\widetilde{y}_i); \psi_{l}, \nu_{l}\} 
\end{align*}
Thus, the proof in the complete data and right-censored data scenarios is just the same as the complete data, except for replacing the data size $n$ to $n_1$.

By condition (B1), we first integrate out all $p_l$ and it remains to show that 

\begin{align*}
\mathcal{A}_{l}=&\int_{{\bm{\Theta}_{-p_{l}}}}\prod_{i=1}^{n}  \left[\exp(-{\bbeta}^T \z_i)f_w\left\{\exp(-{\bbeta}^T \z_i)\sum_{j=1}^{K}\alpha_j B_j(\widetilde{y}_i); \psi_{l}, \nu_{l}\right\}\sum_{j=1}^{K}\alpha_j B_j'(\widetilde{y}_i)\right]\\
&\times p(\bm{\theta}_{-p_l})d\bm{\theta}_{-p_l} < \infty, 
\end{align*}
for all $l$, where $\bm{\theta}_{-p_{l}}$ denotes all parameters except $p_{l}$ and ${\bm{\Theta}_{-p_{l}}}$ denotes corresponding domains. 

Let $\bm{\phi}_i = (B_1'(\widetilde{y}_i), \cdots, B_K'(\widetilde{y}_i))^T$ and $\bm{\Phi}_i = (B_1(\widetilde{y}_i), \cdots, B_K(\widetilde{y}_i))^T$.
Let 
\begin{align*}
M_1^* &= \max\{\max(\bm{\phi}_1), \ldots, \max(\bm{\phi}_{n_1})\}\\
M_2^* & = \min\{\min(\bm{\Phi}_1^+), \ldots, \min(\bm{\Phi}_n^+)\}, 
\end{align*}
where $\bm{\Phi}_i^+$ denotes the set of positive entries in $\bm{\Phi}_i$. 
By the definition of I-spline functions, if $B_j(s) = 0$, $B_j'(s) = 0$ too. 
Hence, we have, for any $\bm{\alpha}$, 
$$
0< \frac{\bm{\alpha^T} \bm{\phi}_i}{\bm{\alpha^T} \bm{\Phi}_i} \le  \frac{M_1^*}{M_2^*} \equiv M_0. 
$$
Thus, we have 
\begin{align}
\label{inequality: M0}
\mathcal{A}_l \le M_0^{n} \int_{\bm{\Theta}_{-p_l} }\prod_{i=1}^{n}  \left[\exp(-{\bbeta}^T \z_i) \bm{\alpha}^T \bm{\Phi}_i f_w\left\{\exp(-{\bbeta}^T \z_i)\bm{\alpha}^T \bm{\Phi}_i; \psi_{l}, \nu_{l}\right\}\right] p(\bm{\theta}_{-p_l})d\bm{\theta}_{-p_l}. 
\end{align}

By condition (B4), we can find $p$ observations such that the corresponding $p \times p$ submatrix of covariates, with each row being the vector of covariates of one observation, is full rank. 
Let $\bm{z}^*$ denote that full rank $p$ matrix and let $\bm{\gamma} = -\bm{z}^*\bbeta = (\gamma_1, \cdots, \gamma_p)^T$. 
Note that by condition (B3), for all $x\in \mathbb{R}$, we can find a constant $M_1< \infty$ such that
\begin{align*}
\exp(x)f_w\{\exp(x)\bm{\alpha}^T\bm{\phi}_i; \psi_{l}, \nu_{l}\}\bm{\alpha}^T\bm{\Phi}_i
\le M_0 \{\exp(x)\bm{\alpha}^T\bm{\phi}_i\} f_w\{\exp(x)\bm{\alpha}^T\bm{\phi}_i; \psi_{l}, \nu_{l}\} < M_1. 
\end{align*}
Therefore, from inequality \eqref{inequality: M0}, we further have 
\begin{align*}
\mathcal{A}_l \le M_0^{n} M_1^{n - p}\int_{\bm{\Theta}_{-p_l}}  \prod_{h=1}^p  \left[\exp(\gamma_h) \bm{\alpha}^T \bm{\Phi}_i f_w\left\{\exp(\gamma_h)\bm{\alpha}^T \bm{\Phi}_i; \psi_{l}, \nu_{l}\right\}\right] p(\bm{\theta}_{-p_l})d\bm{\theta}_{-p_l}.
\end{align*}

Since $\bm{z}^*$ is a one-on-one linear operation of $\bbeta$, the integrand $\bbeta$ can be transferred to $\bm{\gamma} = (\gamma_1, \ldots, \gamma_p)^T$. 
Therefore, there exists a finite constant $ M_2$ such that 
\begin{align*}
\mathcal{A}_l &\le M_0^{n} M_1^{n - p} M_2 \int_{\bm{\Theta}_{-\{p_l, \bm{\beta}\}}}p(\bm{\theta}_{\{-p_l, \bm{\beta}\}})d\bm{\theta}_{\{-p_l, \bm{\beta}\}}\int_{\mathbb{R}^p} \prod_{h=1}^p  \left[\exp(\gamma_h) \bm{\alpha}^T \bm{\Phi}_i f_w\left\{\exp(\gamma_h)\bm{\alpha}^T \bm{\Phi}_i; \psi_{l}, \nu_{l}\right\}\right] \\
&\quad \quad \quad \times d\gamma_1\ldots d\gamma_p \\   
& \le M_0^{n} M_1^{n - p} M_2 \int_{\bm{\Theta}_{-\{p_l, \bm{\beta}\}}}p(\bm{\theta}_{\{-p_l, \bm{\beta}\}})d\bm{\theta}_{\{-p_l, \bm{\beta}\}} \prod_{h=1}^p \int_{-\infty}^{+\infty}\left[\exp(\gamma_h) \bm{\alpha}^T \bm{\Phi}_i f_w\left\{\exp(\gamma_h)\bm{\alpha}^T \bm{\Phi}_i; \psi_{l}, \nu_{l}\right\}\right]\\
& \quad \quad \quad \times d\gamma_1\ldots d\gamma_p \\   
& = M_0^{n} M_1^{n - p} M_2 \int_{\bm{\Theta}_{-\{p_l, \bm{\beta}\}}}p(\bm{\theta}_{\{-p_l, \bm{\beta}\}})d\bm{\theta}_{\{-p_l, \bm{\beta}\}} \prod_{h=1}^p \int_{0}^{+\infty} f_w(u_h; \psi_l, \nu_l)du_1\ldots du_p < \infty. 
\end{align*}
\end{proof}

\section{The DPM model for $S_\xi$}
\label{PriorLambda}
A regular Dirichlet process mixture (DPM) model \citep{lo1984class} is assigned for $S_\xi$, the survival probability function of the positive random variable $\xi$. 
The DPM is a kernel convolution to the Dirichlet process (DP).
We use the stick breaking representation for $G \sim \text{DP}(c, G_0)$ \citep{sethuraman1994constructive}
$$
G(\cdot) = \sum_{l=1}^{\infty} p_{l} \delta_{\theta_{l}}(\cdot), \theta_{l} \sim G_0, p_{l} \sim \text{SB}(1, c)
$$
where $\delta(\cdot)$ is the point mass function, and $\text{SB}$ is the stick-breaking representation. 
We call $G_0$ as the base measure and $c$ as the total mass parameter, acting as the center and precision of the DP, respectively. 

Following the above stick-breaking representation, we construct the truncated DPM priors for $S_\xi$ and $f_\xi$ with the Weibull kernel such that
\begin{align*}
S_\xi = 1- \sum_{l=1}^{L}p_{l} F_w(\psi_{l}, \nu_{l}), f_\xi = \sum_{l=1}^{L}p_{l} f_w(\psi_{l}, \nu_{l}),
p_{l} \sim \text{SB}(1, c), (\psi_{l}, \nu_{l}) \sim G_0, 
\end{align*}
where $L$ is the truncation number, and $F_w$ and $f_w$ denote CDF and density of Weibull distribution, respectively. 
We fix the truncation number $L$ rather than sampling it to simplify computation as a common strategy \citep{rodriguez2008nested}.  
Let $S_\xi^{(\infty)}$ denote the limit of the DPM model, and $S_\xi ^{(L)}$ denote the truncated form.
The truncation number $L$ is generally selected such that the $L_1$ error between the limit form and the truncated form, denoted as $\int_{0}^{+\infty} |S_\xi^{(\infty)}(s) - S_\xi ^{(L)}(s)|ds$, is as small as possible. 
As shown by \cite{ishwaran2002approximate}, this $L_1$ error is bounded by $4n\exp\{-(L-1)/c\}$, where $n$ denotes the sample size. 
In practice, an error bound of $0.01$ is considered to be sufficiently small \citep{ohlssen2007flexible}. 
Since we fix the total mass parameter $c=1$ as a common practice \citep{gelman2013bayesian}, for sample size $n<600$, $L=12$ is a suitable choice of truncation number. 
In our numerical studies, we find that an $L$ in the range of $10-15$ is appropriate to approximate the DPM model well. 
Users of \texttt{BuLTM} are free to adjust the truncation number according to the data size.

\section{Additional simulation results}
\label{sec:addsim}
We report additional simulations here. 
We first introduce the reproducibility of all simulations, and report the results of simulations in highly-censored cases, results of parametric estimation under AFT models, results of effective sample size (ESS) given by \texttt{BuLTM} in simulations, and results of prediction and estimation on data sets with size $100$. 
\subsection{Reproducibility}
\label{subsec:reprod}
This subsection is about details for the reproducibility of our simulation results. 
In all simulations, we run four independent parallel chains in \texttt{BuLTM} as the default setting in \texttt{Stan}. 
The length of each chain is $1500$ with the first $500$ iterations burn-in and we aggregate four chains to obtain a total of $4000$ posterior samples without any thinning. 
In all numerical studies, we set $L=12$ as the truncation number of DPM, $c=1$ for the total mass parameter, and $r=4$ for the order of smoothness of I-spline functions as the default of \texttt{splines2}. 
We configure 4 interior knot series for all simulations settings. 
That is, selecting interior knots from $(20\%, 40\%, 60\%, 80\%)$ percentiles of the observed or uncensored data. 
Throughout all numerical studies, we use the hyperparameter configuration of $(\eta, \zeta, \rho) = (0.01, 0.5, 1)$; refer to Section \ref{subsec: Increasing prior informativeness} for justifications.

The credible interval for $\bm{\beta}$ estimation given by \texttt{BuLTM} is the default central posterior interval in \texttt{Stan}. 
All numerical studies are realized in \texttt{R} version 4.3.0 with \texttt{rstan} version 2.26.4.  
The pointwise bias of \texttt{BuLTM} for parametric estimation should be computed in a different way from usual. 
Among all simulations, we re-scale the mean vector of estimated $\hat{\bbeta}^*$ into a unit vector and then compute the pointwise bias.
Otherwise, the result is surely biased no matter what kind of unit-norm estimator is used. 
The reason is that \texttt{BuLTM} provides an estimate of a unit vector in each replication of simulations, while the element-wise mean of a series of unit vectors is not a unit vector anymore since for unit vectors $v_1, \ldots, v_n \in \text{St}(1, p)$, $||n^{-1}\sum_{i=1}^n v_i||\le 1$ by triangle inequality. 
All the code and data to reproduce the results in the article are collected on GitHub \href{https://github.com/LazyLaker/BuLTM}{https://github.com/LazyLaker/BuLTM}.

\subsection{Detailed simulation settings}
\noindent{\textbf{Setting} (a)}. In this setting, the covariate vector is generated as $\bm{z} \sim \text{MVN}_3(\bm{0}_3, \Sigma_3)$, where $\Sigma_3 \equiv \sigma_{ij} = 0.75^{|i-j|}$ for $i, j = 1, \ldots, 3$.  
We set the true regression coefficients $\bm{\beta} = (\sqrt{3}/3, \sqrt{3}/3, \sqrt{3}/3)$ as a unit-norm vector. 
The transformation $h$ is set as the inverse (signed) Box-Cox function with $\lambda = 0.5$ (the same as \cite{kowal2024monte}). 
In setting (a.1), the model error is generated by $\epsilon \sim N(0, 1)$; 
in setting (a.2), the model error is generated by $\epsilon \sim 0.5N(-0.5, 0.5^2) + 0.5N(1.5, 1^2)$, yielding a bi-modal, asymmetric, and non-central distribution. 

We add two additional simulation settings (a.3) and (a.4). 
Setting (a.3) is the proportional hazard case where $\epsilon$ follows a Gumbel distribution such that 
$F_\epsilon(s) = \exp\{\exp(-s)\}$;
Setting (a.4) is the proportional odds case where $\epsilon$ follows a standard logistic distribution. \\

\noindent{\textbf{Setting } (b)}. In this setting, the first covariate is generated by $z_1 \sim \text{Bernoulli}(0.5)$ independently to the remaining, and the remaining are generated by $(z_2, z_3) \sim \text{MVN}_2(\bm{0}_2, \Sigma_2)$, where $\Sigma_2 \equiv \sigma_{ij} = 0.75^{|i-j|}$ for $i, j = 1, 2$. 
The transformation $h$ is set as 
$$
h(x) = \log\left[(0.8x + \sqrt{x} + 0.825) \{0.5\Phi_{1, 0.3}(x) + 0.5\Phi_{3, 0.3}(x) +C_0 \}\right], 
$$
where $\Phi_{\mu, \sigma}$ denotes the CDF of $N(\mu, \sigma^2)$, and $C_0$ is the constant such that $\exp\{h(0)\} = 0$. 
In Setting (b.1), the model error distribution is set as the standard extreme value distribution, yielding a Cox's proportional hazard model; 
in setting (b.2), the model error is also generated by $\epsilon \sim 0.5N(-0.5, 0.5^2) + 0.5N(1.5, 1^2)$. 
In Setting (b.1), the independent censoring variable is generated by $C = \min \{\text{Exp(1)}, 1.5\}$; 
in setting (b.2), the the independent censoring variable is generated by $C \sim U(1, 3.5)$. \\

\noindent{\textbf{Setting (c)}}. 
In this setting, the covariate vector $\bm{z} = (z_1, z_2)$, where $z_1, z_2 \sim U(-2, 2)$.
The data generation model is given by 
\begin{align*}
h(y) = f_1(z_1) + f_2(z_2) + \epsilon, 
\end{align*}
where $h$ is set as the inverse
(signed) Box-Cox function similar to setting (a), $f_1(x) =-x + \pi \sin(\pi x) $, $f_2(x)= 0.5x + 15\phi (2(x - 0.2)) - \phi(x + 0.4)$, with $\phi$ being the density function of $N(0, 1)$. 
In Setting (c.1), the model error $\epsilon \sim N(0,1)$; in Setting (c.2), the model error $\epsilon \sim 0.5N(-0.5, 0.5^2) + 0.5N(1.5, 1^2)$. 
We employ nonlinear additive covariate transformation for both \texttt{BuLTM} and \texttt{SeBR} such that $f(\bm{z}) = \sum_{j=1}^p f_j(\bm{z}_j)$, where $f_j(s) = \sum_{k=1}^K \beta_{jk} \phi_{jk}(s)$. 
In this setting, we set $\phi_{jk}$ as the B-spline function with 5 interior knots. 

\subsection{More visualizations of hyperparameter tuning }
\label{subsec: Hyperparameter tuning in other data settings}
In this subsection, we examine the proposed sufficient informativeness criterion in other numerical studies. 
As we mentioned in Section 5.1, we will use hyperparameter configuration $(\eta, \zeta, \rho) = (0.01, 0.25, 1)$ as the initial value. 

\noindent{\textbf{Setting (a.1)}}~ 
Figure \ref{zeta_plot_A1} shows that that the within-chain variance exceeds the inverse of prior information level $\mathcal{V}_{s_{j_0}}$. 
Figure \ref{lp_plot_A1} shows that the chains of \texttt{lp\_\_} are well mixed. 
The well mixing is evidenced by $\hat{R} = 1.006$, with an effective sample size (ESS) of $490$. 
Therefore, we conclude that hyperparameter configuration $(\eta, \zeta, \rho) = (0.01, 0.25, 1)$ leads to reliable prediction in setting (a.1). 

\noindent{\textbf{Setting (b.1)}} ~
Figure \ref{zeta_plot_B1} shows that that the within-chain variance exceeds the inverse of prior information level $\mathcal{V}_{s_{j_0}}$. 
Figure \ref{lp_plot_B1} shows that the chains of \texttt{lp\_\_} are well mixed. 
The well mixing is evidenced by $\hat{R} = 1.006$, with an ESS of $503$. 
Therefore, we conclude that hyperparameter configuration $(\eta, \zeta, \rho) = (0.01, 0.25, 1)$ leads to reliable prediction in setting (b.1).

\begin{figure}[!htb]
\centering
\subfigure[]{
	\begin{minipage}[t]{0.45\linewidth}
	\centering
	\includegraphics[width=2.0in]{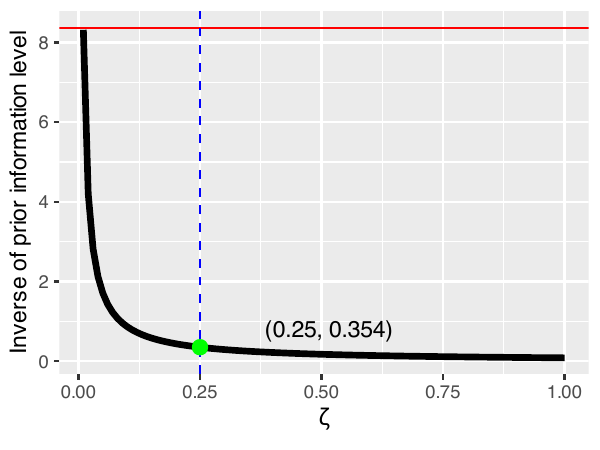}
	\label{zeta_plot_A1}
	\end{minipage}
}
\vspace{-.5cm}
\subfigure[]{
	\begin{minipage}[t]{0.45\linewidth}
	\centering
	\includegraphics[width=2.0in]{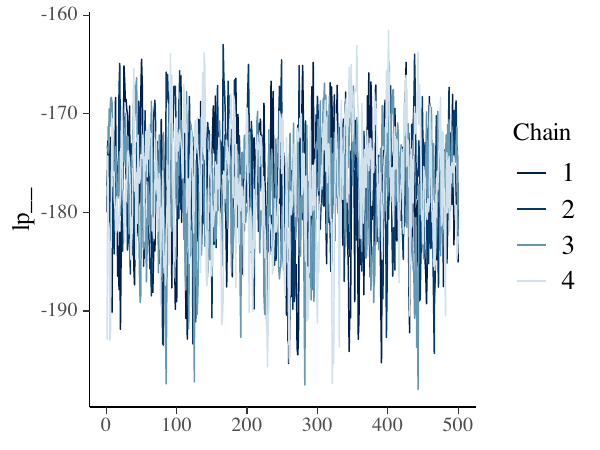}
	\label{lp_plot_A1}
	\end{minipage}
}

\caption{\footnotesize MCMC checking results in setting (a.1) with hyperparameter configuration of $(\eta, \zeta, \rho) = (0.01, 0.25, 1)$. 
	(a), the curve of $\widetilde{\mathcal{V}}_{s_{j_0}}(\eta, \zeta)$ with $\eta = 0.01$ fixed; horizontal line: the within-chain MCMC variance sampled in setting (a.1). (b), trace plot of chains of \texttt{lp\_\_}. }
\label{fig: mixing A1}
\end{figure}

\begin{figure}[!htb]
\centering
\subfigure[]{
	\begin{minipage}[t]{0.45\linewidth}
	\centering
	\includegraphics[width=2.0in]{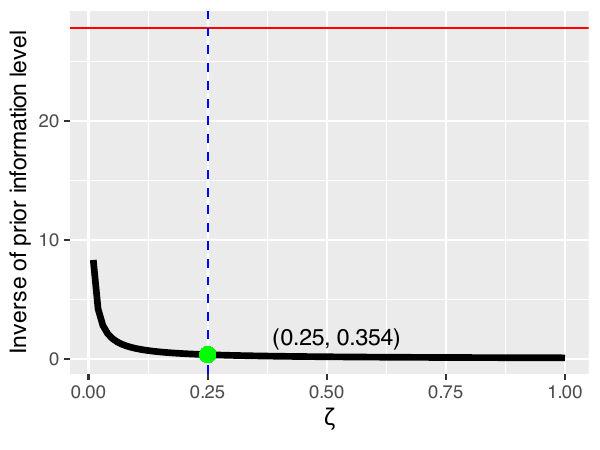}
	\label{zeta_plot_B1}
	\end{minipage}
}
\vspace{-.5cm}
\subfigure[]{
	\begin{minipage}[t]{0.45\linewidth}
	\centering
	\includegraphics[width=2.0in]{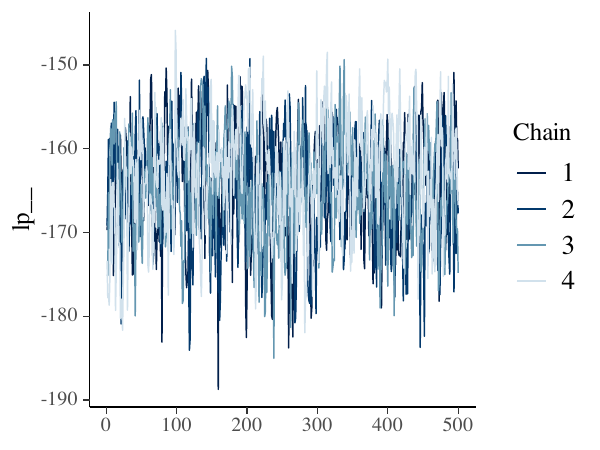}
	\label{lp_plot_B1}
	\end{minipage}
}

\caption{\footnotesize MCMC checking results in setting (b.1) with hyperparameter configuration of $(\eta, \zeta, \rho) = (0.01, 0.25, 1)$. 
	(a), the curve of $\widetilde{\mathcal{V}}_{s_{j_0}}(\eta, \zeta)$ with $\eta = 0.01$ fixed; horizontal line: the within-chain MCMC variance sampled in setting (b.1). (b), trace plot of chains of \texttt{lp\_\_}. }
\label{fig: mixing B1}
\end{figure}

\noindent{\textbf{Setting (b.2)}} ~
Figure \ref{zeta_plot_B2} shows that that the within-chain variance exceeds the inverse of prior information level $\mathcal{V}_{s_{j_0}}$. 
Figure \ref{lp_plot_B2} shows that the chains of \texttt{lp\_\_} are well mixed. 
The well mixing is evidenced by $\hat{R} = 1.010$, with an ESS of $509$. 
Therefore, we conclude that hyperparameter configuration $(\eta, \zeta, \rho) = (0.01, 0.25, 1)$ leads to reliable prediction in setting (b.2).

\begin{figure}[!htb]
\centering
\subfigure[]{
	\begin{minipage}[t]{0.45\linewidth}
	\centering
	\includegraphics[width=2.0in]{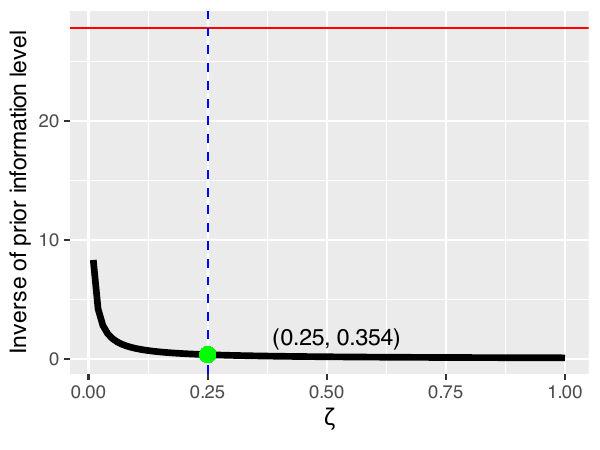}
	\label{zeta_plot_B2}
	\end{minipage}
}
\vspace{-.5cm}
\subfigure[]{
	\begin{minipage}[t]{0.45\linewidth}
	\centering
	\includegraphics[width=2.0in]{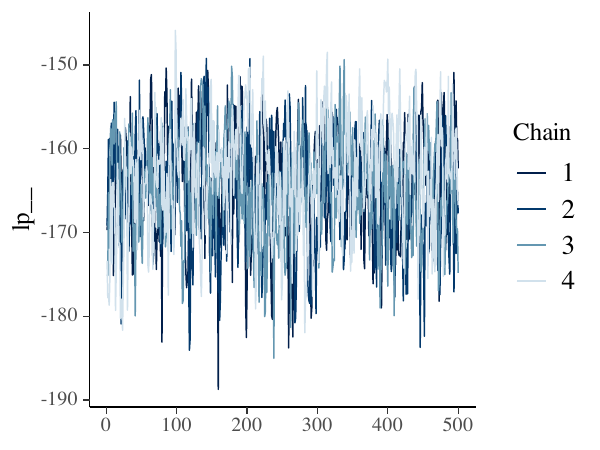}
	\label{lp_plot_B2}
	\end{minipage}
}

\caption{\footnotesize MCMC checking results in setting (b.2) with hyperparameter configuration of $(\eta, \zeta, \rho) = (0.01, 0.25, 1)$. 
	(a), the curve of $\widetilde{\mathcal{V}}_{s_{j_0}}(\eta, \zeta)$ with $\eta = 0.01$ fixed; horizontal line: the within-chain MCMC variance sampled in setting (b.2). (b), trace plot of chains of \texttt{lp\_\_}. }
\label{fig: mixing B2}
\end{figure}

\subsection{Results in other simulation settings}

\noindent{\textbf{Setting (a.3)}}. 
Results under Setting (a.3) are presented in Figure \ref{fig: CTM_Logistic}. 
We find that \texttt{tram} outperforms in RIMSE since it correctly specifies the reference distribution; 
\texttt{SeBR} ranks second with the normal reference distribution, since the shape of the true logistic reference is close to that of normal reference; 
\texttt{BuLTM} ranks third, since the DPM model with Weibull kernel does not approximate the logistic reference well in the tail. 
On the other hand, the proposed \texttt{BuLTM} significantly outperforms all competitors in MAE (one-sided paired t-test $p$-values: $2.57 \times 10^{-8}$ against \texttt{SeBR}, $2.20 \times 10^{-16}$ against \texttt{PTM}). 
These results demonstrate the superiority of \texttt{BuLTM} in estimating the predicted values, or equivalently, the middle of predictive distributions.

\begin{figure}[!htb]
\centering
\subfigure[]{
	\begin{minipage}[t]{0.45\linewidth}
	\centering
	\includegraphics[width=2.0in]{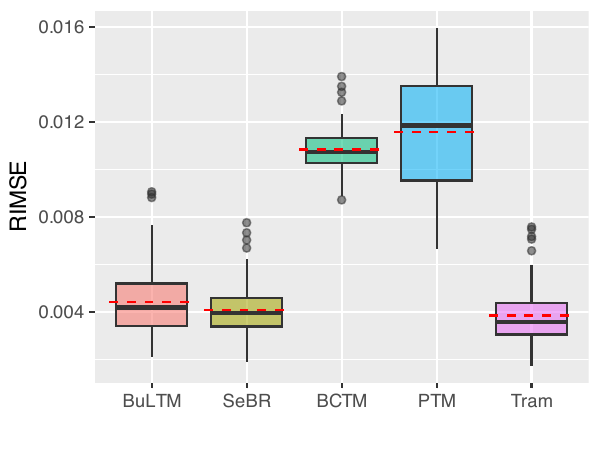}
	\label{RIMSE_logistic_CTM}
	\end{minipage}
}
\vspace{-.5cm}
\subfigure[]{
	\begin{minipage}[t]{0.45\linewidth}
	\centering
	\includegraphics[width=2.0in]{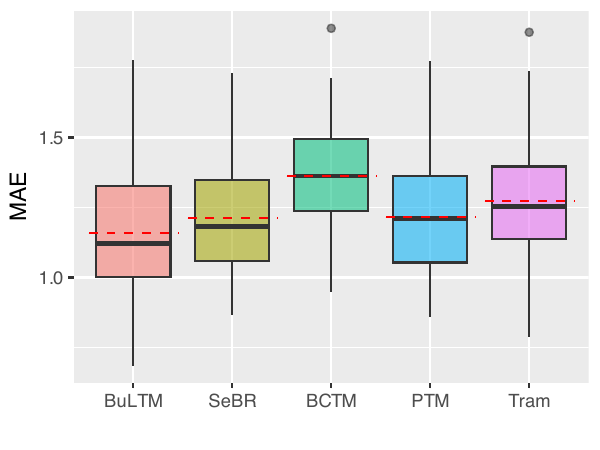}
	\label{MAE_Logistic_CTM}
	\end{minipage}
}

\caption{\footnotesize Box-plots of predictive assessments under Setting (a.3). (a), RIMSE; (b), MAE. }
\label{fig: CTM_Logistic}
\end{figure}

\noindent{\textbf{Setting (a.4)}}. 
Results under Setting (a.4) are presented in Figure \ref{fig: CTM_PH}. 
We find that \texttt{BuLTM} and \texttt{tram} outperform the remaining competitors in the RIMSE, and \texttt{BuLTM} outperforms other competitors in MAE. 
The reason is that the exponential of the extreme value distribution is exactly a Weibull distribution. 
Thus, the DPM model with Weibull kernel characterizes the model error pretty well. 

\begin{figure}[!htb]
\centering
\subfigure[]{
	\begin{minipage}[t]{0.45\linewidth}
	\centering
	\includegraphics[width=2.0in]{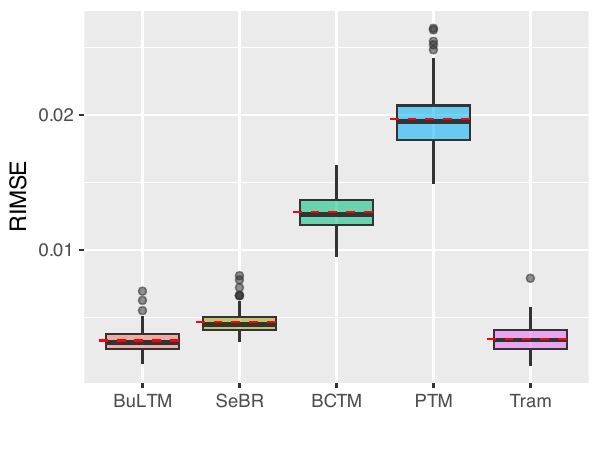}
	\label{RIMSE_PH_CTM}
	\end{minipage}
}
\vspace{-.5cm}
\subfigure[]{
	\begin{minipage}[t]{0.45\linewidth}
	\centering
	\includegraphics[width=2.0in]{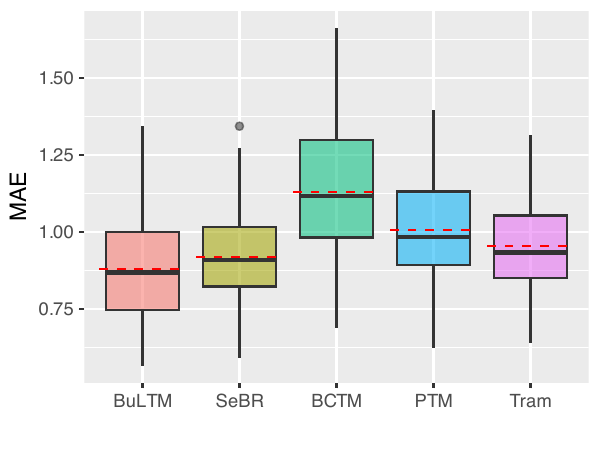}
	\label{MAE_PH_CTM}
	\end{minipage}
}

\caption{\footnotesize Box-plots of predictive assessments under Setting (a.4). (a), RIMSE; (b), MAE. }
\label{fig: CTM_PH}
\end{figure}

\noindent{\textbf{Setting (c.1)}}. 
Results under Setting (c.1) are presented in Figure \ref{fig: CTM_nonlinGauss}. 
We find that \texttt{BuLTM} and \texttt{SeBR} are comparable and outperform the remaining competitors in both RIMSE and MAE. 
It demonstrate the predictive capability of \texttt{BuLTM} in nonlinear cases. 

\begin{figure}[!htb]
\centering
\subfigure[]{
	\begin{minipage}[t]{0.45\linewidth}
	\centering
	\includegraphics[width=2.0in]{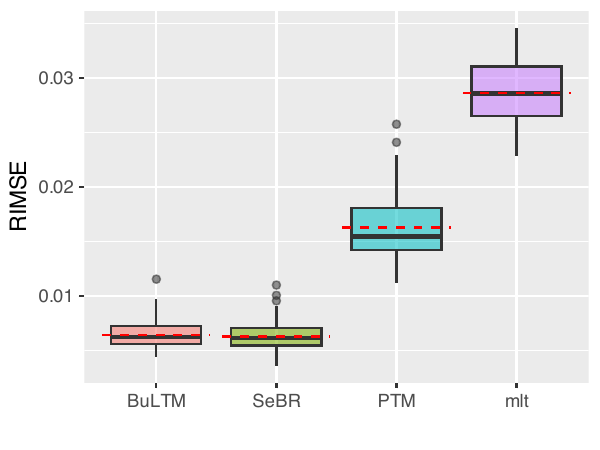}
	\label{RIMSE_nonlinGauss}
	\end{minipage}
}
\vspace{-.5cm}
\subfigure[]{
	\begin{minipage}[t]{0.45\linewidth}
	\centering
	\includegraphics[width=2.0in]{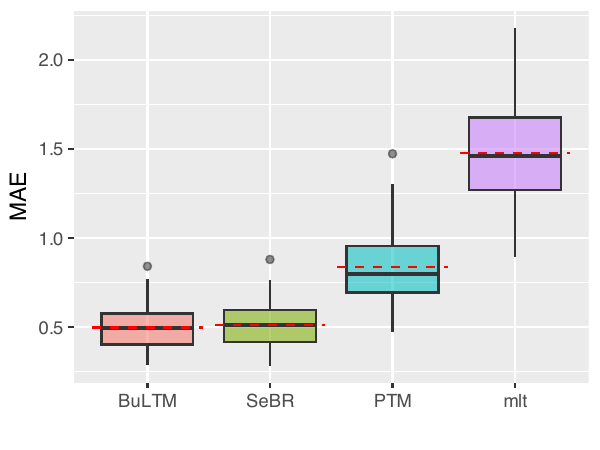}
	\label{MAE_nonlinGauss}
	\end{minipage}
}

\caption{\footnotesize Box-plots of predictive assessments under Setting (c.1). (a), RIMSE; (b), MAE. }
\label{fig: CTM_nonlinGauss}
\end{figure}

\noindent{\textbf{Setting (c.2)}}. 
Results under Setting (c.2) are presented in Figure \ref{fig: CTM_nonlinMix}. 
We find that \texttt{SeBR} performs the best in both RIMSE and MAE. 
\texttt{BuLTM} is slightly worse than \texttt{SeBR} but outperforms the remaining competitors. 
We conjecture that the DPM sampling may be more difficult with higher dimensionality of $\bm{\beta}$. 
We conjecture that implementing horseshoe priors in \texttt{Stan} \citep{piironen2017sparsity} may improve the performances, which is one of our future works. 

\begin{figure}[!htb]
\centering
\subfigure[]{
	\begin{minipage}[t]{0.45\linewidth}
	\centering
	\includegraphics[width=2.0in]{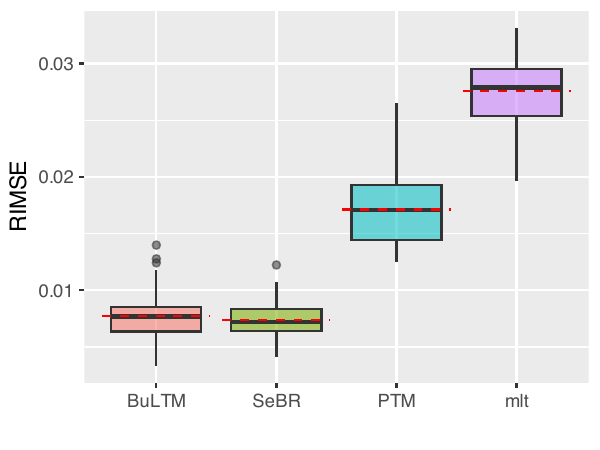}
	\label{RIMSE_nonlinMix}
	\end{minipage}
}
\vspace{-.5cm}
\subfigure[]{
	\begin{minipage}[t]{0.45\linewidth}
	\centering
	\includegraphics[width=2.0in]{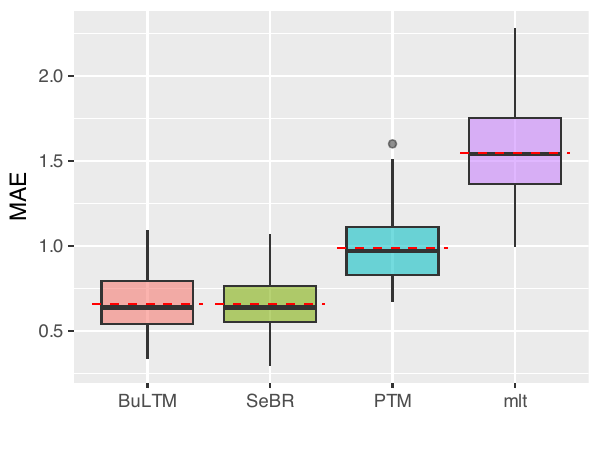}
	\label{MAE_nonlinMix}
	\end{minipage}
}

\caption{\footnotesize Box-plots of predictive assessments under Setting (c.2). (a), RIMSE; (b), MAE. }
\label{fig: CTM_nonlinMix}
\end{figure}

\subsection{Results of parametric estimation}
Parametric estimation results under linear transformation model Settings (a.1), (a.2), (a.3), (a.4), (b.1), and (b.2) are collected in Table \ref{ParaEst}. 
We find that the estimators given by \texttt{BuLTM} have low bias, and the coverage of the posterior interval is close to the nominal level.

\subsection{Mixing of other parameters}
We visualize the trace plot of other parameters to examine their MCMC mixing under Settings (a.1) and (a.2). 
Specifically, we examine the mixing of $\bm{\beta}$, and the first three components of $\bm{\psi}$ and $\bm{\nu}$.

According to Figure \ref{beta_mixing_A1}, in Setting (a.1), the MCMC chains of $\bm{\beta}$ are mixed. 
The $\hat{R}$ statistics for $\beta_1$, $\beta_2$, and $\beta_3$ are $1.013$, $1.011$, and $1.010$ respectively, demonstrating that the MCMC mixing is acceptable. 
The ESS of $\beta_1$, $\beta_2$, and $\beta_3$ are 275, 360, and 385 respectively, indicating that longer chains are needed for reliable estimation of $\bm{\beta}$. 
For comparison, the MCMC chains of $\bm{\beta}$ mix better in Setting (a.2), according to Figure \ref{beta_mixing_A2}. 
The corresponding $\hat{R}$ statistics for $\beta_1$, $\beta_2$, and $\beta_3$ are closer to $1$ (1.000, 1.004, 1.001 respectively), demonstrating the better mixing. 
As a result, the ESS also increases. 
We conjecture the reason is that the Weibull mixture prior correctly captures the mixture nature of the model error, leading to better fitting. 

By contrast, some of the DPM components may NOT achieve MCMC mixing as $\bm{\beta}$. 
Figures \ref{psi_mixing_A1} and \ref{nu_mixing_A1} show that the chains of the first three components of $\bm{\psi}$ and $\bm{\nu}$ are poorly mixed in Setting (a.1), and Figure \ref{psi_mixing_A2} and \ref{nu_mixing_A2} show that these chains are also poorly mixed in Setting (a.2). 
As a result, the corresponding $\hat{R}$ statistics exceed $1.05$, and the corresponding ESS is very low. 
This poor mixing is caused by the label-switching issue caused by the invariance against the permutations of the allocation of $(p_l, \psi_l, \nu_l)$. 
As an illustration, in Setting (a.2), we clearly find that the chains $\psi_1$ and $\psi_2$ are symmetrical to each other.
The reason is that, the ``true" $F_\epsilon$ is a two-component mixture model with equally weights, and thus, its posterior easily tends to be two-component. 
The label-switching issue naturally occurs between the labels first two components. 

%% Table 2
\begin{table}[!htb]
\centering
\footnotesize
\tabcolsep 10pt
\caption{\label{ParaEst}\footnotesize{Results of estimation of $\bbeta$ given by \texttt{BuLTM}. 
PSD: average posterior standard deviation; SDE: empirical standard error of the estimators; CP: the coverage probability of 95\% credible interval. }}\vspace{.3cm}
\scriptsize
\begin{tabular}{cc|cccc|cccc}
\hline
\multicolumn{2}{c}{} & \multicolumn{4}{c}{(a.2)} & \multicolumn{4}{c}{(a.1)} \\
\hline
& Parameter &   BIAS &   PSD &   SDE & CP &   BIAS &   PSD &   SDE & CP \\ 
\hline
& $\beta_1$ & -0.019 & 0.087 & 0.091 & 94 & -0.025 & 0.100 & 0.115 & 91 \\ 
& $\beta_2$ & -0.019 & 0.115 & 0.130 & 91 & -0.024 & 0.128 & 0.144 & 92 \\ 
& $\beta_3$ & -0.019 & 0.089 & 0.102 & 90 & -0.024 & 0.100 & 0.109 & 95 \\ 
\hline
\multicolumn{2}{c}{} & \multicolumn{4}{c}{(a.3)} & \multicolumn{4}{c}{(a.4)} \\
\hline
& Parameter &   BIAS &   PSD &   SDE & CP &   BIAS &   PSD &   SDE & CP \\ 
\hline
& $\beta_1$ & -0.057 & 0.158 & 0.177 & 91 & -0.021 & 0.097 & 0.100 & 95 \\ 
& $\beta_2$ & -0.062 & 0.189 & 0.213 & 89 & -0.021 & 0.128 & 0.122 & 97 \\ 
& $\beta_3$ & -0.059 & 0.154 & 0.174 & 93 & -0.021 & 0.098 & 0.100 & 92 \\ 
\hline
\multicolumn{2}{c}{} & \multicolumn{4}{c}{(b.2)} & \multicolumn{4}{c}{(b.1)} \\
\hline
& Parameter &   BIAS &   PSD &   SDE & CP &   BIAS &   PSD &   SDE & CP \\ 
\hline
& $\beta_1$ & -0.015 & 0.096 & 0.110 & 93 & -0.035 & 0.153 & 0.157 & 91 \\ 
& $\beta_2$ & -0.015 & 0.084 & 0.096 & 87 & -0.034 & 0.127 & 0.126 & 95 \\ 
& $\beta_3$ & -0.015 & 0.082 & 0.082 & 95 & -0.034 & 0.122 & 0.133 & 92 \\ 
\hline         
\end{tabular}
\end{table}

\begin{figure}[!htb]
\centering
\subfigure[]{
\begin{minipage}[t]{0.8\linewidth}
\centering
\includegraphics[width=4.0in]{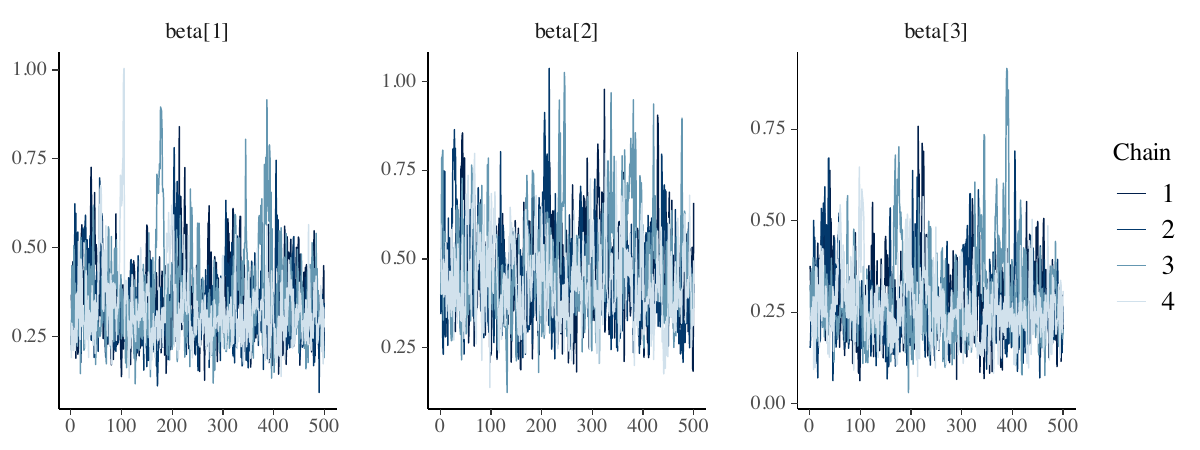}
\label{beta_mixing_A1}
\end{minipage}
}

\subfigure[]{
\begin{minipage}[t]{0.8\linewidth}
\centering
\includegraphics[width=4.0in]{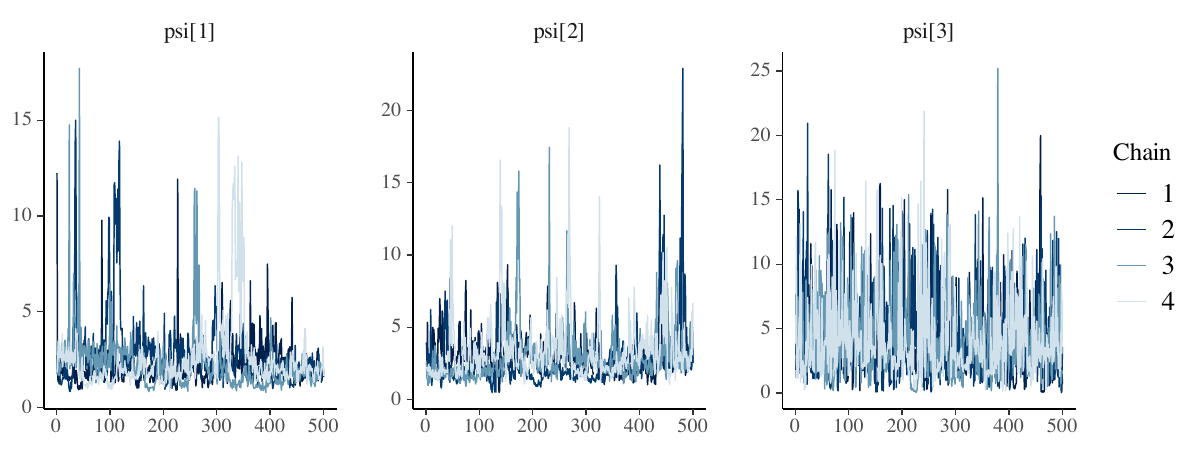}
\label{psi_mixing_A1}
\end{minipage}
}

\subfigure[]{
\begin{minipage}[t]{0.8\linewidth}
\centering
\includegraphics[width=4.0in]{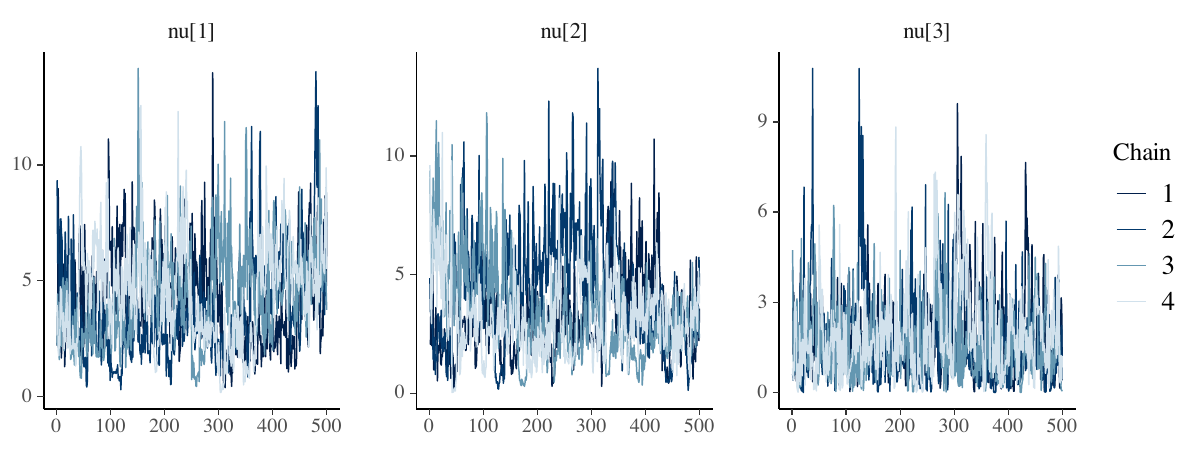}
\label{nu_mixing_A1}
\end{minipage}
}

\caption{\footnotesize Visualization of MCMC traces of other parameters in Setting (a.1). 
(a), trace plots of $\bm{\beta}$; (b), trace plots of the first three components of $\bm{\psi}$; 
(c), trace plots of the first three components of $\bm{\nu}$. }
\label{fig: mixing A1 others}
\end{figure}

\begin{figure}[!htb]
\centering
\subfigure[]{
\begin{minipage}[t]{0.8\linewidth}
\centering
\includegraphics[width=4.0in]{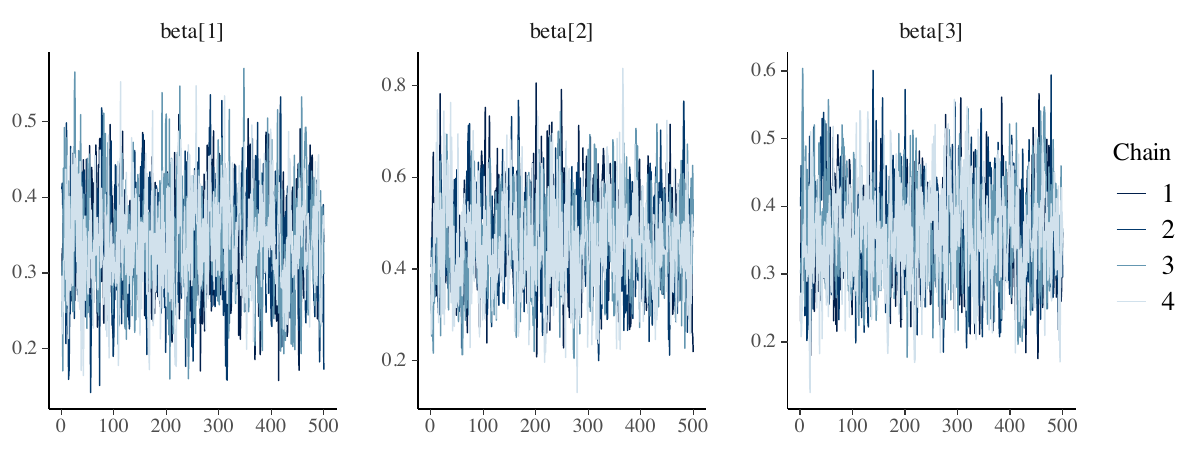}
\label{beta_mixing_A2}
\end{minipage}
}

\subfigure[]{
\begin{minipage}[t]{0.8\linewidth}
\centering
\includegraphics[width=4.0in]{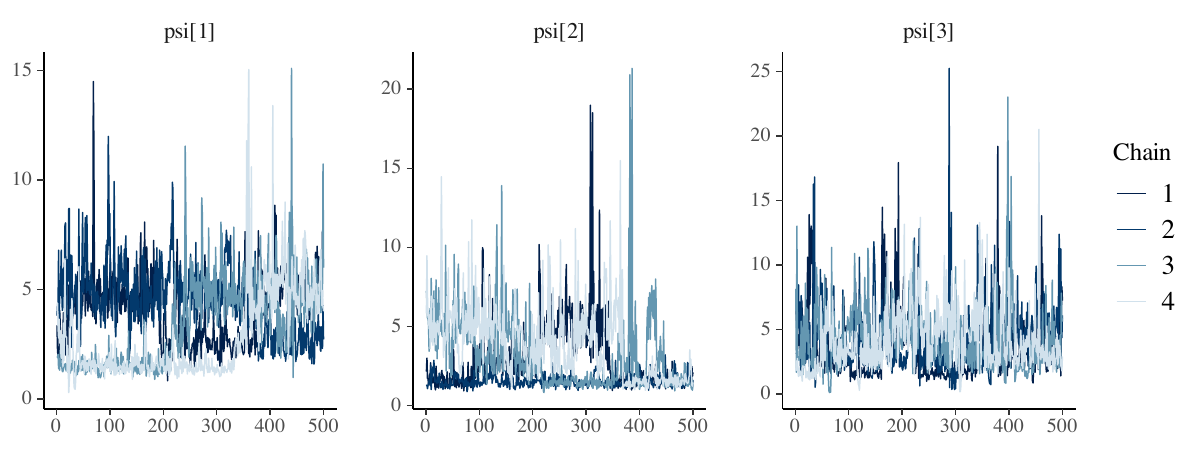}
\label{psi_mixing_A2}
\end{minipage}
}

\subfigure[]{
\begin{minipage}[t]{0.8\linewidth}
\centering
\includegraphics[width=4.0in]{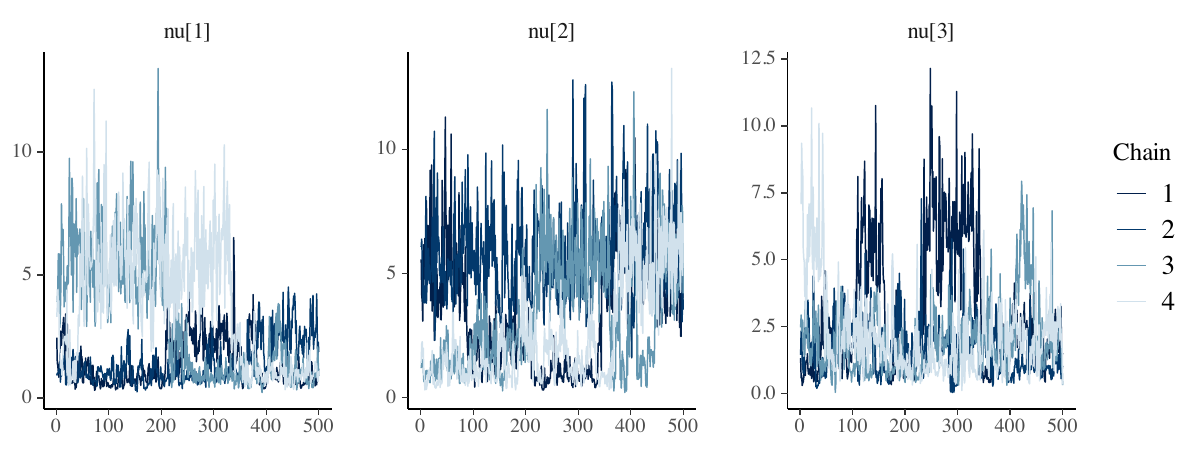}
\label{nu_mixing_A2}
\end{minipage}
}

\caption{\footnotesize Visualization of MCMC traces of other parameters in Setting (a.2). 
(a), trace plots of $\bm{\beta}$; (b), trace plots of the first three components of $\bm{\psi}$; 
(c), trace plots of the first three components of $\bm{\nu}$. }
\label{fig: mixing A2 others}
\end{figure}

\section{Sensitivity analysis}

\subsection{Number of MCMC draws in the tuning procedure}
In general, longer MCMC chains are more possible to provide more reliable posterior approximation. 
Nonetheless, we do NOT want the chain length $N_d$ too large during the hyperparmeter tuning procedure for the sake of computational efficiency. 
This subsection discusses how long the MCMC chains are needed in the ore hyperparameter tuning procedure. 
To determine the chain length needed in our tuning procedure is equivalent to answer the question, ``how many draws do we need for each chain to correctly reflect the variation within a single chain?"
In MCMC practice, this question is closely related to the concept of effective sample size, which is essential in computing the Monte Carlo standard error \citep{gelman2013bayesian}. 
We follow the usual practice that an ESS that is greater than $400$ is sufficient to reflect the MCMC variation \citep{vehtari2021rank}. 
Theoretically, the No-U-Turn sampler in \texttt{Stan} is more effective for sampling ``independent" samples than the random walk M-H sampler \citep{hoffman2014no}; 
empirically, within a single chain, the ratio between the ESS and the total number of draws in \texttt{Stan} is about 90\%, even in very complex models \citep{beraha2021jags}. 
Based on these results, roughly, in a Markov chain sampled by \texttt{Stan}, one may treat a new state as an ``independent" or effective sample. 
Consequently, to attain an ESS of 400, we would expect $N_d = 400/\delta_{adp}$ number of draws, where $\delta_{adp}$ (\texttt{adapt\_delta}) is the target average proposal acceptance probability. 
In \texttt{Stan}, the default $\delta_{adp}$ is set as 0.8.
In \texttt{Stan}, the default $\delta_{adp}$ is set as 0.8.
In conclusion, we recommend to use $N_d = 500$ (after burn-in) to draw sufficiently representative MCMC samples with an acceptable time cost in the tuning procedure. 

We conduct sensitivity analysis to examine the robustness of the hyperparameter tuning algorithm against different choices of the chain length. 
As we mentioned before, we suggest to use more $N_d = 500$ posterior draws in each MCMC chain. 
Hence, we repeat the examples presented in Section 5.1, with chain lengths of 800 and 1000 (after 500 burn-in samples) and re-examine the MCMC mixing status. 

\begin{figure}[!htb]
\centering
\subfigure[]{
	\begin{minipage}[t]{0.45\linewidth}
	\centering
	\includegraphics[width=2.0in]{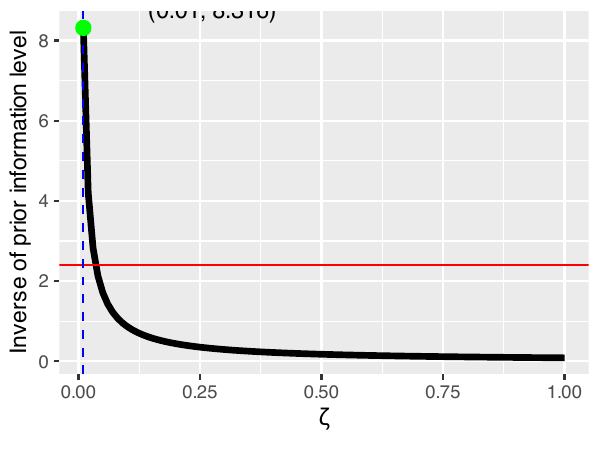}
	\label{zeta_example1_800}
	\end{minipage}
}
\subfigure[]{
	\begin{minipage}[t]{0.45\linewidth}
	\centering
	\includegraphics[width=2.0in]{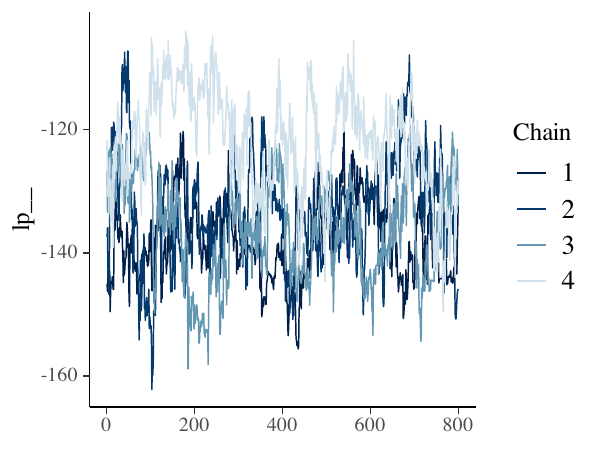}
	\label{lp_example1_800}
	\end{minipage}
}
\subfigure[]{
	\begin{minipage}[t]{0.45\linewidth}
	\centering
	\includegraphics[width=2.0in]{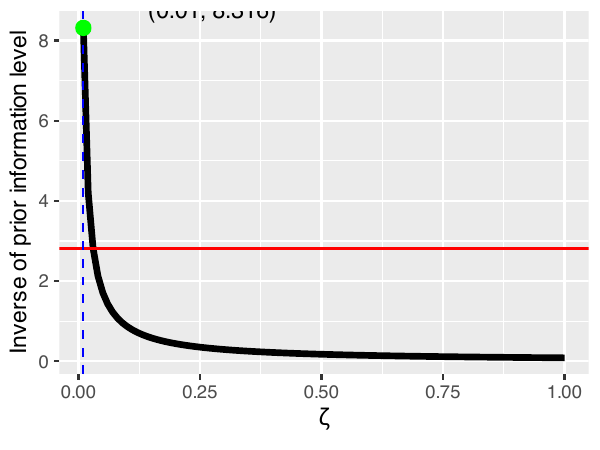}
	\label{zeta_example1_1000}
	\end{minipage}
}
\subfigure[]{
	\begin{minipage}[t]{0.45\linewidth}
	\centering
	\includegraphics[width=2.0in]{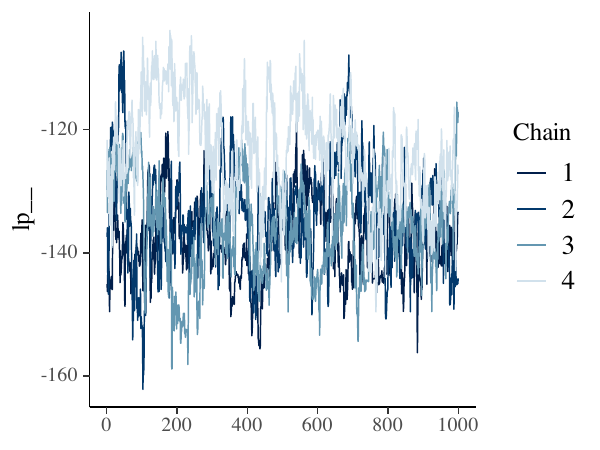}
	\label{lp_example1_1000}
	\end{minipage}
}

\caption{\footnotesize MCMC checking results in setting (b.2) with hyperparameter configuration of $(\eta, \zeta, \rho) = (0.01, 0.01, 1)$ and chain lengths of $800$ and $1000$. 
	(a), the curve of $\widetilde{\mathcal{V}}_{s_{j_0}}(\eta, \zeta)$ with chain length 800. (b), trace plot of chains of \texttt{lp\_\_} with chain length 800.
	(c), the curve of $\widetilde{\mathcal{V}}_{s_{j_0}}(\eta, \zeta)$ with chain length 1000. 
	(d), trace plot of chains of \texttt{lp\_\_} with chain length 1000.
}
\label{fig: example_1_800 & 1000}
\end{figure}

By comparing Figures \ref{zeta_example1_800} and \ref{zeta_example1_1000}, and figures in Section 5.1 of the manuscript, with hyperparameter configuration of $(0.01, 0.01, 1)$, for $N_d > 500$, increasing the chain length changes little about the within-chain MCMC variance. 
Similarly, increasing the chain length cannot resolve the poor mixing issue, as shown by Figures \ref{lp_example1_800} and  \ref{lp_example1_1000}. 

\begin{figure}[!htb]
\centering
\subfigure[]{
	\begin{minipage}[t]{0.45\linewidth}
	\centering
	\includegraphics[width=2.0in]{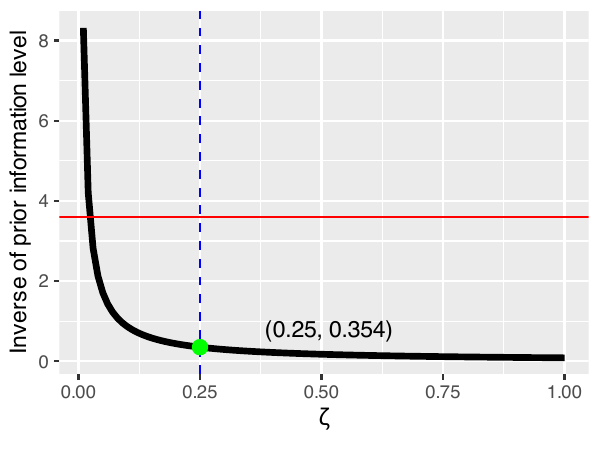}
	\label{zeta_example2_800}
	\end{minipage}
}
\subfigure[]{
	\begin{minipage}[t]{0.45\linewidth}
	\centering
	\includegraphics[width=2.0in]{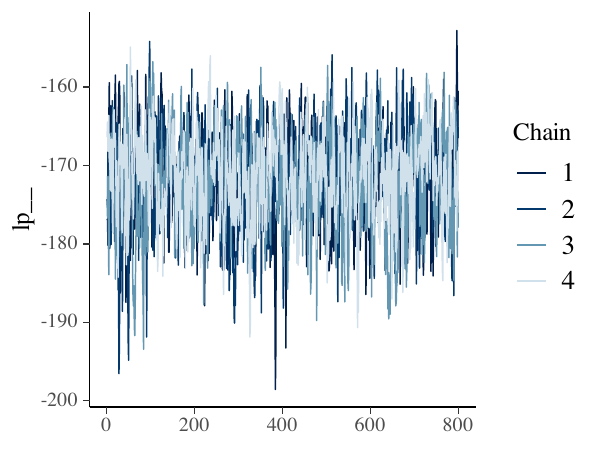}
	\label{lp_example2_800}
	\end{minipage}
}
\subfigure[]{
	\begin{minipage}[t]{0.45\linewidth}
	\centering
	\includegraphics[width=2.0in]{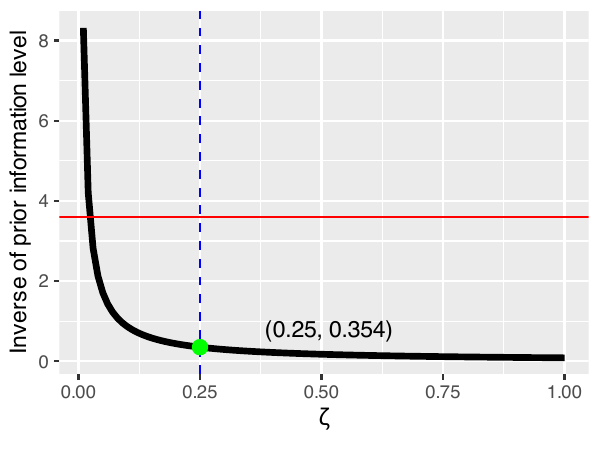}
	\label{zeta_example2_1000}
	\end{minipage}
}
\subfigure[]{
	\begin{minipage}[t]{0.45\linewidth}
	\centering
	\includegraphics[width=2.0in]{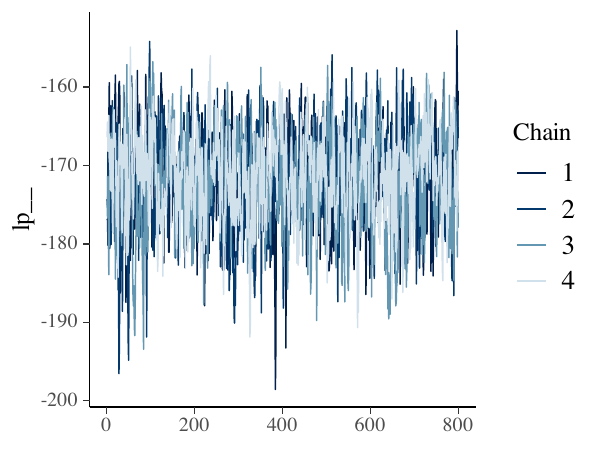}
	\label{lp_example2_1000}
	\end{minipage}
}

\caption{\footnotesize MCMC checking results in setting (b.2) with hyperparameter configuration of $(\eta, \zeta, \rho) = (0.01, 0.25, 1)$ and chain lengths of $800$ and $1000$. 
	(a), the curve of $\widetilde{\mathcal{V}}_{s_{j_0}}(\eta, \zeta)$ with chain length 800. (b), trace plot of chains of \texttt{lp\_\_} with chain length 800.
	(c), the curve of $\widetilde{\mathcal{V}}_{s_{j_0}}(\eta, \zeta)$ with chain length 1000. 
	(d), trace plot of chains of \texttt{lp\_\_} with chain length 1000.
}
\label{fig: example_2_800 & 1000}
\end{figure}

Similarly, with correctly configured hyperparameters $(\eta, \zeta, \rho) = (0.01, 0.25, 1)$, increasing the chain length will not affect the diagnostic result of the proposed sufficient informativeness criterion, as well as the mixing of MCMC chains.

\subsection{Increasing prior informativeness} 
\label{subsec: Increasing prior informativeness}
This subsection discuss the affect of increasing the prior informativeness by changing the hyperparameter configuration. 
Note that we do NOT expect the priors to be too informative to influence sampling. 
Consequently, we focus on increasing $\zeta$ since for $\zeta > 0.25$, increasing $\zeta$ increases the prior informativeness mildly based on the formula of $\widetilde{\mathcal{V}}_{s_{j_0}}(\eta, \zeta)$.
We consider the configuration of $\zeta = 0.5$, and take Settings (a.1) and (a.2) for example. 

\noindent{\textbf{Setting (a.1)}}. ~Figures \ref{zeta_plot_A1 0.5} and \ref{lp_plot_A1 0.5} show that slightly increasing the prior informativeness does NOT change the MCMC mixing (for predictive inference). 
Meanwhile, the ESS of \texttt{lp\-
	\_} is 434, which is also sufficient. 
Nonetheless, increasing $\zeta$ leads to better mixing of $\bm{\beta}$, with $\hat{R}$ statistics being 1.003, 1.006, 1.002 for $\beta_1$, $\beta_2$, and $\beta_3$, respectively. 

\begin{figure}[!htb]
\centering
\subfigure[]{
	\begin{minipage}[t]{0.45\linewidth}
	\centering
	\includegraphics[width=2.0in]{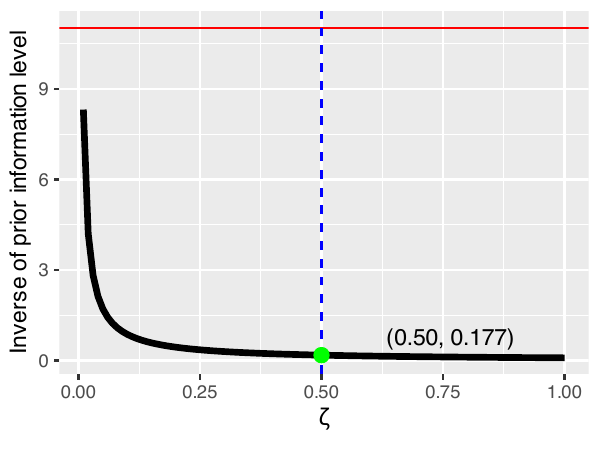}
	\label{zeta_plot_A1 0.5}
	\end{minipage}
}
\vspace{-.5cm}
\subfigure[]{
	\begin{minipage}[t]{0.45\linewidth}
	\centering
	\includegraphics[width=2.0in]{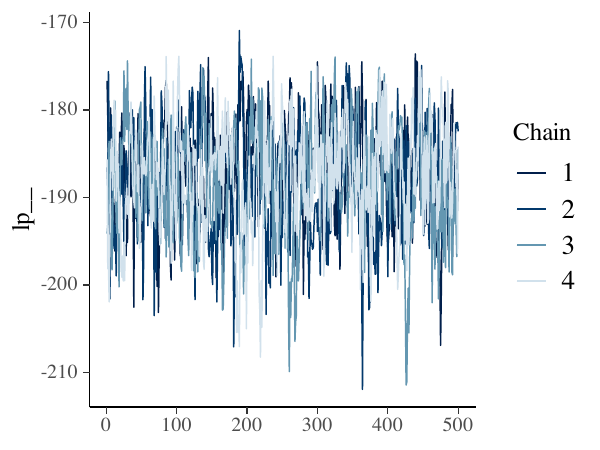}
	\label{lp_plot_A1 0.5}
	\end{minipage}
}

\caption{\footnotesize MCMC checking results in setting (a.1) with hyperparameter configuration of $(\eta, \zeta, \rho) = (0.01, 0.5, 1)$. 
	(a), the curve of $\widetilde{\mathcal{V}}_{s_{j_0}}(\eta, \zeta)$ with $\eta = 0.01$ fixed; horizontal line: the within-chain MCMC variance sampled in setting (a.1). (b), trace plot of chains of \texttt{lp\_\_}. }
\label{fig: mixing A1 0.5}
\end{figure}

\noindent{\textbf{Setting (a.2)}}.~ Similar results are also presented in Figures \ref{zeta_plot_A2 0.5} and \ref{lp_plot_A2 0.5}. 
Furthermore, in this case, the ESS of \texttt{lp\_\_} increases from 520 to 595, and the ESS of $\bm{\beta}$ also increases.

\begin{figure}[!htb]
\centering
\subfigure[]{
	\begin{minipage}[t]{0.45\linewidth}
	\centering
	\includegraphics[width=2.0in]{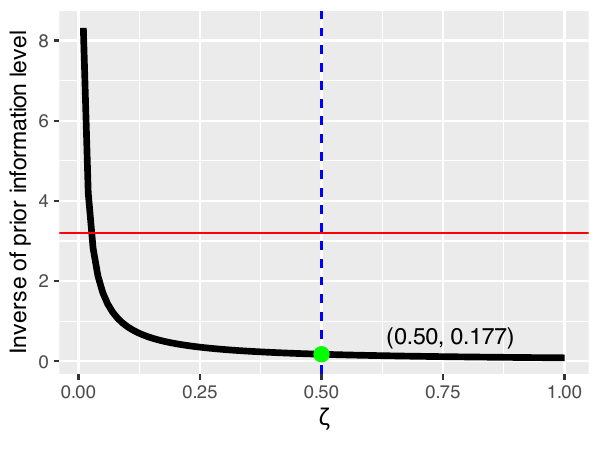}
	\label{zeta_plot_A2 0.5}
	\end{minipage}
}
\vspace{-.5cm}
\subfigure[]{
	\begin{minipage}[t]{0.45\linewidth}
	\centering
	\includegraphics[width=2.0in]{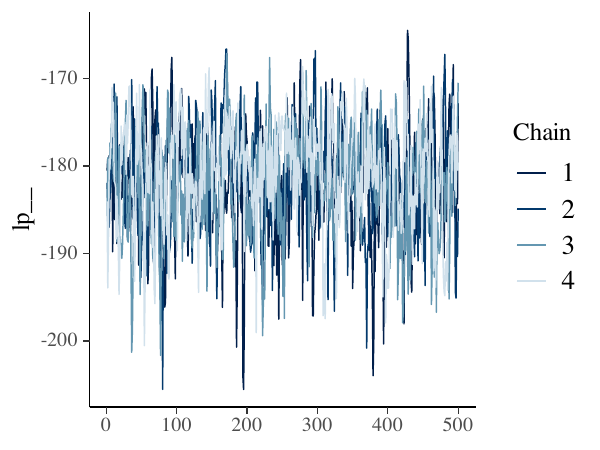}
	\label{lp_plot_A2 0.5}
	\end{minipage}
}

\caption{\footnotesize MCMC checking results in setting (a.2) with hyperparameter configuration of $(\eta, \zeta, \rho) = (0.01, 0.5, 1)$. 
	(a), the curve of $\widetilde{\mathcal{V}}_{s_{j_0}}(\eta, \zeta)$ with $\eta = 0.01$ fixed; horizontal line: the within-chain MCMC variance sampled in setting (a.2). (b), trace plot of chains of \texttt{lp\_\_}. }
\label{fig: mixing A2 0.5}
\end{figure}

In summary, slightly increasing the prior informativeness by setting $\zeta = 0.5$ generally leads to higher ESS than setting $\zeta = 0.25$, especially for the estimation of $\bm{\beta}$.
Consequently, we use the hyperparameter configuration of $(\eta, \zeta, \rho) = (0.01, 0.5, 1)$ throughout all numerical studies. 

\subsection{Number of initial knots}
In this subsection, we check the sensitivity of the number of initial knots. 
We examine the predictive performance on two sets of quantile knots: $4$ and $10$, i.e. setting the as the each $25\%$ and $10\%$ quantiles of the observations. 
We present the results under settings (a.1) and (a.2) for sensitivity analysis.

\begin{figure}[!htp]
\subfigcapskip=-10pt
\centering
\subfigure[]{
	\begin{minipage}[t]{0.45\linewidth}
	\centering
	\includegraphics[width=1.5in]{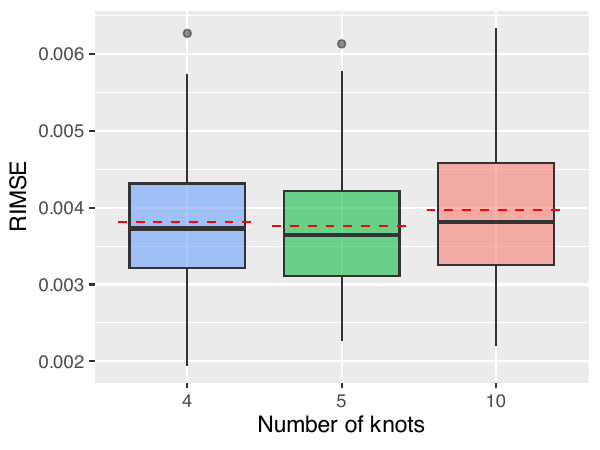}
	\label{BoxGaussSenseRIMSE}
	\end{minipage}
}
\vspace{-.5cm}
\subfigure[]{
	\begin{minipage}[t]{0.45\linewidth}
	\centering
	\includegraphics[width=1.5in]{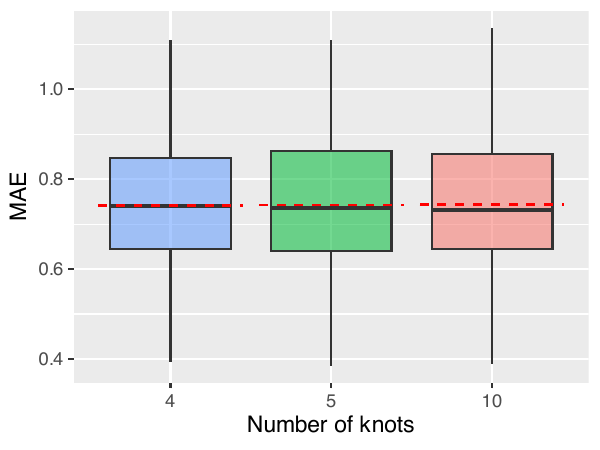}
	\label{BoxGaussSenseMAE}
	\end{minipage}
}
\caption{\footnotesize{Prediction comparison between different number of initial knots under Setting (a.1)} }
\label{Sens_Setting a.1}
\end{figure}

The results under Settings (a.1) and (a.2) are presented in Figures \ref{Sens_Setting a.1} and \ref{Sens_Setting a.2} respectively. 
Under setting (a.1), choosing 4 and 5 initial knots do NOT yield significant difference in RIMSE (two-sided paired t-test $p$-value: 0.169); 
under setting (a.2), choosing 5 and 10 initial knots also yield NO significant difference in RIMSE (two-sided paired t-test $p$-value: 0.095). 
Under two settings, all choices of the number of initial knots yield no significant difference in MAE. 
This sensitivity analysis demonstrates that \texttt{BuLTM} is not sensitive to the choice of the number of initial knots.

\begin{figure}[!htp]
\subfigcapskip=-10pt
\centering
\subfigure[]{
	\begin{minipage}[t]{0.45\linewidth}
	\centering
	\includegraphics[width=1.5in]{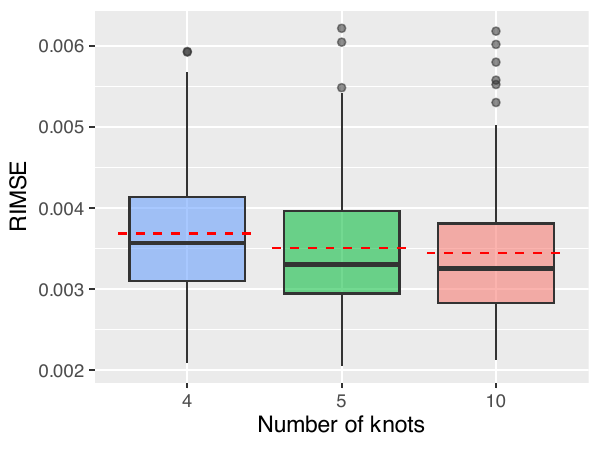}
	\label{BoxMixSenseRIMSE}
	\end{minipage}
}
\vspace{-.5cm}
\subfigure[]{
	\begin{minipage}[t]{0.45\linewidth}
	\centering
	\includegraphics[width=1.5in]{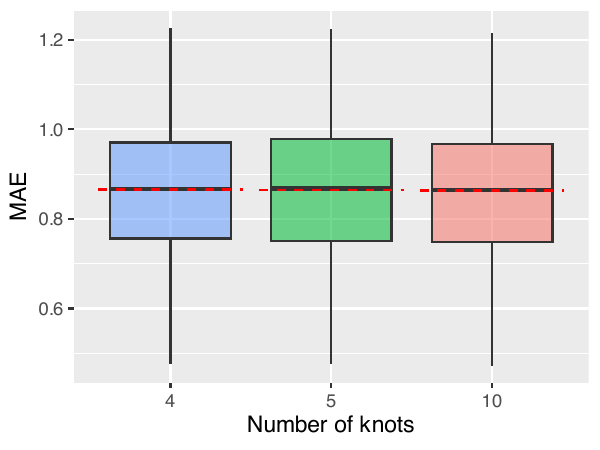}
	\label{BoxMixSenseMAE}
	\end{minipage}
}
\caption{\footnotesize{Prediction comparison between different number of initial knots under Setting (a.2)} }
\label{Sens_Setting a.2}
\end{figure}

\section{Visualize the knot interpolation procedure with censored data}
{We visualize the knot interpolation algorithm through an example in Figure \ref{fig: quantile I spline}. 
	We set the number of initial knots $N_I= 4$, yielding interior knots $s_0, s_1, s_2,$ and $s_3$, which are located at the $0, 25\%, 50\%$ and $75\%$ quantiles of the uncensored observations.  The boundary knots are set at $0$ and $\tau$. 
	Then we compute the absolute difference between the empirical function at each interior knots on $s_j$, for $j=0, \ldots, 3$. 
	Since $|\hat{F}_{\tilde{y}}(s_1) - \hat{F}_{\tilde{y}_c}(s_1)|> 0.05$, we interpolate $s_1^* = \hat{Q}_{\tilde{y}_c}(0.25)$ as the complement knot. 
	Thus, the final interior knots are $(s_0, s_1^*, s_1, s_2, s_3)$. 
	With the two boundary knots $0$ and $\tau$, at the smoothing degree $r=4$, we finally obtain $9$ I-spline functions. 
}

\begin{figure}[!htb]
\centering
\includegraphics[scale = .9]{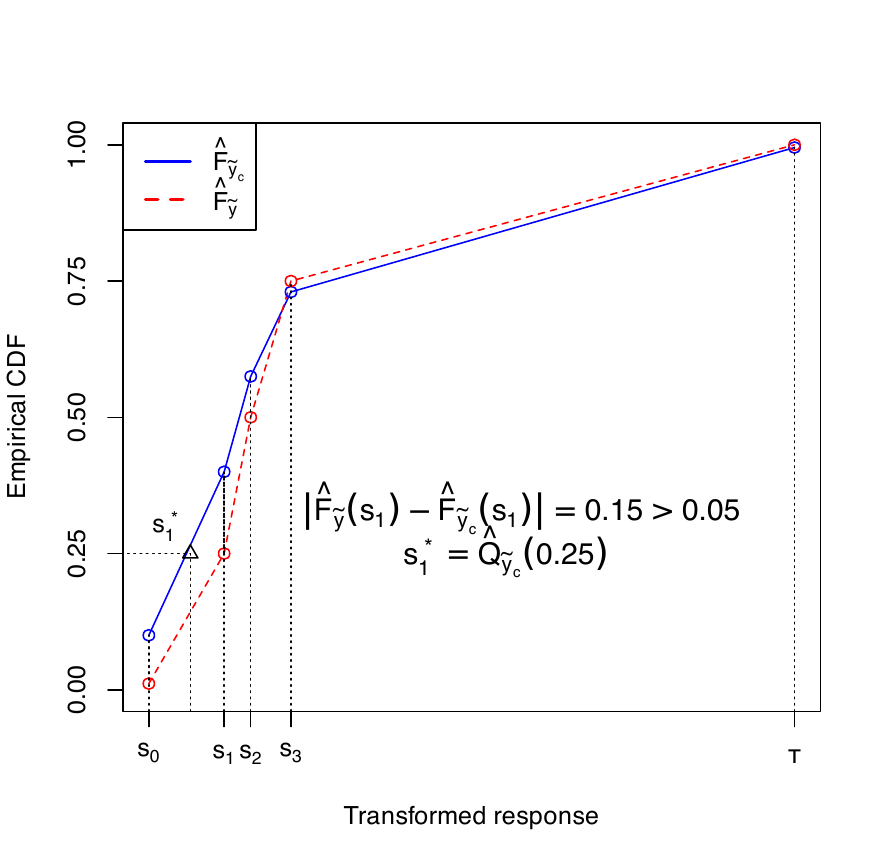}
\caption{\footnotesize
	An example of selecting knots forn quantile-knot I-spline functions with censored data. The number of initial knots $N_I= 4$.  }
\label{fig: quantile I spline}
\end{figure}

\section{Additional results of real-world data analysis }

\subsection{Heart failure clinical records data}
\noindent{\textbf{Parametric estimation}}\\
Results of parametric estimation on the heart failure dataset given by \texttt{BuLTM}, \texttt{TransModel}, and \texttt{spBayesSurv}are displayed in Table \ref{tabHeart}. 
We find that \texttt{BuLTM} is consistent with \texttt{spBayesSurv} in the detection of significance, while \texttt{TransModel} fails to detect the significance of the covariate $Z_9$, serum sodium. 
Existing medical research has evidenced that lower serum sodium was associated with higher in-hospital and 60-day mortality for heart failure patients \citep{klein2005lower}. 
Hence, the results of \texttt{BuLTM} and \texttt{spBayesSurv} are more meaningful and reasonable. 
We conjecture this may be caused by the relatively high censoring rate.

%% Table 11
\begin{table}[!htp]
\centering
\scriptsize
\def~{\hphantom{0}}
\caption{\label{tabHeart}\footnotesize{Results of estimated $\bbeta$ in the analysis to heart failure clinical records data. Credible intervals are given on $95\%$ credibility for BuLTM and spBayesSurv. The confidence interval of TransModel is a 95\% Wald-type confidence level.} }
\tabcolsep 4pt
\begin{tabular}{c|cc|cc|cc}
\hline
&  \multicolumn{2}{c}{BuLTM} & \multicolumn{2}{c}{spBayesSurv} & \multicolumn{2}{c}{TransModel} \\
\hline
Covariate              & Estimate &     95\%CI      & Estimate &    95\%CI   &   Estimate &    95\%CI    \\
$Z_1=$ age     & -0.163   & (-0.433, 0.063) &  -4.670  & (-6.182, -3.135) & -4.631 & (-6.474, -2.788)\\
$Z_2 =$ anemia &   -0.013  & (-0.036, -0.001) &  -0.412  & (-0.764, -0.066) & -0.408 & (-0.827,  0.012)\\
$Z_3=$  creatinine phosphokinase & -0.002 & (-0.010, 0.004) & -0.074 & (-0.262, 0.113) &-0.075 & (-0.293, 0.143)\\
$Z_4=$ diabetes & -0.004 & (-0.020, 0.008) & -0.117 & (-0.476, 0.256) & -0.125 & (-0.560, 0.310)\\ 
$Z_5 =$ ejection fraction & 0.022 & (0.008, 0.060) & 0.586 & (0.386, 0.785)&  4.810 & (2.773, 6.847)\\ 
$Z_6=$ high blood pressure & -0.015 & (-0.042, -0.001) & -0.460 & (-0.807, -0.099) & -0.455 & (-0.879, -0.031)\\
$Z_7=$ platelets & 0.076 & (-0.033, 0.389) & 1.303 & (-2.836, 5.327) &  1.384 & (-3.392, 6.160)\\
$Z_8=$ serum creatinine & -0.012 & (-0.033, -0.004) & -0.306 & (-0.421, -0.183) & -0.313 & (-0.453, -0.173)\\
$Z_9=$ serum sodium & 0.939 & (0.787, 0.997) & 41.347 & (3.248, 74.256) &43.077 & (-2.777, 88.931)\\
$Z_{10} = $ sex & 0.009 & (-0.005, 0.033) & 0.222 & (-0.185, 0.625) & 0.224 & (-0.269, 0.716)\\ 
$Z_{11} =$ smoking & -0.005 & (-0.024, 0.010) & -0.133 & (-0.542, 0.282) & -0.148 & (-0.641, 0.345)\\
\hline
\end{tabular}
\end{table}

We use the survival AUC curves to compare the parametric estimation given by the three methods. 
As shown by Figure \ref{HeartScoreAUC}, the parametric estimation given by the three methods generate almost the same AUC, indicating that their parametric estimation has almost the same predictive performances. 

\begin{figure}[!htp]
\centering
\includegraphics[width=2.5in]{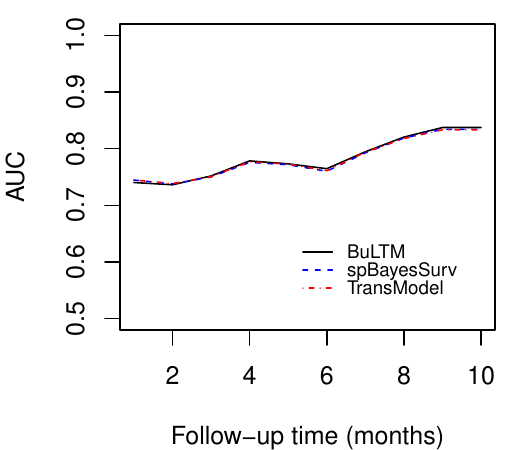}
\caption{\footnotesize  Time dependent survival $\text{AUC}(t)$ computed by estimated relative risks on Heart failure dataset. }
\label{HeartScoreAUC}
\end{figure}

\subsection{Veterans lung cancer data}
\label{subsec: add app veterans}
We analyze the veterans lung cancer dataset from \texttt{R} package \texttt{survival} \citep{Terry2022}. 
It contains $137$ patients from a randomized trial receiving either a standard or a test form of chemotherapy. 
In the study, the failure time is one of the primary endpoints for the trial as $128$ patients were followed to death. 
We include six covariates, the first five of which are $Z_1 = \text{karno}/10$ (karnofsky score), $Z_2 = \text{prior}/10$ (prior treatment, with 0 for no therapy and 10 otherwise), $Z_3 = \text{age}/100$ (years), $Z_4 = \text{diagtime}/100$ (time in months from diagnosis to randomization), and $Z_5 = I (\text{treatment} = \text{test form of chemotherapy})$. The remaining is the covariate of the cell type which has four categories, adeno, squamous, small cell, and large cell.
Thus we include indicator variables to associate with  time-to-death, that is, $Z_6 = I(\text{cell type} = \text{squamous}), Z_7 = I (\text{celltype} = \text{small}), ~\text{and} ~Z_8 = I (\text{celltype} = \text{large})$.  

For the veterans data, we fit the nonparametric transformation model for the veterans data by \texttt{BuLTM} and fit the semiparametric survival models by \texttt{spBayesSurv} and \texttt{TransModel} respectively. 
In this case, \texttt{spBayesSurv} selects the PO model and thus, \texttt{TransModel} specifies $r=1$.

For predictive capability comparison, we conduct 5 runs of 10-fold cross validation. 
We assess the predictive performances through C-index and IBS. 
Figure \ref{Pred_Lung} shows that \texttt{BuLTM} is comparable to all competitors in both C-index and IBS.

%% Figure 5
\begin{figure}[!htp]
\subfigcapskip=-10pt
\centering
\subfigure[]{
	\begin{minipage}[t]{0.45\linewidth}
	\centering
	\includegraphics[width=2in]{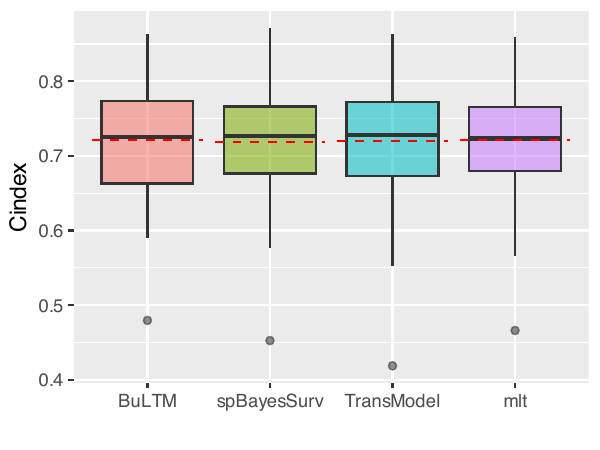}
	\label{Cindex_Lung}
	\end{minipage}
}
\vspace{-.5cm}
\subfigure[]{
	\begin{minipage}[t]{0.45\linewidth}
	\centering
	\includegraphics[width=1.9in]{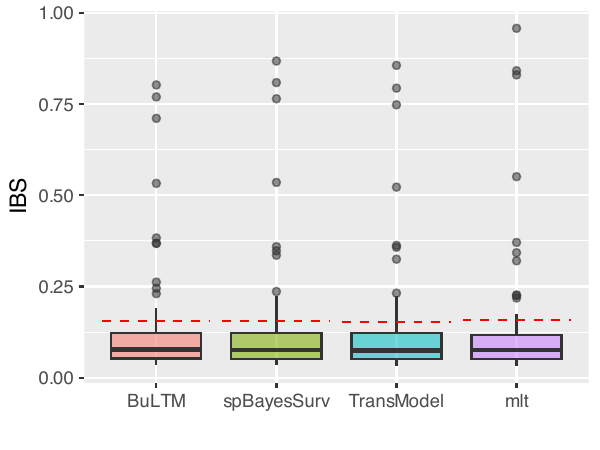}
	\label{IBS_Lung}
	\end{minipage}
}

\caption{\footnotesize Prediction comparison between BuLTM, spBayesSurv, and TransModel on the veterans dataset; (a), C index; 
	(b), IBS; red dashed lines: the mean of the metrics.}
\label{Pred_Lung}
\end{figure}

\noindent{\textbf{{Parametric estimation}}}\\
Results of parametric estimation on the veterans lung cancer dataset given by \texttt{BuLTM}, \texttt{TransModel} and \texttt{spBayesSurv} are displayed in Table \ref{tabLung}. 
The three methods provide similar significance levels for all coefficients. 
Although some signs of estimated coefficients are different, say $\beta_3$ and $\beta_7$, they are not significant since their credible/confidence intervals cover zero. 
That implies qualitative interpretations of the estimates of the regression coefficients under the three models are stable. 
%% Table 10
\begin{table}[!htb]
\centering
\footnotesize
\def~{\hphantom{0}}
\caption{\label{tabLung}\footnotesize{Results of estimated $\bbeta$ for veterans administration lung cancer data. Credible intervals are given on $95\%$ credibility for BuLTM and spBayesSurv. The confidence interval of TransModel is a 95\% Wald-type confidence level. }}
\tabcolsep 3pt
\footnotesize
\begin{tabular}{c|cc|cc|cc}
\hline
\multicolumn{1}{c}{} &  \multicolumn{2}{c}{BuLTM} & \multicolumn{2}{c}{spBayesSurv} & \multicolumn{2}{c}{TransModel} \\
\hline
Covariate   & Estimate &     95\%CI     & Estimate &   95\%CI   & Estimate &  95\%CI   \\
$Z_1 $ &  0.119  & (0.045, 0.246)  &  0.617   & (0.449, 0.800)  & 0.553  & (0.368, 0.737)\\
$Z_2 $ &  -0.302  & (-0.951, 0.897) &  -1.391 & (-8.597, 6.028) & -0.388 &  (-8.546, 7.768)\\
$Z_3 $ &{-0.006} & (-0.700, 0.671) &  {1.426}   & (-1.643, 4.477) &  {0.945} & (-2.441, 4.331)\\  
$Z_4$ & 0.081 & (-0.693, 0.730) & 0.033 & (-3.533, 3.469)& 0.010& (-3.475, 3.496) \\ 
$Z_5 $ &  -0.044  & (-0.227, 0.117) & -0.147   & (-0.739, 0.487) & -0.278 &  (-0.963, 0.405)\\
$Z_6 $ &  0.350  & (0.093, 0.694) &  1.387   & (0.396, 2.334)  & 1.995 & (0.063, 3.027)\\
$Z_7 $ &  {-0.005}  & (-0.242, 0.205) &  {0.058}   & (-0.739, 0.916) & {0.413} & (-0.514, 1.342)\\
$Z_8 $ & 0.274 & (0.053, 0.571)  &  1.367   & (0.444, 2.308) & 1.364 & (0.343, 2.385) \\
\hline
\end{tabular}
\end{table}

We assess the parametric estimation results through the survival AUC curves. 
Figure \ref{LungScoreAUC} displays the dynamic AUCs using the estimated relative risks given by \texttt{BuLTM}, \texttt{spBayesSurv}, and \texttt{TransModel} as diagnostics. 
We find \texttt{BuLTM} and \texttt{TransModel} share almost the same survival AUC curves which are higher than that of \texttt{spBayesSurv}. 
Thus, \texttt{BuLTM} appears to provide a competitive parametric estimation on the veterans dataset.
%\vspace{-.5cm}
%% Figure 6
\begin{figure}[!htp]
% \centering
%   \subfigcapskip=-15pt
% \subfigure[]{
	%   \begin{minipage}[t]{0.45\linewidth}
	\centering
	\includegraphics[width=2.25in]{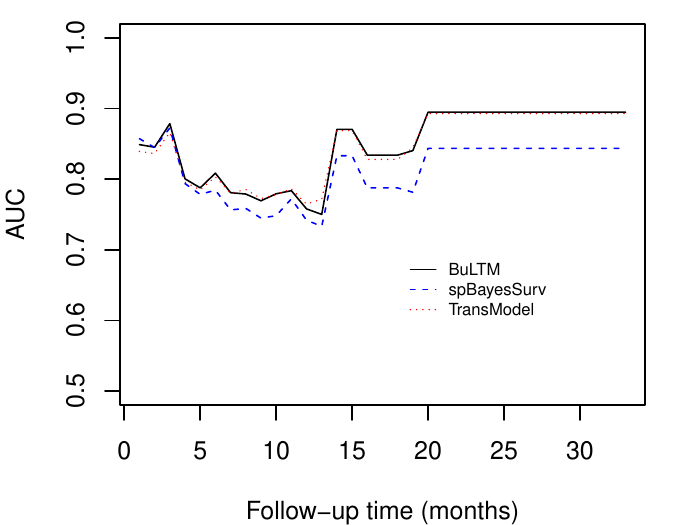}
	%       \label{LungAUC_KM}
	%     \end{minipage}
	% }
%   \vspace{-.5cm}
%   \subfigure[]{
	%     \begin{minipage}[t]{0.45\linewidth}
	%       \centering
	%       \includegraphics[width=2.7in]{}
	%       \label{LungAUC_NNE}
	%     \end{minipage}
	% }
% %\vspace{-.3cm}
\caption{\footnotesize  Time dependent survival $\text{AUC}(t)$ computed by estimated relative risks on veterans lung cancer dataset.}
\label{LungScoreAUC}
\end{figure}

\section{Posterior checking}

The sufficiently informative prior elicitation for infinite-dimensional parameters $H$ and $S_\xi$ is not noninformative. 
An objective Bayesian may worry that the prior information may impact the posterior too much such that the  prior-to-posterior update is not data-driven. 
We conduct posterior checking under simulation Setting (a.1) and  to check the difference between the marginal prior and posterior densities of parameters $\bm{\beta}$ and $\bm{\alpha}$. 
The posterior checking results under other settings are similar. 
We use the aforementioned hyperparameter configuration $(\eta, \zeta, \rho) = (0.01, 0.5, 1)$. 
For numerical simplicity, we set $\pi(\bm{\beta}) =  N(0, 10^6)$ as the noninformative prior. 

Figure \ref{figalpha} compares the priors and marginal posteriors of $\bm{\alpha}$, the first 8 coefficients of I-spline functions, where we find
all the coefficients in the I-splines prior vary significantly from the prior except $\alpha_8$. 
This evidences that the prior-to-posterior updating is sufficiently driven by the data. 

%% Fig 5
\begin{figure}[!htp]
\centering    \includegraphics[width=0.49\textwidth]{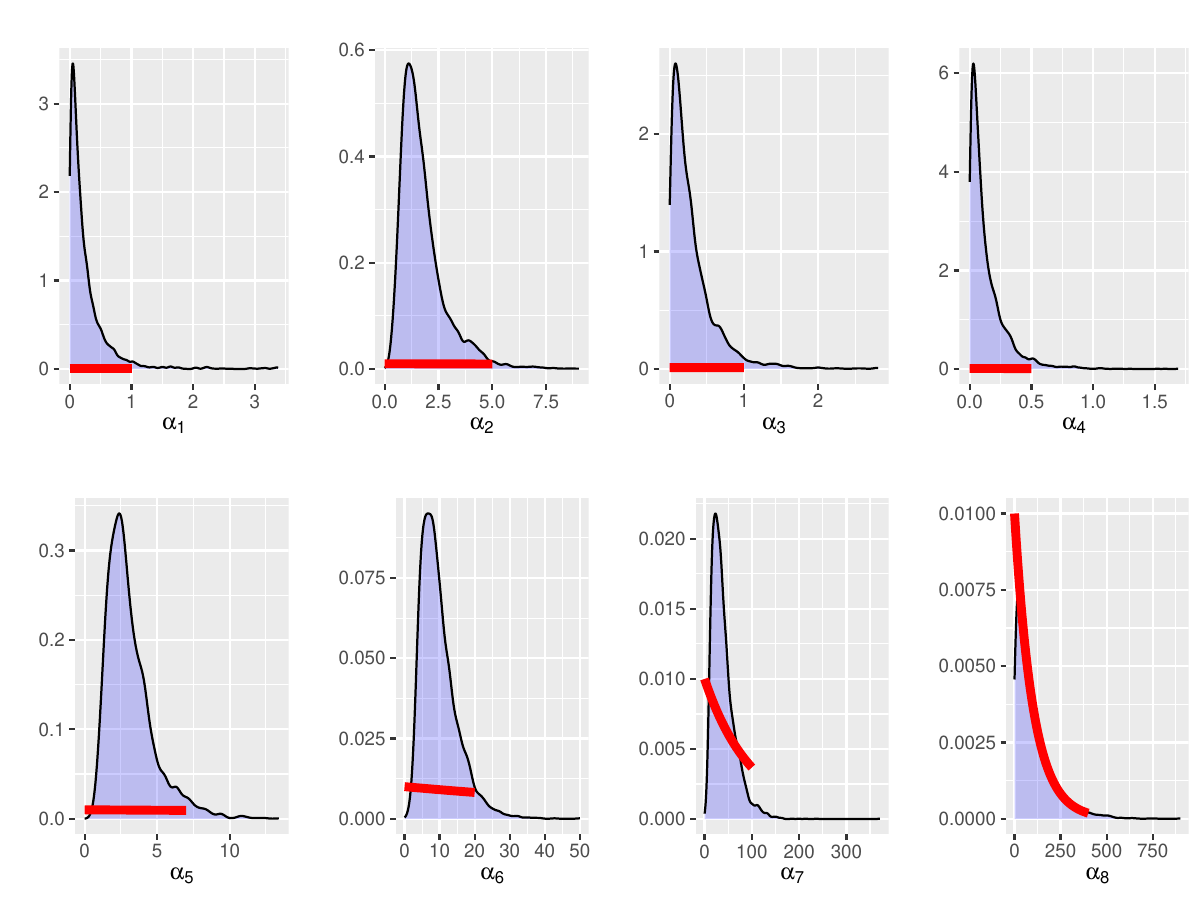}
\caption{\label{figalpha}\footnotesize{Comparison between the the marginal posterior density  and priors of $\alpha_1, \ldots , \alpha_8$. Shaded region, marginal posterior density; Wide line, prior density of $\exp(\eta)$.}}
\end{figure}

We also compare the priors with the marginal posterior of $\bbeta$, the unconstrained parameter sampled from MCMC.
Fig \ref{figbeta} shows an apparent difference between flat priors and marginal posterior of $\bbeta$, demonstrating that the posterior updating is driven by data.  

%% Fig 6
\begin{figure}[!htp]
\centering
\includegraphics[width=0.49\textwidth]{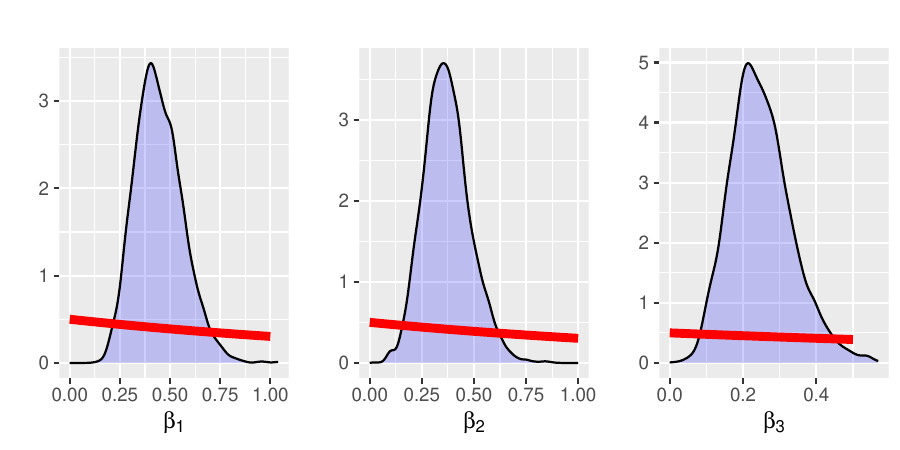}
\caption{\label{figbeta}\footnotesize{Comparison between the the marginal posterior density of $\beta$ without posterior projection and  corresponding priors. The shaded region, posterior density; wide line, flat prior. }}
\end{figure}

\section{Non-applicability of constrained priors }
One may consider other alternative choices of parametric and nonparametric priors for the triplet $(\bbeta, H, S_\xi)$. 
Here we introduce some alternative choices of priors. It includes how to construct constrained priors to make the MTM identified. 
Another construction of I-splines prior with shrinkage prior for $H$ is also given here. 

Our spirit is inspired by Horowitz's normalization conditions \citep{horowitz1996semiparametric}. 
Like the manuscript, we use the unit scale condition that $||\bbeta||=1$ as an equivalent condition of Horowitz's scale normalization. 
Rather than applying posterior projection, we assign the uniform distribution on the $p$-dim unit hypersphere as the prior for the fully identified $\bbeta$. 
It is conducted by the following transformation
$$
\bbeta_* \sim N(0, I), \bbeta = \bbeta_* / ||\bbeta_*||^{1/2}.
$$
Still, we need the location normalization, which assumes that the $H(t_0)=1$ or $h(t_0)=0$  for some finite $t_0$ \citep{horowitz1996semiparametric}.
We adopt the I-spline priors as our initial. 
We formulate $H$ by
$$
H(t) = \sum_{j=1}^{K} \alpha_j B_j(t),
$$
where $K=J+r$ is the number of I-spline functions; see \textit{Section \ref{propSpline}}.  
By the characteristic of I-spline functions on interval $D = (0, \tau)$, if $\sum_{j=1}^{K}\alpha_j = 1$,  $H$ will surely pass the point $(\tau, 1)$. 
Therefore, the location normalization condition is transferred to a sum-to-one restriction, that is, $(\alpha_1, \ldots, \alpha_K)$ fall into a $K$-dim simplex. 
We consider two choices of priors for the $p$-dim simplex.
The first one is the Dirichlet prior
$$
(\alpha_1, \ldots, \alpha_K) \sim \text{Dir}(a_1, \ldots, a_K),
$$
where $\{a_j\}_{j=1}^K$ are hyperparameters of Dirichlet distribution. 
Alternatively, we may consider a kind of transformed prior. For $j=1, \ldots, K$, 
$$
\alpha_j^* \sim \exp(\eta), ~\alpha_j = \alpha_j^* \bigg/ \sum_{j=1}^K \alpha_j^*.
$$
Both these two priors normalize the location of $H$ and therefore, fully identify the transformation function. 

The above priors make the transformation model fully identified. 
However, with these priors, we find that the MCMC procedure by NUTS converges very slowly and suffers from poor mixing.
What's worse, the prediction accuracy is poor. 
These two drawbacks force us NOT to work on a fully identified model.

	\bibliographystyle{apalike}
	\linespread{0.1}
	\selectfont
	\bibliography{main}
	
\end{document}

% --- supplement: main_supp.bib ---

	%\bibliographystyle{natbib}
	
	\def\spacingset#1{\renewcommand{\baselinestretch}%
		{#1}\small\normalsize}
	\spacingset{1.1}
	\date{}
	
	\title{\bf \Large 
		On MCMC mixing for predictive inference under
		unidentified transformation models
	}
	\if1\blind
	{
		\author{ Chong Zhong\thanks{
				\scriptsize
				The author is a Research Associate of Department of Data Science and Artificial Intelligence, The Hong Kong Polytechnic University. },\hspace{.2cm}
			Jin Yang\thanks{  
				\scriptsize
				The author is a Senior Research Fellow of 
				Department of Applied Mathematics, The Hong Kong Polytechnic University.},\hspace{.2cm}
			Junshan Shen\thanks{
				\scriptsize
				The author is an Associate Professor of School of Statistics, Capital University of Economics and Business.},\hspace{.2cm}
			Zhaohai Li \thanks{
				\scriptsize
				The author is a Professor of Department of Statistics, George Washington University, Washington, DC.}, \\
			\hspace{.2cm} and 
			Catherine C. Liu\thanks{
				\scriptsize
				The author is an Associate Professor of Department of Data Science and Artificial Intelligence, The Hong Kong Polytechnic University.}
		}
	}\fi
	
	\if0\blind
	{
		
	} \fi
	%   \small
	%   Department of Applied Mathematics, The Hong Kong Polytechnic University}
	% \author{Chong Zhong$^1$, 
	% Junshan Shen$^2$, 
	% Jin Yang$^3$, Zhaohai Li$^4$, \\ 
	% and Catherine C. Liu$^1$\\
	% {\small \textit{$^1$ Department of Applied Mathematics, The Hong Kong Polytechnic University, HKSAR;}}\\
	% {\small \textit{$^2$ School of Statistics, Capital University of Economics and Business, China;}}\\
	% {\small \textit{$^3$ National Institute of Child Health and Human Development, NIH, USA; }}\\
	% {\small \textit{$^4$ Department of Statistics, George Washington University, Washington, DC, USA. }}
	% }
	\maketitle
	\begin{abstract}
		% Unidentified transformation models preserve the model flexibility and release from complicated identification constraints. 
		Reliable Bayesian predictive inference has long been an open problem under unidentified transformation models, since the Markov Chain Monte Carlo (MCMC) chains of posterior predictive distribution (PPD) values are generally poorly mixed. 
		We address the poorly mixed PPD value chains under unidentified transformation models through an adaptive scheme for prior adjustment. 
		Specifically, we originate a conception of sufficient informativeness, which explicitly quantifies the information level provided by nonparametric priors, and assesses MCMC mixing by comparison with the within-chain MCMC variance. 
		We formulate the prior information level by a set of hyperparameters induced from the nonparametric prior elicitation with an analytic expression, which is guaranteed by asymptotic theory for the posterior variance under unidentified transformation models. 
		The analytic prior information level consequently drives a hyperparameter tuning procedure to achieve MCMC mixing. 
		The proposed method is general enough to cover various data domains through a multiplicative error working model. 
		Comprehensive simulations and real-world data analysis demonstrate that our method successfully achieves MCMC mixing and outperforms state-of-the-art competitors in predictive capability.

		% The multi-modal posterior under unidentified nonparametric models yields poor mixing of   Markov Chain Monte Carlo (MCMC), which is a stumbling block to Bayesian predictions. 
		% In this article, we conceptualize a prior informativeness threshold that is essentially the variance of posterior modes and expressed by the uncertainty hyperparameters of nonparametric priors. 
		% The threshold plays the role of a lower bound of the within-chain MCMC variance to ensure MCMC mixing, and engines prior modification  through hyperparameter tuning to descend the mode variance. 
		% Our method distinguishes from existing postprocessing methods in that it directly samples well-mixed MCMC chains on the unconstrained space, and inherits the original posterior predictive distribution in predictive inference. 
		% Our method succeeds in Bayesian survival predictions under an unidentified nonparametric transformation model, guarded by the inferential theories of the posterior variance, under elicitation of two delicate nonparametric priors. 
		% Comprehensive simulations and real-world data analysis demonstrate that our method achieves MCMC mixing and outperforms existing approaches in survival predictions. 
	\end{abstract}
	{
		{\it Keywords:}  Bayesian nonparametrics; Identifiability; MCMC mixing;  Predictive inference; Prior information level. 
	}
	%\vfill
	
	\newpage
	\spacingset{1.9} % DON'T change the spacing!
	\section{Introduction}
	\label{sec:intro}
	
	We study the linear transformation model \citep{cuzick1988rank}, 
	\begin{align}\label{basicLTM}
		h(y) = \bbeta^\T  \bm{z} + \epsilon,
	\end{align}
	where $y \in \mathcal{Y} \subset \mathbb{R}$ is the response, $\bm{z} \in \mathcal{Z} \subset \mathbb{R}^p$ is the $p$-dimensional vector of covariates, $\bbeta \in \mathbb{R}^p$ is the corresponding vector of regression coefficients,  
	$h(\cdot)$ is a strictly increasing function, and $\epsilon$ is the continuous error term with cumulative distribution function (CDF) $F_\epsilon$.
	Over the past decades, numerous studies have contributed to statistical inference under the transformation model \eqref{basicLTM} \citep[][among others]{horowitz1996semiparametric, linton2008estimation, hothorn2018most, kowal2024monte},  
	and may be categorized into two approaches. 
	
	i)  A common strategy is the semiparametric regression approach that \textit{imposes a transformation on a specified reference distribution for the model error $\epsilon$} \citep[among others]{chen2002semiparametric,hothorn2014conditional, siegfried2023distribution, carlan2024bayesian, kowal2024monte, brachem2024bayesian}. 
	This strategy is straightforward and readily implementable, though it may encounter the risk of \textit{model misspecification}. 
	With this reference distribution strategy, most consistency results were established under log-concave-like assumptions on $F_\epsilon$ \citep[among others]{zeng2006efficient, hothorn2018most}. Such assumptions may fail in practice; in say, a normal mixture regression scenario \citep{soffritti2011multivariate, kasahara2015testing}. 
	
	ii) A second approach is to allow $h$ and $F_\epsilon$ to both be unspecified in model \eqref{basicLTM}. 
	\textit{Identification conditions such as scale and location normalization constraints were imposed on either $h$  or $F_\epsilon$} \citep{horowitz1996semiparametric, ye1997nonparametric, chiappori2015nonparametric}. 
	Despite their robustness and ideal theoretical properties, these methods are usually computationally intractable due to the employed kernel smoothing techniques. 
	Although \cite{chen2002rank} proposed a rank estimator of $h$ that does not require smoothing, he did not consider estimating $F_\epsilon$, and therefore cannot estimate the predictive distribution for future data. 
	\cite{mallick2003bayesian} imposed  a constrained Polya tree prior for $F_\epsilon$ to identify model \eqref{basicLTM}, but the posterior computation may not be stable since the posterior could suffer from slow mixing with an inappropriate center distribution \citep{Muller2015}. 
	
	In this article, we consider a third approach, where we allow the unspecified infinite-dimensional parameters $h$ and $F_\epsilon$ to be \textit{unidentified}. 
	That is, given the data, the likelihood is equal for a range of (infinite-dimensional) parameters $(h, \bm{\beta}, F_\epsilon)$; refer to \citet[pp. 105]{horowitz1996semiparametric} for explicit description. 
	We attempt to avoid complicated identification constraints for feasible computation. 
	Specifically, we focus on Bayesian predictive inference (BPI), i.e. estimating the posterior predictive distribution (PPD) for future observations. 
	
	Though the BPI under unidentified models is \textit{conceptually doable}, the key challenge that remains unresolved is the \textbf{poor mixing} of \textit{PPD value chains} due to unidentifiability.
	Given $n$ observed data pairs $\mathcal{D} = \{y_i, \bm{z}_i\}_{i=1}^n$ , suppose the future response $y^*$ is independent of $\mathcal{D}$ given 
	the future covariates $\bm{z}^*$. 
	The PPD value at a point $s \in \mathcal{Y}$ is 
	$    F_{y^*|\bm{z}^*}(s| \mathcal{D})
	= \int F_y(s|\bm{z}^*, h, \bm{\beta}, F_\epsilon) d\pi(h, \bm{\beta}, F_\epsilon|\mathcal{D}), 
	$
	where $F_y(s|\bm{z}^*, h, \bm{\beta}, F_\epsilon)$ is the conditional CDF at $y=s$ under model \eqref{basicLTM} given parameters $(h, \bm{\beta}, F_\epsilon)$, and $\pi(h, \bm{\beta}, F_\epsilon|\mathcal{D})$ is the joint posterior distribution. 
	In practice, the PPD value is numerically approximated by Markov Chain Monte Carlo (MCMC) draws. 
	Suppose that one draws $M$ parallel MCMC chains of the same length $N_d$, obtaining $MN_d$ draws of $(h^{(ml)}, \bm{\beta}^{(ml)},F_{\epsilon}^{(ml)})$, for $m=1, \ldots, M$, $l=1, \ldots, N_d$. 
	Then the PPD value at $s$ is approximated as the average of the PPD value chains: 
	\begin{align*}
		F_{y^*|\bm{z}^* }(s|\mathcal{D}) \approx (MN_d)^{-1} \sum_{m=1}^M \sum_{l=1}^{N_d} F_y(s|\bm{z}^*, h^{(ml)}, \bm{\beta}^{(ml)}, F_\epsilon^{(ml)}).
	\end{align*}
	However, under the unidentified model \eqref{basicLTM}, this approximation will NOT be reliable if the PPD value chains $F(s|\bm{z}^*, h^{(ml)}, \bm{\beta}^{(ml)}, F_\epsilon^{(ml)})$ are \textbf{poorly mixed} in the sense that the M-chain PPD value samples do NOT converge to the stationary distribution. 
	Poorly mixed PPD value chains usually incur poor BPI; see the lower expected log predictive densities \citep[Corollary 5]{yao2022stacking} for illustration.

	Our solution is an \textit{adaptive scheme that leverages prior adjustment to achieve MCMC mixing}. 
	Our scheme operates like a bridge, on the one side is a new insight that \textit{under unidentified transformation models}, \textit{the posterior variance is (asymptotically) dominated by the information level of the elicited Bayesian nonparametric priors} (BNPs). 
	This insight comes from a new asymptotic posterior variance decomposition, where \textit{the remainder term vanishes at a rate of $n^{-1}$}, and the \textit{dominating term is fully determined by the hyperparameters in BNP elicitation}; refer to Theorem \ref{theo: posterior variance}. 
	On the other side is the common principle that MCMC mixing occurs if the within-chain MCMC variance is sufficiently close to the posterior variance \citep[Section 6.1]{brooks2011handbook}. 
	
	The insight and the principle motivate us to conceptualize a \textit{sufficient informativeness criterion}: if the within-chain MCMC variance exceeds the dominating term (or its approximation) of the posterior variance, then the BNPs are sufficiently informative to reach MCMC mixing;
	accordingly, the \textit{prior information level} is defined by the inverse of the dominating term. 
	This criterion distinguishes the popular practice of computing the empirical between- and within-chain variances \citep[among others]{gelman1992inference, brooks1998general} for discrimination of mixing only, since the analytic expression of the  (approximated) prior information level (refer to Eq. \eqref{lowerbound: approximation}) can further activate expedient prior adjustment to achieve MCMC mixing; refer to Algorithm \ref{alg:Tuning}. 
	
	To derive the prior information level, we design an ideal BNP elicitation: a monotone spline model \citep{ramsay1988monotone} that possesses L\'{e}vy properties \citep{doksum1974tailfree}, and a Dirichlet process mixture model \citep{lo1984class} with a Weibull kernel. 
	The hyperparameters induced from the hyperpriors for the BNP elicitation yield a neat analytic expression for an upper approximation to the prior information level. 
	Consequently, prior adjustment is straightforwardly conducted by an adaptive hyperparameter tuning procedure without specific requirements on the initial values (refer to Section \ref{subsec: simu_illustration} for illustration).

	The major contributions of this article are summarized as follows. 
	\begin{itemize}
		\item We contribute a robust and computationally feasible method for predictive inference under transformation models. 
		Our methodology and theoretical results are general enough to cover the response types considered by conditional transformation models \citep{hothorn2014conditional, carlan2024bayesian}. 
		Specifically, we might be the first to establish the posterior inference theory under an unidentified nonparametric model, including the asymptotic posterior variance (Theorem \ref{theo: posterior variance}) and the properness of the joint posterior (Theorem \ref{theorem:proper}). 
		
		\item We contribute an easily implemented method to address the poor mixing of PPD value chains under unidentified transformation models.
		The hyperparameter tuning procedure is implemented under a general MCMC sampler \texttt{Stan} \citep{carpenter2017stan}, releasing us from developing tricky samplers for multimodal target distributions \citep[e.g.][]{pompe2020framework}. 
		
		\item We contribute a  quantile-knot I-spines BNP for nonnegative monotonic smooth functions. 
		The proposed I-spline model enjoys lower model complexity (a few knots are enough) compared with other I-spline variants \citep[e.g.][]{wang2011semiparametric, kim2017bayesian}, while maintaining the root-$n$ posterior contraction rate that guarantees the asymptotic mixture of normals (Theorem \ref{theo:BvM}).  
		
		\item We develop an \texttt{R} package \texttt{BuLTM}, %a computationally feasible package 
		for BPI under the transformation model \eqref{basicLTM}. 
		Comprehensive numerical studies demonstrate that \texttt{BuLTM} achieves the mixing of PPD value chains, and outperforms other state-of-art (SOTA) competitors in prediction tasks.

	\end{itemize}

	% \subsection{MCMC mixing: from prior informativeness}
	
	% From now on, 
	% let $\pi(\cdot)$, $\pi(\cdot|\mathcal{D})$ be the prior and marginal posterior, and   
	% $\mathbb{E}$, $\mathbb{V}$ be the expectation and variance operators on a probability space, respectively. 
	% We aim to bridge MCMC mixing and the informativeness of a Bayesian nonparametric prior (BNP), inspired by the decomposition of the posterior variance in the following proposition. 

	% There are some informal discussions that ``weakly informative" priors may rescue the poor mixing under unidentified linear models (\cite{reich2019bayesian}; \cite{mcelreath2020rethinking}). 
	% However, these discussions didn't quantify the extent of ``informativeness" needed by the priors. 
	% It drives us to quantify a level of ``sufficient" informativeness Bayesian nonparametric priors (BNPs) to achieving MCMC mixing. 
	% Our spirit is to bridge prior informativeness and MCMC mixing through a posterior variance decomposition. 
	% Our insight comes from the following proposition on the posterior variance. 

	% In our motivation, model $\mathcal{M}$ refers to the NTM \eqref{basicLTM}, and the infinite-dimensional parameters $\theta_1$ and $\theta_2$ refer to $h$ and $f_\epsilon$ respectively. 
	% Proposition \ref{prop: var decomp} is apparent based on the total variance formula. 
	% If $\theta_j$ is unidentified i.e. a set of functions $\theta_j$s generate the same likelihood, it yields a two-compartment decomposition of the posterior variance for fixed $t$. 
	
	% \begin{itemize}
	%     \item The \textbf{mode variance} $\mathbb{V}_{\bm{\theta}_{0, -j}}\{\mathbb{E} (\theta_j(t)|\mathcal{D}, \bm{\theta}_{0, -j})\}$: 
	%  given every ``true" $\bm{\theta}_{0, -j}$,  under some conditions (e.g. conditions for Doob's Theorem \citep{doob1948calcul}), the mode of conditional posterior distribution  $\pi(\theta_j(t)|\mathcal{D},
	%  \bm{\theta}_{0, -j})$ will converge to the ``ground truth" of $\theta_j(t)$, denoted as $\theta_{0j}(t)$, which is uniquely specified by $\bm{\theta}_{0, -j}$.
	%  $\pi(\theta_j(t)|\mathcal{D},
	%  \bm{\theta}_{0, -j})$ and its expectation $\mathbb{E}(\theta_j(t)|\mathcal{D},
	%  \bm{\theta}_{0, -j})$ will respectively concentrate around and converge to the ``ground truth" of $\theta_j(t)$, denoted as $\theta_{0j}(t)$, which is uniquely specified by $\bm{\theta}_{0, -j}$. 
	% Consequently, the mode of $\pi(\theta_j(t)|\mathcal{D},
	%  \bm{\theta}_{0, -j})$ will also converge to $\theta_{0j}(t)$. 
	%  Note that $\bm{\theta}_{0, -j}$ are stochastic processes and sampled from BNPs $\pi(\bm{\theta}_{-j})$. 
	% Therefore, the modes $\theta_{0j}(t)$ are random variables. 
	% In this sense, $\mathbb{V}_{\bm{\theta}_{0, -j}}\{\mathbb{E} (\theta_j(t)|\mathcal{D}, \bm{\theta}_{0, -j})\}$ measures the \textit{variance  of multiple posterior modes} $\theta_{0j}(t)$, namely the \textit{mode variance}. 

	%     \item  The \textbf{local variance} $\mathbb{E}_{\bm{\theta}_{0, -j}}\{\mathbb{V}(\theta_j(t)|\mathcal{D}, \bm{\theta}_{0, -j})\}$: 
	% given every ``true" $\bm{\theta}_{0, -j}$, $\mathbb{V}(\theta_j(t)|\mathcal{D}, \bm{\theta}_{0, -j})$ will diminish no slower than the posterior contraction rate for $L_2$ norm \citep[Theorem 2.1]{ghosal2000convergence}, according to Chebyshev's inequality.
	% given every ``true" $\bm{\theta}_{0, -j}$, if the conditions for the nonparametric Bernstein-von Mises (BvM) phenomenon hold, 
	% $\mathbb{V}(\theta_j(t)|\mathcal{D} , \bm{\theta}_{0, -j})$ will diminish at the order of $n^{-1}$ and thus its expectation diminishes at the same rate; 
	% meanwhile, the asymptotic mixture of normals decides the local behavior under the specific transformation \ref{NTM}. 
	% This leads to the name local variance; refer to Lemma \ref{theo:BvM} on the  posterior distribution of the recast on the nonparametric parameter $h$. 

	% if the conditions for the nonparametric Bernstein-von Mises (BvM) phenomenon hold, 
	% $\mathbb{V}(\theta_j(t)|\mathcal{D} , \bm{\theta}_{0, -j})$ will diminish at the order of $n^{-1}$ and thus its expectation diminishes at the same rate. 

	% For instance, under the proposed BNP in Section \ref{sec: nonpara prior}, Lemma \ref{theo:BvM} shows that the posterior of $H(s)$ given the other infinite-dimensional parameter is close to a mixture of normal, where $H$ is a recast of $h$ in NTM \eqref{basicLTM}.  
	% That is why we name it local variance since it is related to local posterior distribution of $\theta_j(t)$ around $\theta_{0j}(t)$. 
	
	% The clarification of the posterior variance decomposition indicates that under unidentified models, \textit{the mode variance will dominate the posterior variance} since the local variance diminishes as data size increases. 
	% Nevertheless, if $\theta_j$ is identified, the mode variance becomes $0$ and the local variance is exactly the posterior variance, 
	% as the ``truth" is unique and fixed. 
	% Then the BvM phenomenon indicates that the posterior variance diminishes in the order of $n^{-1}$. 

	\noindent{\textbf{Organization. }}
	Section \ref{sec: method} presents an equivalent working model to model \eqref{basicLTM} and formulates the BNPs. 
	Section \ref{sec: nonparametricposteriorinference} presents the adaptive scheme for prior adjustment to achieve MCMC mixing under the unidentified model. 
	Section \ref{sec:parametricprior} establishes the properness of the joint posterior and introduces estimation of the parametric component. 
	Simulations and applications to real-world data are presented in Sections \ref{sec:sim} and \ref{sec:app} respectively. 
	Section \ref{sec:disc} contributes brief discussion. 
	Technical proofs, additional simulation results, and other related details are collected in the \textit{Supplement}. 
	The companion \texttt{R} package \texttt{BuLTM} and the reproducible code for the numerical studies are available on GitHub \href{https://github.com/LazyLaker/BuLTM}{https://github.com/LazyLaker/BuLTM}.
	
	% \noindent{\textbf{Notations.}} 
	% For a parameter $\theta$, denote by $\pi(\theta)$  and $\pi(\theta|\mathcal{D})$ the prior and the posterior respectively. 
	% Denote by $\mathbb{E}_{\theta}$ and $\mathbb{V}_{\theta}$ the expectation and variance operator with respect to parameter $\theta$ under law $\pi(\theta)$. 
	
	\section{Transformed modeling and nonparametirc priors}
	\label{sec: method}
	\subsection{Multiplicative error working model}
	\label{subsec: ModTrans}
	We first perform a transformation $\tilde{h}$ on the response $y$ to transfer its support to (a subset of) $(0, \tau)$, for an arbitrary positive constant $\tau$. 
	In this article, we consider the $\tau$-Sigmoid function $\tilde{h}(y) = \tau/(1+e^{-y})$. 
	Let $\tilde{y} \in (0, \tau)$ be the transformed response and let $\circ$ be the composition of two functions operator. 
	Based on model \eqref{basicLTM}, we still have a  transformation model $h^* (\tilde{y})= \bbeta^T \bm{z} + \epsilon$, 
	where  $ h^*  =  h \circ \tilde{h}^{-1}$ is n monotone increasing function, with $\tilde{h}^{-1}$ being the (known) inverse function of $\tilde{h}$. 
	% In this article, we allow the transformed response $\tilde{y}$ to be either complete or right-censored.  

	Denote the transformation $\exp(h^*)$ by $H$, and accordingly $\xi = \exp({\epsilon})$ with CDF $F_{\xi}$. 
	We further have the following working model which is equivalent to model \eqref{basicLTM} in the sense of identical conditional distribution $F_{y|\bm{z}}$ under the two models
	\begin{align} \label{expmod1}
		H(\tilde{y}) = \xi\exp(\bbeta^\T  \z). 
	\end{align}
	The equivalence is based on the fact that
	$
	Pr\{y \le s|\bm{z}\} = Pr\{h(y) \le h(s)|\bm{z}\} =  Pr\{h^* (\tilde{y}) \le h^*(s) | \bm{z}\}= Pr\{H(\tilde{y}) \le H(s)|\bm{z}\}. 
	$
	Under working model \eqref{expmod1}, the following result holds naturally.
	\begin{proposition}
		\label{prop: H(0) = 0}
		$
		H(0) = 0 
		$ 
		if covariate $\bm{z}$ is independent of model error $\xi$.
	\end{proposition}
	The independence assumption between $\bm{z}$ and $\xi$ is general \citep{cuzick1988rank, horowitz1996semiparametric, chen2002rank}.
	As a result,  the space of $H$ is compressed to the space of nonnegative monotonic functions that passes through the origin. 
	
	\begin{remark}[\textbf{Nonlinearity}]
		\label{rmk: covariate transformation}
		The linear transformation model \eqref{basicLTM} and theworking model \eqref{expmod1} are sufficiently general  to incorporate nonlinear covariate effects. 
		Let $\bm{z} = (z_1, \ldots, z_p)$. 
		Let $\{\phi_{jk}\}_{k=1}^{K_j}$ be some basis functions (e.g. B-spline basis or Fourier basis) on $z_j$'s, for $j=1, \ldots, p$. 
		Let $\tilde{\bm{z}}_j = (\phi_{j1}(z_j), \ldots, \phi_{jK_j}(z_j))^T$ and $\otimes$ be the Kroneker product operator. 
		Based on the tensor product basis \citep{pya2015shape, carlan2024bayesian}, a smooth function $f: \mathbb{R}^p \to \mathbb{R}$ can be rewritten as $f(\bm{z}) = \tilde{\bm{\beta}}^T \tilde{\bm{z}}$, where $\tilde{\bm{z}} = \tilde{\bm{z}_1} \otimes \ldots \otimes \tilde{\bm{z}_p}$, and $\tilde{\bm{\beta}} \in \mathbb{R}^{\prod_{j=1}^p K_j}$. 
		To avoid the curse of dimensionality, one may consider an additive structure for $f$ \citep{linton2008estimation, chen2024semi} such that $f(\bm{z}) = \sum_{j=1}^p f_j(z_j)$, where $f_j(z_j) = \sum_{k=1}^{K_j} \beta_{jk} \phi_{jk}(z_j) \equiv \tilde{\bm{\beta}}^T \tilde{\bm{z}}$, where $\tilde{\bm{z}} = (\tilde{\bm{z}}_1, \ldots, \tilde{\bm{z}}_p)^T$ and $\tilde{\bm{\beta}} = (\beta_{11}, \ldots, \beta_{1K_1}, \ldots, \beta_{p1}, \ldots, \beta_{pK_p})^T$. 
		
	\end{remark}

	\subsection{Bayesian nonparametric priors}
	\label{subsec: nonpara prior}
	
	\subsubsection{Quantile-knot I-splines prior}
	\label{subsec: ispline}
	% \noindent{\textbf{Quantile-knot I-splines prior for $H$}}\\
	Given the aforementioned working model, the observed data $\mathcal{D}$ become independent pairs of $\{\tilde{y}_i, \bm{z}_i\}_{i=1}^n$. 
	For the transformed response $\tilde{y}_i$ observed on the interval $D = (0, \tau)$,   a natural method to model $H$ and its derivative $H'$ is to use the monotone spline basis,
	\begin{align}\label{ispline}
		H(s) = \sum_{j=1}^{K}\alpha_j B_j(s),  ~H'(s) = \sum_{j=1}^{K}\alpha_j B_j'(s),
	\end{align}
	where $\{\alpha_j\}_{j=1}^{K}$ are positive coefficients to guarantee nondecreasing monotonicity, $\{B_j(s)\}_{j=1}^{K}$ are I-spline functions \citep{ramsay1988monotone} on $D$ and $\{B_j'(s)\}_{j=1}^K$ are corresponding derivatives. 
	Once $\{\alpha_j\}_{j=1}^K$ are specified, $H$ and $H'$ are uniquely determined. 
	By Proposition \ref{prop: H(0) = 0}, we set $H(0) = 0$ directly, unlike existing I-splines approaches that include an unknown intercept.  
	A fundamental problem in spline modeling is how to specify the number of basis functions $K$, which is the sum of the number of interior knots and the order of smoothness $r$, defined by the existence of the $(r-1)$th order derivative. 
	Empirically, the degree $r$ may take a value from $2$ to $4$ and we take the default value $r=4$ in \texttt{R} package \texttt{splines2} \citep{splines2-paper}.
	The remaining task is to specify the number and locations of the interior knots. 
	
	We select interior knots from quantiles of the observed data, fitting a quantile-knots I-splines model, rather than using equally spaced knots. 
	Let $\hat{F}_{n}(s) = n^{-1}\sum_{i=1}^n I(\tilde{y}_i  \le  s)$ be the empirical CDF of $\tilde{y}$ and $\hat{Q}_{\tilde{y}}(q) = \hat{F}_{n}^{-1}(q) = \inf \{s: q\le \hat{F}_{n}(s)\}$ be the corresponding empirical quantile function, for $s\in (0, \tau)$ and $q \in (0, 1)$. 
	{We first specify $N_I$, the number of interior knots (we set $N_I = 4$ in this article as the default choice). 
		Then the interior knots are set as $s_j = \hat{Q}_{\tilde{y}}(j/N_I)$, for $j=0, \ldots, N_I - 1$. 
		Such a quantile-knot configuration guarantees that the observed data lie uniformly between the knots. }
	% In the case where the observations are censored, we can further introduce a knot-interpolation procedure to incorporate the information of censored observations; refer to \textit{Supplement} B for details. }

	% In Algorithm \ref{alg:quantile-knot}, Step 1 fixes the initial number of interior knots $N_I$, and then specifies the equally spaced empirical quantiles of fully observed observations as the initial quantiles; 
	% The remaining steps are a supplement to incorporate more information from censored observations so as to overcome the challenge of high censoring rate. 
	% \begin{figure}[!htb]
	%   \centering
	% \includegraphics[scale = .5]{figures/Transformation/QuantileISpline.pdf}
	% \caption{\label{QuantKnots}\footnotesize Example with $5$ initial knots. Step 2 interpolates 3 quantiles of all observed data as interior knots. }
	% \end{figure}
	
	% Take $5$ initial knots i.e. $3$ quartiles and $2$ endpoints of censored transformed observations for instance. 
	% In Figure \ref{QuantKnots}, there are apparent deviations between uncensored and observed curves on the first three interior knots.
	% Therefore, we interpolate by three new knots $s_j^{*} = Q_{\tilde{y}}((j+1)/4)$, for $j=0, 1, 2$.  
	% Finally, we obtain $(s_0^*, s_0, s_1^*, s_1, s_2^*, s_2, s_3, s_4=\tau)$ as our interior knots.
	
	Our quantile-knot I-spline BNP is appealing since one only needs a few knots rather than an increasing number of interior equally spaced knots \citep{wang2011semiparametric, kim2017bayesian}, and hence has lower computational complexity. 
	This BNP is not sensitive to the choice of the number of initial knots; refer to \textit{Supplement} D.3. 
	By assigning independent and identically distributed hyperpriors for the coefficients $\alpha_j$, the proposed quantile-knot I-spline BNP is closely related to the L\'{e}vy process \citep{doksum1974tailfree}; refer to Proposition A.1 in \textit{Supplement}.  
	This proposition guarantees the local asymptotics in Theorem \ref{theo:BvM} below.

	\subsubsection{Dirichlet process mixture model}
	% \noindent{\textbf{Drichlet process mixture model for $f_\xi$}}\\
	For the prior for $F_\xi$ we consider the common Dirichlet process mixture (DPM) model \citep{lo1984class}. 
	Here we employ a truncated stick-breaking construction of the DPM, denoted as 
	\begin{align*}
		F_\xi(\cdot) = \int F_0(\cdot|\bm{u})dG(\bm{u}), ~f_\xi(\cdot) = \int f_0(\cdot|\bm{u})dG(\bm{u}), ~
		G = \sum_{l=1}^L p_l\delta_{\bm{u}_l}, ~\bm{u}_l \sim G_0,
	\end{align*}
	where $F_0$ and $f_0$ are called kernels from a distribution family parameterized by $\bm{u}$, $L$ is a truncation number of the Dirichlet process, $p_l$ are corresponding sticking-breaking weights, and $\bm{u}_l$ are i.i.d. atoms from the base measure $G_0$. 
	More justifications for the truncation level $L$ are deferred to \textit{\textit{Supplement}} B. 
	
	Note that $\xi$ is an arbitrary \textit{continuous positive} random variable. 
	In this article, we select the Weibull kernel for the DPM model,
	\begin{align}
		\label{DPM}
		F_\xi(\cdot) = \sum_{l=1}^L p_l F_w(\cdot|\psi_l, \nu_l), ~ f_{\xi}(\cdot) =(F_\xi)' =\sum_{l=1}^L p_l f_w(\cdot|\psi_l, \nu_l),
	\end{align}
	where $F_w(x|\psi_l, \nu_l) = 1- \exp\{-(x/\psi_l)^{\nu_l}\}$ and $f_w(x|\psi_l, \nu_l) = \nu_l\psi_l^{-\nu_l}x^{\nu_l -1}\exp\{-(x/\psi_l)^{\nu_l}\} $ are the CDF and the pdf of the Weibull distribution with parameters $\{(\psi_l, \nu_l)\}_{l=1}^L$. 
	{Expression \eqref{DPM} yields a Weibull mixture model that has $L$ allocations of mixture components $(\psi_l, \nu_l)$, each with DP weights $p_l$. }

	The above Weibull kernel has at least two advantages: i) it can capture the shape of both monotone and nonmonotone hazards \citep{kottas2006nonparametric}, and ii) it guarantees that the joint posterior under the unidentified working model \eqref{expmod1} is proper; refer to Theorem \ref{theorem:proper}. 
	
	\subsubsection{Exponential hyperpriors and hyperparameters}
	{Our nonparametric prior elicitation is completed by assigning hyperpriors to the parameters in the quantile-knot I-spline prior \eqref{ispline} and DPM model \eqref{DPM}. 
		Let $\bm{\alpha} = \{\alpha_j\}_{j=1}^K$, $\bm{p} = \{p_l\}_{l=1}^L$, $\bm{\psi} = \{\psi_j\}_{j=1}^K$, and $\bm{\nu} = \{\nu_j\}_{j=1}^K$. 
		The hyperprior for $\bm{p}$ is naturally the stick-breaking prior \citep{sethuraman1994constructive}. 
		For $(\bm{\alpha}, \bm{\psi}, \bm{\nu})$, we assign exponential hyperpriors 
		\begin{align}
			\label{hyperprior: exponential}
			\pi(\bm{\alpha}) = \prod_{j=1}^K \text{Exp}(\alpha_j; \eta), ~\pi(\bm{\psi}) =\prod_{l=1}^L  \text{Exp}(\psi_l; \zeta), ~\pi(\bm{\nu}) = \prod_{l=1}^L \text{Exp}(\nu_l; \rho). 
		\end{align}
		The rationale for employing the exponential hyperpriors is straightforward. 
		For $\bm{\alpha}$, a Gamma hyperprior is preferable to link the I-splines model \eqref{ispline} with a Gamma process; refer to Proposition A.1 in the \textit{Supplement}; 
		for $(\bm{\psi}, \bm{\nu})$ in the DPM with Weibull kernels, Gamma hyperpriors are becoming popular choices  \citep{shi2019low}.
		We use exponential hyperpriors to avoid mathematically complicated formulations, though our theoretical results hold for arbitrary Gamma hyperpriors. 
		With exponential hyperpriors, the nonparametric priors for $(H, F_\xi)$ are parameterized by the hyperparameters $(\eta, \zeta, \rho)$. 
	}

	% Without loss of generality, we restrict $\bm{p}$ to be a decreasing series on the $(L-1)$-dimensional probability simplex to avoid label switching issues of $(\bm{p}, \bm{\psi}, \bm{\nu})$ \citep{mena2015bayesian}.
	% With the ordered restriction, $\bm{p}$ are identified, while $(\bm{\alpha}, \bm{\psi}, \bm{\nu}, \bm{\beta})$ remains unidentified.
	
	% For parameters $(\bm{\alpha}, \bm{\psi}, \bm{\nu})$, we consider the following exponential hyperpriors 
	%     $$
	%     \pi(\bm{\alpha}) = \prod_{j=1}^K \text{Exp}(\alpha_j; \eta), ~\pi(\bm{\psi}) =\prod_{l=1}^L  \text{Exp}(\psi_l; \zeta), ~\pi(\bm{\nu}) = \prod_{l=1}^L \text{Exp}(\nu_l; \rho). 
	%     $$
	% Then, the set of hyperparameters is $\widetilde{\bm{\theta}} = (\eta, \zeta, \rho)$. 
	% The rationale for employing the exponential hyperpriors is straightforward. 
	% For $\bm{\alpha}$, a Gamma hyperprior is preferable to link the I-splines model \eqref{ispline} with a Gamma process; refer to Proposition a.1 in the \textit{Supplement}; 
	% for $(\bm{\psi}, \bm{\nu})$ in the DPM with Weibull kernels, Gamma hyperpriors are becoming popular choices  \citep{shi2019low}.
	%  We use exponential hyperpriors to avoid mathematically too complicated formulations on the mode variance, though our theoretical results hold for arbitrary Gamma priors. 
	% We leave the discussion on $\bbeta$ in \ref{sec:parametricprior} since $\bbeta$ does not affect the mode variance of $H(s)$. 

	\section{Adaptive scheme for prior adjustment}
	\label{sec: nonparametricposteriorinference}
	The (infinite-dimensional) parameters in working model \eqref{expmod1} are still unidentified. 
	Suppose equation \eqref{expmod1} holds for a special triplet solution $(H_0, \bbeta_0, F_{\xi_0})$. 
	Then equation \eqref{expmod1} also holds on the set 
	$
	\mathcal{C}\{(H, \bbeta, F_\xi)\} = \{(c_1H_0^{c_2}, c_2\bbeta_0, F_{c_1\xi_0^{c_2}})\}
	$
	for any pair of positive constants $(c_1, c_2) \in \mathbb{R}_{+}^2$.
	In this section, we introduce an adaptive scheme to address the poorly mixing of PPD value chains under the unidentified working model \eqref{expmod1}. 
	In Section \ref{subsec: pv decomp}, we focus on the asymptotic posterior variance first. 
	If the posterior variance is divergent, no mixing results can be guaranteed; 
	otherwise, it is possible for a general MCMC sampler to sufficiently explore the posterior uncertainty. 
	In Section \ref{subsec: sufficient informative}, we formulate the sufficient informativeness criterion based on the theoretical results in Section \ref{subsec: pv decomp}, and elucidate how to use the criterion to adaptively tune the hyperparameters to achieve MCMC mixing for trustworthy BPI. 
	From now on, denote the expectation and variance operator with respect to a parameter $\theta$ under law $\pi(\theta)$ by $\mathbb{E}_{\theta}$ and $\mathbb{V}_{\theta}$ respectively, where $\pi(\theta)$ denotes the prior distribution of the parameter $\theta$.  
	
	\subsection{Posterior variance under transformation models}
	\label{subsec: pv decomp}
	
	Specifically, we focus on $\mathbb{V}\{H(s)|\mathcal{D}\}$, the posterior variance of $H(s)$ for some specific $s \in (0, \tau)$. 
	Our motivation is the following conditional cumulative hazard function of the transformed response $\tilde{y}$ given covariates $\bm{z}$. 
	With the nonparametric prior elicitation \eqref{ispline} and \eqref{DPM}, for $s \in (0, \tau)$, we have 
	\begin{eqnarray}
		\label{G_xi}
		\begin{aligned}
			\Lambda_{\tilde{y}|\bm{z}}(s)  = \log\left\{\sum_{l=1}^L p_l \exp\left(-\left\{\frac{\sum_{j=1}^K \alpha_j B_j(s) \exp(-\bbeta^T \z)}{\psi_l}\right\}^{\nu_l}\right)\right\}.  \\
		\end{aligned}
	\end{eqnarray}
	In \eqref{G_xi}, the DPM components $(\bm{p}, \bm{\psi}, \bm{\nu})$ encounter the label-switching issue since the conditional cumulative hazard $\Lambda_{\tilde{y}|\bm{z}}$ is invariant {under any permutations of the indices of the allocation $(\psi_l, \nu_l)$}; refer to \citep[pp. 1156]{mena2015bayesian} for general {illustration}. 
	As a result, it is impossible to identify these parameters individually even if $F_{\xi}$ is specified. 
	Fortunately, for a fixed $s \in (0, \tau)$, $H(s)$ does NOT encounter the label-switching issue as any permutations of DPM components has no impact on $\bm{\alpha}$ or $H(s)$, since $H$ is fully determined by $\bm{\alpha}$. 
	This fact partially explains why we focus on the posterior variance of $H(s)$. 
	% Furthermore, the MCMC chains of PPD value at $s$ are mixed if the MCMC chains of $H(s)$ are mixed.
	% Note that the PPD is always unique. 
	% Hence, once the drawn MCMC samples of $H(s)$ converge to the target distribution, the corresponding samples of $F_\xi(s)$ and $\bm{\beta}$ will also converge to their target distributions since $(H, F_\xi, \bm{\beta})$ must appear in a triplet, and so do the samples of the PPD value at $s$. 
	
	\subsubsection{Preliminary: identified scenario}
	We start from a preliminary result in the case where $F_{\xi}$ is specified. 
	With DPM model \eqref{DPM}, specifying $F_\xi =  F_{\xi_0}$ is equivalent to specifying $(\bm{p}, \bm{\psi}, \bm{\nu})$ at the ground truth $(\bm{p}_0, \bm{\psi}_0, \bm{\nu}_0)$. 
	The following conditions are further assumed. \\
	(A1) All transformed response $\tilde{y_i}$ are distinct. \\
	(A2) There exists a constant $ 0< M_{\z} < \infty$ such that  $||\z||_1 < M_{\z}$ with probability 1.\\
	(A3) The prior $\pi(\bbeta)$ is continuous and $\pi(\bbeta) >0 $ on $\mathbb{R}^p$. \\
	(A4) The ``true" $F_{\xi_0}$ can be expressed in the form of \eqref{DPM}; in \eqref{DPM}, $p_l > \delta$ for some positive constant $\delta$, $\sum_{l=1}^L \nu_l < \infty$ for $l = 1, \ldots, L$. 
	
	Conditions (A1), (A2), and (A3) are general conditions in the literature for semiparametric Bernstein-von Mises (BvM) results \citep{kim2006bernstein, kim2017bayesian}. 
	Condition (A4) requires $F_{\xi0}$ to be from the Weibull-kernel DPM family. 
	In practice, (A4) can be relaxed so that the ``true" $F_{\xi_0}$ falls into Weibull-kernel DPM's 
	Kullback-Leibler neighborhood, which is quite general \citep[Theorem 13]{wu2008kullback}. 
	
	The following theorem describes the asymptotic marginal posterior distribution of $H(s_j)$, with ``ground truth" $F_{\xi_0}$ given. 
	The proof is deferred to \textit{Supplement} A. 
	\begin{theorem}[Asymptotic mixture of normals]
		\label{theo:BvM}
		Suppose the ``ground truth" $(\bm{p}_0, \bm{\psi}_0, \bm{\nu}_0)$ is known. 
		Let $H_0$ be the corresponding ``true" transformation. 
		Under conditions (A1) to (A4), with nonparametric priors \eqref{ispline} and \eqref{DPM}, and hyperprior \eqref{hyperprior: exponential}, for prespecified interior knots $s_j$ of the I-spline basis, for $j=1, \ldots, J$, as the data size $n \to \infty$, we have 
		$$
		\pi[\sqrt{n} \{H(s_j) - H_0(s_j)\}|\mathcal{D}, \bm{p}_0, \bm{\psi}_0, \bm{\nu}_0] \xrightarrow{~~\text{d}~~~} \sum_{l=1}^L p_{l0} N\left \{0, ~p_{l0}^{-1} \left(\frac{\psi_{l0}}{\nu_{l0}}\right)^2
		H_0(s_j)^{\frac{2}{\nu_{l0}}-2}U_l(s_j)\right\}, 
		$$
		where $U_l(s) = \int_0^s \{S_l^0(\mathcal{D}, \bbeta_0)\}^{-1} d\Lambda_
		{l0}(s)$, with $\Lambda_{l0}(s) = \{H_0(s)/\psi_l\}^{\nu_l}$  
		and $S_l^0(\mathcal{D}, \bbeta_0)$ is some positive constant depending on $\bbeta_0$, $\nu_{l0}$ and data $\mathcal{D}$, for $l=1, \ldots, L$. 
	\end{theorem}
	
	Theorem \ref{theo:BvM} relies on the fact that $H(s_j)$ are sampled from a L\'{e}vy process $\mathcal{H}$ (refer to Proposition A.1 in the \textit{Supplement}), which guarantees that $n^{1/2}(\mathcal{H} - H_0)|\mathcal{D}$ weakly converges to a mixture of Gaussian processes. 
	Consequently, the local posterior on a specific $s_j$ converges to a mixture of normals. 
	The mixture of normals in Theorem \ref{theo:BvM} comes from the mixture structure of $F_{\xi 0}$. 
	Theorem \ref{theo:BvM} also holds for censored data under the condition $\lim \limits_{n \to \infty} n_1/n  >0$, where $n_1$ denotes the number of uncensored observations. 
	In the special proportional hazard case where $L=1$ and $\psi_{10} = \nu_{10} = 1$, Theorem \ref{theo:BvM} reduces to the BvM theorem \citep[Theorem 3.3]{kim2006bernstein}. 
	\begin{remark}
		Theorem \ref{theo:BvM} can be extended to establish $\sqrt{n}$-consistency of $H(s)$ for all $s \in (0, \tau)$ by further assuming the following conditions: i) the number of knots $J\equiv J_n \to \infty$ as $n \to \infty$ such that $\max\limits_{j=1, \ldots, J} |s_j - s_{j+1}| \lesssim n^{-1/2}$; ii) $H_0$ is absolutely continuous on $[0, \tau]$. 
		Nevertheless,  empirically a few knots are sufficient for estimation of PPDs. 
		Meanwhile, to derive the sufficient informativeness criterion (refer to Criterion \ref{criterion: MCMC mixing} in the next subsection), we only need the $\sqrt{n}$-consistency of $H$ with respect to each knot $s_j$.     
	\end{remark}

	% Note that the distributional approximation to the posterior of $H(s_j)$ is strictly unimodal, differing from those multi-modal normal mixture (e.g. \cite{castillo2015bayesian}; \cite{li2024semiparametric}). 
	% Therefore, given $F_{\xi_0}$ specified, both the posterior mode and the posterior mean of $H(s_j)$ converge to the corresponding ``ground truth" $H_0(s_j)$.

	\subsubsection{Unidentified scenario}
	Under the unidentified model \eqref{expmod1}, where $F_\xi$ is drawn from the DPM model \eqref{DPM},  the ``ground truth" $F_{\xi_0}$ is no longer a fixed distribution, but, a sample of random functions.
	In this case, the posterior becomes multi-modal and the posterior variance will not vanish anymore.
	The following theorem formulates the asymptotic posterior variance of $H(s_j)$ in the unidentified scenario. 
	
	\begin{theorem}[Asymptotic posterior variance]
		\label{theo: posterior variance}
		Assume conditions (A1) to (A4). 
		Let $w_{j'} = B_{j+j'}(s_j) -  B_{j+j'}(s_{j-1})$ for $j' = 1, \ldots, r$ in the I-splines model \eqref{ispline}. 
		As $n \to \infty$, under model \eqref{expmod1}, for parameter $\nu_1$ in DPM model \eqref{DPM},  with hyperprior \eqref{hyperprior: exponential}, there exist series of positive constants $\{c_{lj}\}_{l=1}^L$ and $\{r_{l}\}_{l=1}^L$ with $r_{1} = 1$,  such that for $j=1, \ldots, J$, 
		$g_{s_j}(\nu_1, \eta, \zeta)  \equiv \zeta \sum_{l=1}^L c_{lj}^{\frac{1}{r_l\nu_1}} + \eta (j+ \sum_{j'=1}^r w_{j'})^{-1},$ and 
		\begin{eqnarray}
			\label{lowerbound: posteriorvariance}
			{
				\begin{aligned}
					\mathbb{V}\{H(s_j)| \mathcal{D}\} &= \left[\mathbb{V}_{\nu_1}\left\{
					g_{s_j}^{-1}(\nu_1, \eta, \zeta) 
					\right\} + \mathbb{E}_{\nu_1}\left\{g_{s_j}^{-2}(\nu_1, \eta, \zeta) \right\}\right]  + O(n^{-1}), \\
					&\equiv \mathcal{V}_{s_j}(\eta, \zeta, \rho) +  O(n^{-1}).  
				\end{aligned}
			}
		\end{eqnarray}
	\end{theorem}
	The first term on the RHS of \eqref{lowerbound: posteriorvariance} can be fully expressed in terms of the hyperparameters $(\eta, \zeta, \rho)$ since $\nu_1$ is integrated out. 
	The second term is a remainder that vanishes at a rate of $n^{-1}$, which is a direct consequence of Theorem \ref{theo:BvM}. 
	Indeed, Theorem \ref{theo: posterior variance} is an explicit form of the following law of total variance under working model \eqref{expmod1}
	\begin{eqnarray}
		\label{total variance}
		\begin{aligned}
			\mathbb{V}\{H(s_j)|\mathcal{D}\} = \underbrace{\mathbb{V}_{F_{\xi_0}}\{\mathbb{E}(H(s_j)|\mathcal{D}, F_{\xi_0})\}}_{\mathcal{V}_{s_j}(\eta, \zeta, \rho) } + \underbrace{\mathbb{E}_{F_{\xi_0}}\{\mathbb{V}(H(s_j)|\mathcal{D}, F_{\xi_0})\}}_{O(n^{-1})}. 
		\end{aligned}
	\end{eqnarray}
	In \eqref{total variance}, we call $\mathbb{V}_{F_{\xi_0}}\{\mathbb{E}(H(s_j)|\mathcal{D}, F_{\xi_0})\}$ the \textbf{mode variance} since it is the variance of the posterior modes $\mathbb{E}(H(s_j)|\mathcal{D}, F_{\xi_0})$, and call $\mathbb{E}_{F_{\xi_0}}\{\mathbb{V}(H(s_j)|\mathcal{D}, F_{\xi_0})\}$ the \textbf{local variance} since it is the average of the variance around each local mode of the posterior.
	Obviously, if the model is identified, the mode variance disappears since the posterior mode is unique and fixed. 
	In this unidentified scenario, the take-home messages of Theorem \ref{theo: posterior variance} are: i) under unidentified transformation models, asymptotically, the posterior variance will be dominated by the mode variance, which is expressed by the hyperparameters $(\eta, \zeta, \rho)$;
	ii) the large mode variance accounts for the poor mixing of MCMC chains, since the multiple chains should be sufficiently dispersed, if the single-chain variation is not enough to recover the mode variance.

	% Theorem \ref{theo: posterior variance} is a consequence of the following total variance formula
	% \begin{align*}
	% \mathbb{V}\{H(s_j)|\mathcal{D}\} =  \mathbb{V}_{F_{\xi_0}}\{\mathbb{E}(H(s_j)|\mathcal{D}, F_{\xi_0})\} + \mathbb{E}_{F_{\xi_0}}\{\mathbb{V}(H(s_j)|\mathcal{D}, F_{\xi_0})\}, ~j=1, \ldots, J.  
	% \end{align*}
	% An important message delivered by Theorem \ref{theo:BvM} is that once ``ground truth" $F_{\xi_0}$ is specified, $\mathbb{V}\{H(s_j)|\mathcal{D}, F_{\xi_0}\}$, 
	% the posterior variance of $H(s_j)$ has a convergence rate of $n^{-1}$. 
	% Consequently,  with an increasing data size $n$, the term $\mathbb{V}_{F_{\xi_0}}\{\mathbb{E}(H(s)|\mathcal{D}, F_{\xi_0})\}$ dominates the posterior variance of $H(s_j)$, while the contribution of $\mathbb{E}_{F_{\xi_0}}\{\mathbb{V}(H(s)|\mathcal{D}, F_{\xi_0})\}$ can be omitted as $O(n^{-1})$. 
	% Then Theorem \ref{theo: posterior variance} 
	
	% The following theorem tells the asymptotic behavior of the conditional posterior variance of $H(s_j)$ in \eqref{ispline} under model \eqref{expmod1}, for $j=1, \ldots, J$. 
	
	To calculate the mode variance, we still needs to know the two positive constant series $\{c_{lj}\}_{l=1}^L$ and $\{r_l\}_{l=1}^L$, which are, however, unobservable. 
	Fortunately, they have specific interpretations. 
	Based on \eqref{G_xi}, given ``true" $\bm{\nu}_0$, at each knot $s_j$, the ratio between ``true" parameters $\bm{\psi}_0$ and $H_0$
	$$
	c_{lj} \equiv \left\{\frac{\psi_{l0}}{H_0(s_j)} \right\}^{\nu_{l0}} = \left\{\frac{\psi_{l0}}{\sum_{j=1}^K \alpha_{0j} B_j(s_j)} \right\}^{\nu_{l0}}, ~ j =1, \ldots, K
	$$
	is uniquely determined, for $l = 1, \ldots, L$.
	Furthermore, we can show that all ``true" $\bm{\nu}_0$ fall in the space $\{(\nu_{10}, \ldots, \nu_{L0}): \nu_{l0}/\nu_{10} = r_l, l=2, \ldots, L\}$, where $r_l$ are some fixed positive constants; refer to Proposition A.4 in the \textit{Supplement}. 
	Based on these interpretations,  we present an approximation to $\{c_{lj}\}_{l=1}^L$ in our adaptive scheme in the next subsection. 
	
	\subsection{Sufficient informativeness for MCMC mixing}
	\label{subsec: sufficient informative}
	The above decomposition of the posterior variance  motivates an adaptive scheme for prior adjustment to achieve the mixing of PPD value chains under unidentified transformation models. 
	For a specific knot $s_j$, let $\mathbb{V}_{\text{WI}}(H(s_j)|\mathcal{D})$ be the within-chain MCMC variance of $H(s_j)$ of $M$ MCMC chains of length $N_d$:  
	$$
	\mathbb{V}_{\text{WI}}(H(s_j)|\mathcal{D}) = M^{-1}\sum_{m=1}^M\sum_{l=1}^{N_d} \{H^{(ml)}(s_j) - \bar{H}^{(m)}(s_j)\}^2, ~ \bar{H}^{(m)}(s_j) = N_d^{-1} \sum_{l=1}^{N_d}H^{(ml)}(s_j). 
	$$
	The following criterion assesses the MCMC mixing via the mode variance $\mathcal{V}_{s_j}$. 
	
	\begin{criterion}[Sufficient informativeness criterion]
		\label{criterion: MCMC mixing}
		Under working model \eqref{expmod1}, the chains of PPD value of the transformed response $\widetilde{y}$ at the point $s_j$ are well mixed if $
		\mathbb{V}_{\text{WI}}\{H(s_j)|\mathcal{D}\} \ge \mathcal{V}_{s_j}(\eta, \zeta, \rho),  
		$ 
		for $j=1, \ldots, J$. 
		Then the BNPs for $H$ and $F_\xi$ are sufficiently informative. 
	\end{criterion}
	
	Criterion \ref{criterion: MCMC mixing} identifies the mixing of PPD value chains if the within-chain MCMC variance $\mathbb{V}_{\text{WI}}\{H(s_j)|\mathcal{D}\}$ exceeds the mode variance.
	Criterion \ref{criterion: MCMC mixing} stands on two facts: i) in well mixed MCMC chains, $\mathbb{V}_{\text{WI}}\{H(s_j)|\mathcal{D}\}$ should approach (from below) the posterior variance $\mathbb{V}\{H(s_j)|\mathcal{D}\}$ based on the ergodic theorem \citep{birkhoff1942ergodic}, and ii) the mode variance $\mathcal{V}_{s_j}(\eta, \zeta, \rho)$ is smaller than but dominates $\mathbb{V}\{H(s_j)|\mathcal{D}\}$ based on Theorem \ref{theo: posterior variance}. 
	Consequently, we compare the within-chain variance with the mode variance to examine the convergence of MCMC chains to the target distribution. % from the aspect of variance. 
	
	% \begin{remark}
	% Note that $
	% \mathbb{V}_{\text{WI}}\{H(s_j)|\mathcal{D}\} \ge \mathcal{V}_{s_j}(\eta, \zeta, \rho)$ indeed indicates the mixing of chains of $H(s_j)$. 
	% In this sense, Criterion \ref{criterion: MCMC mixing} actually examines the mixing of PPD value chains based the mixing of transformation value chains. 
	
	\begin{remark}
		Note that the mode variance $\mathcal{V}_{s_j}$ is a function of the hyperparameters $(\eta, \zeta, \rho)$ for BNP elicitation \eqref{ispline} and \eqref{DPM}. 
		The hyperparameters $(\eta, \zeta, \rho)$ fully determine the uncertainties of the hyperpriors. 
		Consequently, we call $\mathcal{V}_{s_j}^{-1}$ the ``\textbf{prior information level}": the smaller the mode variance, the more informative the BNPs are. 
		Nevertheless, too informative BNPs can hinder the prior-to-posterior updating. 
		Thus, our criterion uses the inverse of the within-chain MCMC variance $\mathbb{V}_{\text{WI}}^{-1}\{H(s_j)|\mathcal{D}\}$ as a lower bound to determine the ``sufficient" prior information level that achieves MCMC mixing and avoids slow posterior sampling. 
	\end{remark}
	
	Intuitively, by tuning the hyperparameters $(\eta, \zeta, \rho)$, we can increase the prior information level (or equivalently, decrease the mode variance $\mathcal{V}_{s_j}$) to satisfy Criterion \ref{criterion: MCMC mixing}. 
	The remaining question is to approximate the unobservable 
	constant series $\{c_{lj}\}_{l=1}^L$ and $\{r_l\}_{l=1}^L$. 
	
	By observing the  form  of $g_{s_j}(\nu_1, \eta, \zeta)$ on the RHS of \eqref{lowerbound: posteriorvariance}, 
	if there exists a knot $s_j$ such that $c_{lj} \approx 1$, we can cancel $\nu_1$ and obtain a simple closed-form approximation for $\mathcal{V}_{s_j}(\eta, \zeta, \rho)$. 
	Particularly, we only need a lower bound for $\mathcal{V}_{s_j}(\eta, \zeta, \rho)$ since Criterion \ref{criterion: MCMC mixing} requires that the within-chain variance exceeds the information level. 
	Therefore, to apply Criterion \ref{criterion: MCMC mixing}, it suffices to distinguish whether $c_{lj} < 1$ or not. 
	Based on \eqref{G_xi}, we have 
	\begin{align*}
		F_{\tilde{y}|\bm{z} = \bm{0}_p}(s_j) =1- \sum_{l=1}^L p_l \exp\left[-\left\{\frac{H(s_j)}{\psi_l}\right\}^{\nu_l}\right]\equiv 1- \sum_{l=1}^L p_l \exp\left(-c_{lj}^{-\nu_l}\right). 
	\end{align*}
	Suppose there exists $s_{j_0}$ such such that $F_{y^*|\bm{z} = \bm{0}_p}(s_j) \ge 1 - e^{-1}$. 
	We have $\sum_{l=1}^L p_l c_{lj_0}^{-\nu_l} \ge 1$. 
	That is, for $l=1, \ldots, L$, there exists at least one $c_{l_{j_0}} < 1$. 
	Then, we obtain the analytic expression of a lower approximation to $\mathcal{V}_{s_{j_0}}$ by replacing $c_{lj_0}$ to 1
	\begin{align}
		\label{lowerbound: approximation}
		{
			\mathcal{V}_{s_{j_0}} \ge \left(L\zeta   + \frac{\eta}{ j+ \sum_{j'=1}^r w_{j'}} \right)^{-1} + \left(L\zeta   + \frac{\eta}{j+ \sum_{j'=1}^r w_{j'}} \right)^{-2} \equiv \tilde{\mathcal{V}}_{s_{j_0}}(\eta, \zeta). 
		}
	\end{align}
	To use this approximation, we have to first specify the knot $s_j$. 
	In practice, we consider the knot $s_{j_0}$ in the I-splines model \eqref{ispline} such that $s_{j_0}$ is the smallest among the knots that are greater than the $1-e^{-1}$ quantile of the transformed responses. 
	We summarize this as the following criterion, an applicable version of Criterion \ref{criterion: MCMC mixing}. 
	{
		\begin{criterion}[Applicable sufficient informativeness criterion]
			\label{cri: weaklyinformative NTM}
			Under the working model \eqref{expmod1}, suppose we draw $M>1$ parallel MCMC chains. 
			In the I-splines model \eqref{ispline}, let 
			$
			s_{j_0} = \hat{Q}_{\tilde{y}} (q_0/N_I)$ be the specific knot used for our criterion , where $q_0 = \min\limits_{q = 0, \ldots, N_I-1} \left\{1- q/N_I < e^{-1} \right\}
			$. 
			%     $$
			% s_{j_0} = \max\left\{\hat{Q}_{\tilde{y}_c} (q_0/N_I),   \hat{Q}_{\tilde{y}} (q_0/N_I) \right\}, ~~
			% q_0 = \min\limits_{q = 0, \ldots, N_I-1} \left\{1- \frac{q}{N_I } < e^{-1} \right\}
			% $$
			
			Then the BNPs are sufficiently informative if
			\begin{align}
				\label{threshold}
				\mathbb{V}_{WI}\left\{H(s_{j_0})|\mathcal{D}\right\} \ge \tilde{\mathcal{V}}_{s_{j_0}}(\eta, \zeta).  
			\end{align}
		\end{criterion}
	}
	% Criterion \ref{cri: weaklyinformative NTM} compares the within-chain MCMC variance $\mathbb{V}_{WI}\left\{H(s_{j_0})|\mathcal{D}\right\}$ with $\tilde{\mathcal{V}}_{s_{j_0}}(\eta, \zeta)$, an approximate lower bound for the mode variance. 
	Criterion \ref{cri: weaklyinformative NTM} requires pre-configuration of the hyperparameter $\rho$ before MCMC sampling. 
	We recommend specifying $\rho = 1$ such that $E(\nu_l) = 1$, which is the same as the expectation of the LIO Weibull kernel hyperprior \citep[pp. 690]{shi2019low}.

	\begin{algorithm*}[!htb]\footnotesize
		\caption{Adaptive tuning of hyperparameters $(\eta, \zeta)$ to reach MCMC mixing. }\label{alg:Tuning}
		\begin{algorithmic}[1]
			\State Specify $s_{j_0}$ in Criterion \ref{cri: weaklyinformative NTM}  based on data $\mathcal{D}$;   set initial values for $(\eta, \zeta) \leftarrow  (\eta_0, \zeta_{0})$. 
			\State  Draw $M > 1$ MCMC chains with $N_d$ draws and examine the mixing by Criterion \ref{cri: weaklyinformative NTM}. 
			\If{inequality \eqref{threshold} does not hold}
			\State Select candidates $(\eta_{\text{new}}, \zeta_{\text{new}})$ and set $\eta \leftarrow \eta_{\text{new}}$, $\zeta \leftarrow \zeta_{\text{new}}$;  repeat 2 until Criterion \ref{cri: weaklyinformative NTM} is met. 
			\EndIf
		\end{algorithmic}
	\end{algorithm*}
	
	The adaptive tuning procedure to select hyperparameters $(\eta, \zeta)$ is summarized in Algorithm \ref{alg:Tuning}.
	The choice of the initial values $(\eta_0, \zeta_{0})$ is arbitrary. 
	We recommend starting from very small values $(\eta_0, \zeta_0)$ to elicit ``noninformative" BNPs. 
	The selection of tuning candidates $\eta_{\text{new}}$ and $\zeta_{\text{new}}$ and the tuning procedure is illustrated and visualized in Section \ref{subsec: simu_illustration}. 
	The number of MCMC draws $N_d$ in each chain is related to the effective sample size and the MCMC sampler used. 
	In \texttt{Stan}, we recommend using a chain length of $N_d = 500$ (after a warm-up phase of the same length).
	Sensitivity analysis finds that longer MCMC chains will not change the tuning result. 
	Detailed discussions on the chain length needed are deferred to \textit{Supplement} D.1.

	\section{Joint posterior and parametric estimation}
	\label{sec:parametricprior}
	In this section, we attempt to answer the following two questions related to Bayesian inference under the working model \eqref{expmod1}:
	the first and most basic question is ``is the joint posterior proper without identifiability?" and 
	the next question is ``can we estimate $\bm{\beta}$ with correct uncertainty quantification?"

	\subsection{The joint posterior is proper}
	\label{subsec: proper posterior}
	Let $\bm{\theta} = (\bm{\alpha}, \bm{\psi}, \bm{\nu},  \bm{p}, \bm{\beta} )$ be the collection of all parameters under working model \eqref{expmod1}. 
	For the parametric component $\bbeta$, we consider the objective improper uniform  prior $\pi(\bm{\beta}) \propto 1$. 
	The following general conditions are assumed.  \\
	(B1) $\pi(\bm{p})$, $\pi(\bm{\psi})$, and $\pi(\bm{\nu
	})$ in model \eqref{DPM} and $\pi(\bm{\alpha})$ in model \eqref{ispline} are proper; \\
	(B2) $0 < K, L < \infty$ in models \eqref{DPM} and \eqref{ispline}; \\
	(B3) The kernel $f_w$ in model  \eqref{DPM} satisfies $xf_w(x) < \infty$ for all $x>0$; \\
	(B4) The $n \times p$ covariate matrix $\bm{Z}$ is of full rank $p$. \\
	Conditions (B1) and (B2) are naturally satisfied, and (B3) is satisfied by the Weibull kernel. 
	(B4) is similar to the condition $(ii)$ in \cite{de2014bayesian}, which is practical and easily validated. 
	The following theorem tells us that, even with an improper prior for $\bbeta$, the joint posterior of $\bm{\theta}$ is still proper. 
	The proof is deferred to \textit{Supplement} A.5. 
	\begin{theorem}
		\label{theorem:proper}
		Assume conditions (A1) to (A3) and (B1) to (B4). 
		With the improper uniform prior for $\bbeta$, under model \eqref{expmod1}, the posterior of $\bm{\theta}$ is proper. 
	\end{theorem} 
	
	Theorem \ref{theorem:proper} contradicts the results for unidentified parametric linear models, where proper priors lead to improper posteriors \citep{GelfandSahu1999JASA}. 
	This observation may imply that the infinite-dimensional parameters play a dominant role if a nonparametric model also has parametric components. 
	Theorem \ref{theorem:proper} can be further extended to right-censored data by relaxing (B4) to $\bm{Z}^*$, the $n_1 \times p$ covariate matrix of uncensored observations is of full rank $p$. 
	
	% Since the posterior of $\bm{\theta}$ is proper and the domain is continuous, we adopt the the No-U-Turn sampler \citep{hoffman2014no}
	% as the MCMC sampler to sample posterior $\pi(\bm{\theta}|\mathcal{D})$.
	% We implement NUTS in \texttt{Stan} \citep{carpenter2017stan} and develop our \texttt{R} package \texttt{BuLTM} based on \texttt{Stan}.  
	% We name our method \texttt{BuLTM} hereafter. 

	\subsection{Parametric estimation with posterior projection}
	\label{subsec: identified parametric estimation}
	Under unidentified model \eqref{expmod1}, the marginal posterior intervals of $\bbeta$ are generally too long to correctly quantify the uncertainty \citep{gelman2013bayesian}. 
	Therefore, we are driven to obtain the posterior of $\bbeta^*$, the identified counterpart of $\bbeta$ with certain normalization. 
	Specifically, we consider unit-norm normalization such that $||\bbeta^*||_2 = 1$, where $||\cdot||_2$ denotes the Euclidian norm on $\mathbb{R}^p$. 
	This differs from the element-one constraint, which needs extra effort to choose the covariate with coefficient fixed at 1 (\cite{song2007semiparametric}; \cite{lin2017robust}; among others). 
	
	Rather than sampling $\bbeta^*$ from the constrained space directly, we adopt the posterior projection \citep{sen2022constrained} to project the marginal posterior of unconstrained $\bbeta$ to the constrained parameter space of $\bbeta^*$, the unit hyper-sphere $\text{St}(1, p)$ in $\mathbb{R}^p$.
	The metric projection operator $m_\mathcal{A}: \mathbb{R}^p \to \mathcal{A}$ of a set $\mathcal{A}$ is 
	\begin{align*}
		m_{\mathcal{A}}(\bm{x}) = \{\bm{x}^*\in \mathcal{A}: ||\bm{x}-\bm{x}^*||_2=\inf \limits_{\bm{v}\in \mathcal{A}} ||\bm{x} - \bm{v}||_2\}.
	\end{align*}
	By definition, the metric projection of a vector $\bm{\beta} \in \mathbb{R}^p$ into $\text{St}(1, p)$ is $m_{\text{St}(1, p)}(\bm{\beta}) = \bm{\beta}/||\bm{\beta}||_2$.
	Note that projecting the posterior of the unconstrained $\bbeta$ to $\bbeta^*$ does not cause any extra computational burden. 
	Meanwhile, it is anticipated that the posterior of the projected $\bbeta^*$ is $\sqrt{n}$-consistent based on Theorem \ref{theo:BvM}, since the posterior contraction rate of the projected posterior is at least that of the original posterior \citep[Theorem 2]{sen2022constrained}.
	Numerical studies valid the claim that the projection leads to accurate estimation of $\bbeta$ with uncertainty correctly quantified by the induced posterior interval; refer to \textit{Supplement} C.5. 
	
	\section{Simulations}
	\label{sec:sim}
	Extensive simulations are conducted to evaluate two aspects of the proposed method: 
	i) {how the proposed adaptive scheme guarantees the mixing of PPD value chains}; and ii) 
	{how the proposed \texttt{BuLTM} package performs in predictive inference}. 
	In Section \ref{subsec: simu_illustration}, we present examples that illustrate the hyperparameter tuning procedure for mixing PPD value chains;
	in Section \ref{subsec: sim_predictive}, we compare \texttt{BuLTM} with other competitors.

	\noindent{\textbf{Simulation setting}}. 
	Our data setting covers two domains of response: (a) a real-valued response and (b) a positive response.
	In both settings, the simulated data are generated from model \eqref{basicLTM}. 
	In setting (a), the transformation $h$ is set to be the inverse (signed) Box-Cox function with $\lambda = 0.5$, the same as the \texttt{box-cox} setting in \cite{kowal2024monte}. 
	Two types of model error distributions are considered:
	(a.1) a standard normal distribution, a benchmark setting; 
	(a.2) a normal mixture distribution, a model-misspecification setting for semiparametric methods. 
	In setting (b), we allow the observations to be right-censored in a noninformative censoring scheme. 
	Two types of model error distributions are considered:
	(b.1) an extreme-value distribution, the popular proportional hazard setting \citep{cox1972regression}; 
	(b.2) a normal mixture distribution, a model-misspecification setting for semiparametric
	methods.
	
	In each simulation, we generate $n=200$ samples as the training set and $n_{\text{test}} = 20$ independent samples as the test set, and independently replicate the simulation runs $100$ times. 
	An additional simulation setting (c) generated from nonlinear transformation models is deferred to \textit{Supplement} C.

	\subsection{Visualization of the adaptive scheme for prior adjustment}
	\label{subsec: simu_illustration}
	We use examples from Setting (a.2) to illustrate the adaptive prior adjustment Algorithm \ref{alg:Tuning} achieving well mixed PPD value chains. 
	We examine two aspects of the mixing of PPD value chains: i) visualizing the trace plots of MCMC chains; ii) checking whether the rank normalized $\hat{R}$ statistic \citep{vehtari2021rank} exceeds $1.01$.
	We use the chains of the sum of the log posterior density of the observed data $\mathcal{D}$ given by MCMC samples, denoted by \texttt{lp\_\_} in \texttt{Stan}, as an alternative to the PPD value chains for simplicity. 
	In each example, we use the $\tau$-Sigmoid transformation with $\tau = 5$ as the data transformation mentioned in Section \ref{subsec: ModTrans}, and set the chain length for tuning as the default $N_d = 500$. 
	Examples in other settings and sensitivity analysis of the chain length are deferred to \textit{Supplement} C and D respectively. 
	In all examples, we set the hyperparameter $\rho = 1$ as stated in Section \ref{subsec: sufficient informative}. 
	
	\begin{example}
		\label{example: 1}
		Set initial values $(\eta_0, \zeta_0) = (0.01, 0.01)$, yielding very vague priors for $H$ and $F_\xi$.
		After drawing MCMC samples, we compare the within-chain variance of $H(s_{j_0})$ with the inverse of the prior information level $\tilde{\mathcal{V}}_{s_{j_0}}$. 
		As shown in Figure \ref{zeta_plot_example1}, the within-chain MCMC variance is much less than $\tilde{\mathcal{V}}_{s_{j_0}}(\eta_0, \zeta_0)$. 
		Thus, we assert that the BNPs are not sufficiently informative to achieve mixing of PPD value chains. 
		As evidence, Figure \ref{lp_example1} shows that the MCMC traces of  ``lp\_\_" are poorly mixed, with $\hat{R} = 1.25$. 
		Accordingly, the effective sample size (ESS) of \texttt{lp\_\_} is only 7, which is definitely insufficient. 
	\end{example}
	
	\begin{figure}[!htb]
		\centering
		\subfigure[]{
			\begin{minipage}[t]{0.45\linewidth}
				\centering
				\includegraphics[width=2.0in]{figures/Transformation/zeta_plot_example1.pdf}
				\label{zeta_plot_example1}
			\end{minipage}
		}
		\vspace{-.5cm}
		\subfigure[]{
			\begin{minipage}[t]{0.45\linewidth}
				\centering
				\includegraphics[width=2.0in]{figures/Transformation/lp_example1.pdf}
				\label{lp_example1}
			\end{minipage}
		}
		
		\caption{\footnotesize (a) The curve of $\tilde{\mathcal{V}}_{s_{j_0}}(\eta, \zeta)$ with $\eta = 0.01$ fixed; horizontal line: the within-chain MCMC variance sampled with hyperparameters $(\eta, \zeta) = (0.01, 0.01)$. (b) Trace plot of chains of \textit{lp\_\_}. }
		\label{Fig: example 1}
	\end{figure}
	
	\begin{figure}[!htb]
		\centering
		\subfigure[]{
			\begin{minipage}[t]{0.45\linewidth}
				\centering
				\includegraphics[width=2.0in]{figures/Transformation/zeta_plot_example2.pdf}
				\label{zeta_plot_example2}
			\end{minipage}
		}
		\vspace{-.5cm}
		\subfigure[]{
			\begin{minipage}[t]{0.45\linewidth}
				\centering
				\includegraphics[width=2.0in]{figures/Transformation/lp_example2.pdf}
				\label{lp_example2}
			\end{minipage}
		}
		
		\caption{\footnotesize (a) The curve of $\tilde{\mathcal{V}}_{s_{j_0}}(\eta, \zeta)$ with $\eta = 0.01$ fixed; horizontal line: the within-chain MCMC variance sampled with hyperparameters $(\eta, \zeta) = (0.01, 0.25)$. (b) Trace plot of chains of \textit{lp\_\_}. }
		\label{fig: example 2}
	\end{figure}

	\begin{example}
		\label{example: 2}
		Figure \ref{zeta_plot_example1} illustrates hyperparameter tuning on $(\eta, \zeta)$. 
		With $\eta = 0.01$ fixed, candidates for updating $\zeta$ should enable the curve of $\tilde{\mathcal{V}}_{s_{j_0}}$ against $\zeta$  to fall below the within-chain MCMC variance (the horizontal line). 
		Meanwhile, the curve falls sharply on the interval $(0,. 0.25]$, and decreases gently on the interval $[0,25, 1]$. 
		Consequently, we set $(\eta, \zeta) = (0,01, 0.25)$ as the updated tuning hyperparameters. 
		Figure \ref{zeta_plot_example2} shows that the within-chain MCMC variance exceeds $\tilde{\mathcal{V}}_{s_{j_0}}$, indicating that the BNPs are sufficiently informative. 
		As a result, the MCMC chains of \textit{lp\_\_} mix well as shown by Figure \ref{lp_example2} with $\hat{R} = 1.006$, demonstrating the efficacy of the tuning procedure. 
		Meanwhile, the obtained ESS of 520 is sufficient to represent the log posterior densities. 
	\end{example}
	According to \cite{margossian2023many}, 
	reliability diagnostics ($\hat{R}$ and ESS) demonstrate that the estimated PPD in Example \ref{example: 2} is reliable. 
	This example also illustrates that the hyperparameter configuration $(\eta, \zeta, \rho) = (0.01, 0.25, 1)$ achieves mixing of PPD value chains under setting (a.2). 
	We further use this hyperparameter setting as the initial values throughout all numerical studies. 
	Interestingly, this hyperparameter setting achieves mixing of PPD value chains in all our numerical studies.

	We do NOT recommend increasing prior information by increasing $\eta$ only since in \eqref{lowerbound: approximation}, $|\partial \tilde{\mathcal{V}}_{s_{j_0}}/\partial \zeta|$ is much larger than $|\partial \tilde{\mathcal{V}}_{s_{j_0}}/\partial \eta|$. % by taking partial derivatives of $\tilde{\mathcal{V}}_{s_{j_0}}(\eta, \zeta)$. 

	\begin{figure}
		\centering
		\subfigure[]{
			\begin{minipage}[t]{0.45\linewidth}
				\centering
				\includegraphics[width=2.0in]{figures/Transformation/eta_plot_example3.pdf}
				\label{eta_plot_example3}
			\end{minipage}
		}
		\vspace{-.5cm}
		\subfigure[]{
			\begin{minipage}[t]{0.45\linewidth}
				\centering
				\includegraphics[width=2.0in]{figures/Transformation/lp_example3.pdf}
				\label{lp_example3}
			\end{minipage}
		}
		
		\caption{\footnotesize (a) The curve of $\tilde{\mathcal{V}}_{s_{j_0}}(\eta, \zeta)$ with $\zeta = 0.01$ fixed; horizontal line: the within-chain MCMC variance sampled with hyperparameters $(\eta, \zeta) = (5, 0.01)$. (b) Trace plot of chains of \textit{lp\_\_}. }
		\label{fig: example 3}
	\end{figure}

	\begin{example}
		As a counter example, based on the initial values in Example \ref{example: 1}, we fix $\zeta = 0.01$ and set $\eta = 5$, yielding a vague prior for $f_\xi$ and a highly informative prior for $H$.
		Unfortunately, this hyperparameter setting is insufficiently informative since the within-chain MCMC variance falls below $\tilde{\mathcal{V}}_{s_{j_0}}$ as shown in Figure \ref{eta_plot_example3}. 
		Thus, the PPD value chains are poorly mixed as shown by Figure \ref{lp_example3}, with $\hat{R} = 1.056$. 
		
	\end{example}
	Note that it is meaningless to further increase $\eta$ since too informative a prior for $H$ leads to extremely slow sampling. 
	This counter example illustrates another aspect of the utility of the proposed prior adjustment scheme:  
	it strikes a balance between the noninformative priors that yield poor mixing and the too informative priors that hinder sampling.

	% \begin{table}[!htb]
	% \setcellgapes{\tabcolsep}                   % added (from makecell)
	%     \makegapedcells  
	%   \centering
	%   \begin{tabular}{cccc}
	%     \raisebox{-0.5\height}{Text 1} & \includegraphics[width=2in,valign=m]{figures/Transformation/Compare_Example1.pdf} & \includegraphics[width=2in,valign=m]{figures/Transformation/alpha_example1.pdf} & \includegraphics[width=2in, valign=m]{figures/Transformation/lp_example1.pdf} 
	%     \\
	%     & \multicolumn{3}{c}{{\scriptsize (a)}}\\
	%     \hline
	%     \raisebox{-0.5\height}{Text 2} & \includegraphics[width=2in,valign=m]{figures/Transformation/Compare_Example2.pdf} & \includegraphics[width=2in,valign=m]{figures/Transformation/alpha_example2.pdf} & \includegraphics[width=2in,valign=m]{figures/Transformation/lp_example2.pdf}
	%     \\
	%     & \multicolumn{3}{c}{{\scriptsize (b)}}\\
	%     \hline
	%    \raisebox{-0.5\height}{Text 3} & \includegraphics[width=2in,valign=m]{figures/Transformation/Compare_Example3.pdf} & \includegraphics[width=2in,valign=m]{figures/Transformation/alpha_example3.pdf} & \includegraphics[width=2in,valign=m]{figures/Transformation/lp_example3.pdf}
	%     \\
	%     & \multicolumn{3}{c}{{\scriptsize (c)}}\\
	%   \end{tabular}
	% \end{table}

	\subsection{Predictive capability evaluation}
	\label{subsec: sim_predictive}
	This subsection evaluates the predictive capability of the \texttt{BuLTM} package under transformation models. 
	In \texttt{BuLTM}, we use the estimated PPD as the predictive distribution; 
	for the predicted value, in Setting (a), we use the median of the estimated PPD; in Setting (B), we use the quantile of the estimated PPD that corresponds to the censoring rate.

	\noindent{\textbf{Competitors}}. 
	In Setting (a), competitors are  the packages or open-source algorithms for fitting semiparametric transformation models. 
	All competitors adopt the standard normal distribution as the reference distribution. 
	\begin{itemize}
		\item \texttt{R} package \texttt{SeBR} \citep{kowal2024monte}.
		We use the empirical CDF of predicted Monte Carlo samples as the predictive distribution, and use the default predicted value. 
		
		\item \texttt{R} code \texttt{BCTM.lin} \citep[BCTM,][]{carlan2024bayesian}. 
		We use the default PPD as the predictive distribution and use predictive median as the predicted value.

		\item Add-on \texttt{R} package \texttt{tram} \citep{siegfried2023distribution}. 
		An add-on package in \texttt{mlt} \citep{hothorn2020most}, the implementation of conditional transformation models \citep{hothorn2014conditional, hothorn2018most}.
		We use the default predictive distribution and the predicted meadian as the predicted value.
		
		\item \texttt{Python} library \texttt{liesel\_ptm} \citep[PTM,][]{brachem2024bayesian}. 
		We use the default predictive distribution and predicted value. 
	\end{itemize}

	\noindent{In Setting (b), competitors are  the packages for semiparametric survival models.}
	
	\begin{itemize}
		\item \texttt{R} package \texttt{spBayesSurv} \citep{zhou2020spbayessurv}, a Bayesian package for semiparametric survival model fitting and model selection.

		\item \texttt{R} package \texttt{mlt} \citep{hothorn2020most}. 
		
		\item \texttt{R} package \texttt{TransModel} \citep{zhou2022transmodel}, fitting a semiparametric transformation model based on \cite{chen2002semiparametric}.
		
	\end{itemize}
	In Setting (b.1), all competitors use the correct reference distribution; 
	in Setting (b.2), competitors use the reference distribution selected by \texttt{spBayesSurv} following the model selection procedure in \cite{zhou2018unified}.

	\noindent{\textbf{Assessments}}. 
	We evaluate two capabilities: (i) the capability of recovering the predictive distribution; (ii) the performance of a single predicted value. 
	Capability (i) is evaluated by the root integrated mean square error (RIMSE) between estimated predictive distribution $\hat{f}$ and the truth $f$: $\text{RIMSE}(\hat{f}, f) = \sqrt{\int_a^b (\hat{f}(s) - f(s))^2ds}$ on an interval $(a, b)$. 
	For capability (ii) assessment, we use the mean absolute error (MAE) in Setting (a), and the C-index \citep{harrell1982evaluating} in Setting (b). 
	
	\noindent{\textbf{Knot interpolation with censored data}}. 
	{
		We introduce a knot interpolation procedure to incorporate information from censored observations. 
		Let $\tilde{y}$ be the uncensored transformed observations, and $\tilde{y}_c$ be the collection of both censored and uncensored observations. 
		We begin with the $N_I$ interior knots specified by the quantiles of the uncensored observations, denoted by $s_0, \ldots, s_{N_i-1}$. 
		Then we interpolate the quantiles of $\tilde{y}_c$ that are located far from the same quantiles of $\tilde{y}$ and a complement of the knots. 
		We summarize the two-step procedure in Algorithm \ref{alg:quantile-knot}. 
		An example that illustrates and visualizes the procedure is deferred to \textit{Supplement} E. 
		
		\begin{algorithm*}[!htb]\footnotesize
			\caption{Knot interpolation with censored data}\label{alg:quantile-knot}
			\begin{algorithmic}[1]
				% \Require Training data $\{\mathcal{X}_i, \bm{Y}_i = (y_{i1}, y_{i2})\}_{i=1}^n$. 
				% \Ensure CeCNN $(\hat{g}_{1}, \hat{p}_{2})$. 
				\State Configure initial knots. 
				Let $N_I > 1$ be the number of initial knots. 
				For $j=0, \ldots, N_I - 1$, let $s_j = \hat{Q}_{\tilde{y}}(j/N_I)$.
				Sort initial knots $0<s_0<\cdots<s_{N_I - 1} < \tau$. 
				\If{ the transformed observations $\tilde{y}$ are right-censored } 
				\State 
				Let $\tilde{y}_c$ be the collection of all observations, and $\tilde{y}$ be the uncensored observations. 
				For $s_j$ such that $|\hat{F}_{\tilde{y}_c}(s_j) - \hat{F}_{\tilde{y}}(s_j)| \ge 0.05$, interpolate a new knot $s_j^* = \hat{Q}_{\tilde{y}_c}(j/N_I)$. 
				\State Output sorted series of $\{s_1, \ldots, s_j, s_j^*, \ldots, s_{N_I-1}\}$ as final interior knots. 
				\EndIf
			\end{algorithmic}
		\end{algorithm*}
	}
	
	\noindent{\textbf{Setting (a)}}. 
	\begin{figure}[!htp]
		\centering
		\subfigure[]{
			\begin{minipage}[t]{0.45\linewidth}
				\centering
				\includegraphics[width=1.75in]{figures/Transformation/RIMSE_Gauss_CTM.pdf}
				\label{RIMSE_Gauss}
			\end{minipage}
		}
		\vspace{-.5cm}
		\subfigure[]{
			\begin{minipage}[t]{0.45\linewidth}
				\centering
				\includegraphics[width=1.75in]{figures/Transformation/MAE_Gauss_CTM.pdf}
				\label{MAE_Gauss}
			\end{minipage}
		}
		
		\caption{\footnotesize Box-plots of predictive assessments under Setting (a.1). (a), RIMSE; 
			(b), MAE. }
		\label{Box_Gauss}
	\end{figure}
	The box-plots of assessments among all replicative simulations in Settings (a.1) and (a.2) are presented in Figures \ref{Box_Gauss} and \ref{Box_Mix} respectively.
	In setting (a.1), where all competitors correctly specify the model, \texttt{BuLTM} is competitive with \texttt{SeBR} in recovering predictive distributions (two-sided paired t-test $p$-value: 0.15 against \texttt{SeBR}), and significantly outperforms the remaining competitors except \texttt{tram}.
	However, \texttt{BuLTM} outperforms all competitors including \texttt{tram} in fitting the predicted values (one-sided paired t-test $p$-values : $1.147 \times10^{-5}$ against \texttt{SeBR}; $2.14\times 10^{-15}$ against BCTM). 
	In setting (a.2), where all semiparametric methods encounter model misspecification, \texttt{BuLTM} significantly outperforms all competitors in both recovering predictive distributions and fitting the predicted values (one-sided paired t-test $p$-values : $0.0001$ against \texttt{SeBR}; $3.16\times 10^{-8}$ against BCTM).

	\begin{figure}[!htp]
		\centering
		\subfigure[]{
			\begin{minipage}[t]{0.45\linewidth}
				\centering
				\includegraphics[width=1.75in]{figures/Transformation/RIMSE_Mix_CTM.pdf}
				\label{RIMSE_Mix}
			\end{minipage}
		}
		\vspace{-.5cm}
		\subfigure[]{
			\begin{minipage}[t]{0.45\linewidth}
				\centering
				\includegraphics[width=1.75in]{figures/Transformation/MAE_Mix_CTM.pdf}
				\label{MAE_Mix}
			\end{minipage}
		}
		
		\caption{\footnotesize Box-plots of predictive assessments under Setting (a.2). (a), RIMSE; 
			(b), MAE. }
		\label{Box_Mix}
	\end{figure}

	\begin{figure}[!htp]
		\centering
		\subfigure[]{
			\begin{minipage}[t]{0.45\linewidth}
				\centering
				\includegraphics[width=1.75in]{figures/Transformation/RIMSE_PH.pdf}
				\label{RIMSE_PH}
			\end{minipage}
		}
		\vspace{-.5cm}
		\subfigure[]{
			\begin{minipage}[t]{0.45\linewidth}
				\centering
				\includegraphics[width=1.75in]{figures/Transformation/Cind_PH.pdf}
				\label{Cind_PH}
			\end{minipage}
		}
		
		\caption{\footnotesize Box-plots of predictive assessments under Setting (b.1). (a), RIMSE; 
			(b), C-index. }
		\label{Simu_PH}
	\end{figure}
	
	\noindent{\textbf{Setting (b)}}. The box-plots of assessments among all replicative simulations in Settings (b.1) and (b.2) are presented in Figures \ref{Simu_PH} and \ref{Simu_Non} respectively. 
	In Setting (b.1), the commonly used proportional hazard setting, \texttt{BuLTM} significantly outperforms \texttt{spBayesSurv} (one-sided paired t-test $p$-values: $0.0002$) and \texttt{TransModel} ($2.03\times 10^{-5}$) in recovering the predictive distributions, and is comparable with \texttt{spBayesSurv} and \texttt{TransModel} in C-index, while \texttt{mlt} outperforms. 
	In Setting (b.2), where the competitors encounter model misspecification, \texttt{BuLTM} significantly outperforms the competitors in recovering the predictive distributions, and slightly outperforms in C-index. 
	
	In summary, \texttt{BuLTM} is robust against model misspecification for both real-valued and positive responses, and is competitive (generally significantly outperforms) in both predictive distribution recovery and single-value predictions.

	\begin{figure}[!htp]
		\centering
		\subfigure[]{
			\begin{minipage}[t]{0.45\linewidth}
				\centering
				\includegraphics[width=1.75in]{figures/Transformation/RIMSE_Non.pdf}
				\label{RIMSE_Non}
			\end{minipage}
		}
		\vspace{-.5cm}
		\subfigure[]{
			\begin{minipage}[t]{0.45\linewidth}
				\centering
				\includegraphics[width=1.75in]{figures/Transformation/Cind_Non.pdf}
				\label{Cind_Non}
			\end{minipage}
		}
		
		\caption{\footnotesize Box-plots of predictive assessments under Setting (b.2). (a), RIMSE; 
			(b), C-index. }
		\label{Simu_Non}
	\end{figure}
	
	% In summary, \texttt{BuLTM} outperforms SOTA competitors on simulated data for both real-valued and positive responses. 
	% Predictive capabilities of semiparametric competitors are discounted when encountering model misspecification. 

	\section{Applications}
	\label{sec:app}
	
	\subsection{Auto MPG data}
	We first apply \texttt{BuLTM} to Auto MPG \citep{MPGdata}, a benchmark machine learning dataset. 
	The response is city-cycle fuel consumption in miles per gallon (MPG) and the predictors are 3 multivalued discrete and 5 continuous covariates. 
	We preprocess the data by transforming all continuous predictors to $(0, 1)$ and center the response to $\mathbb{R}$. 
	We split the data into a 90\% training set and a 10\% test set and repeat the split for 10 runs to compare the out-of-sample predictive performance of \texttt{BuLTM} with other competitors. 
	We allow for nonlinear covariate effects through covariate transformation with an additive structure such that $ h(y) = \sum_{j=1}^8 f_j(z_j) + \epsilon$, where $f_j(z_j) = \sum_{k=1}^K \beta_{jk} \phi_j(z_j)$ and the $\phi_j$ are basis functions. 
	On this dataset, we select the Fourier basis and set $K = 8$. 
	We apply the nonlinear model transformation to both \texttt{BuLTM} and \texttt{SeBR}, namely \texttt{BuLTM.nonlin} and \texttt{SeBR.nonlin}, respectively.

	Two metrics on the test sets are used for assessment: i) the MAE between the predicted and the true values, and ii) the coverage probability (CP) of the 95\% prediction intervals. 
	We compare \texttt{BuLTM} with the competitors in Section \ref{sec:sim} under simulation Setting (A). 
	To reduce computational burden, we fit the linear model for BCTM without the smooth transformation. 
	Thus, the results of BCTM can also be treated as a baseline model.

	\begin{figure}[!htp]
		\centering
		\subfigure[]{
			\begin{minipage}[t]{0.45\linewidth}
				\centering
				\includegraphics[width=2.0in]{figures/Transformation/MPG_MAE.pdf}
				\label{MAE_MPG}
			\end{minipage}
		}
		\vspace{-.5cm}
		\subfigure[]{
			\begin{minipage}[t]{0.45\linewidth}
				\centering
				\includegraphics[width=2.0in]{figures/Transformation/MPG_CP.pdf}
				\label{CP_MAE}
			\end{minipage}
		}
		
		\caption{\footnotesize Box-plots of predictive assessments on the MPG dataset. (a), MAE; 
			(b), CP, the horizontal line is the nominal level of coverage. }
		\label{App: MPG}
	\end{figure}
	
	The box-plots of the assessment metrics are presented in Figure \ref{App: MPG}. 
	We find that for both \texttt{BuLTM} and \texttt{SeBR}, the linear model enjoys lower MAE than the nonlinear model, indicating that a linear transformation model is adequate to fit the MPG data. 
	On the MAE metric, \texttt{BuLTM.lin} is competitive with \texttt{mlt} (two-sided paired t-test $p$-value: 0.062) and significantly outperforms the other competitors. 
	On the CP metric, both \texttt{BuLTM} and \texttt{BuLTM.nonlin} achieve the nominal coverage, while \texttt{mlt} fails to do so. 
	This real-word example demonstrates the superiority of \texttt{BuLTM} in both fitting predicted values and recovering predictive distributions for real-valued data.

	\subsection{Heart failure clinical records data}
	The second real-world example is the heart failure clinical records data. 
	The dataset records 299 heart failure patients collected at the Faisalabad Institute of Cardiology and at the Allied Hospital in Faisalabad, from April to December in 2015 \citep{ahmad2017survival}. 
	The dataset consists of 105 women and 194 men, with a range of ages between 40 and 95 years old. 
	In the dataset, 96 subjects are recorded as dead and the remaining 203 are censored, leading to a censoring rate of 67.9\%, which is relatively high. 
	The dataset contains 11 covariates reflecting subject's clinical, body, and lifestyle information. 
	% Among the 11 covariates, 5 of them are binary variables: anaemia, high blood pressure, diabetes, sex, and smoking. 
	% The dataset considers a patient having anaemia if hematocrit levels were lower than 36\%, while 
	% the criterion for high blood pressure is unclear in the study. 
	% Other continuous covariates are age (year), creatinine phosphokinas (level of the creatinine phosphokinas enzyme in the blood, mcg$/$L), ejection fraction (percentage of blood leaving the heart at each contraction), platelets (platelets in blood, kiloplatelets$/$mL), serum creatinine (level of creatinine in blood, mg$/$dL), and serum sodium (level of sodium in blood, mEq$/$L). 
	In this dataset, \texttt{spBayesSurv} selects the PH model and thus, \texttt{TransModel} specifies $r=0$, and \texttt{mlt} uses the reference distribution \texttt{"MinExtrVal"}. 
	We conduct 10 runs of 5-fold cross validation. 
	The results of estimation of regression coefficients are deferred to \textit{Supplement} F.1. 
	
	On the heart failure dataset, we use two metrics to assess the predictive capabilities: 
	i) the C-index, where we use the $70\%$ quantile of the predictive distribution (close to the censoring rate) as the predicted survival time of a future observation;
	ii) the Integrated Brier Score \citep[IBS][]{graf1999assessment} on the follow-up time interval (the lower the IBS, the better the prediction). 
	
	%% Figure 7
	\begin{figure}[!htp]
		\subfigcapskip=-10pt
		\centering
		\subfigure[]{
			\begin{minipage}[t]{0.45\linewidth}
				\centering
				\includegraphics[width=1.5in]{figures/Transformation/HeartCindex.pdf}
				\label{CindexHeart}
			\end{minipage}
		}
		\vspace{-.5cm}
		\subfigure[]{
			\begin{minipage}[t]{0.45\linewidth}
				\centering
				\includegraphics[width=1.5in]{figures/Transformation/IBSHeart.pdf}
				\label{IBSHeart}
			\end{minipage}
		}
		\caption{\footnotesize{Prediction comparison between BuLTM, spBayesSurv, and TransModel on the heart failure dataset; (a), C index; (b), Integrated Brier score; red dashed lines: the mean of the metrics. } }
		\label{Transplant}
	\end{figure}

	Box-plots of the assessment metrics are presented in Figure \ref{Transplant}. 
	\texttt{BuLTM} significantly outperforms other competitors in both C-index and IBS:  
	for C-index, the one-sided paired t-test $p$-values are $0.01$ against \texttt{spBayesSurv}, $0.0038$ against \texttt{TransModel}, and $0.0046$ against \texttt{mlt}; 
	for IBS, the one-sided paired t-test $p$-values are $0.002$ against \texttt{spBayesSurv}, $1.47 \times 10^{-5}$ against \texttt{TransModel}, and $2.8 \times 10^{-4}$ against \texttt{mlt}. 
	This example demonstrates the superiority of \texttt{BuLTM} in the  prediction of censored data.

	\section{Discussion}
	\label{sec:disc}

	Under unidentified transformation models, the proposed sufficient informativeness criterion extends the Gelman-Rubin (G-R) statistic \citep{gelman1992inference} from MCMC checking to covering prior adjustment, by \textit{taking another view of MCMC convergence}. 
	The G-R statistic diagnoses whether the MCMC transitions converge to the stationary distribution \citep{roy2020convergence}. 
	By contrast, we examine whether the within-chain MCMC variance is close to the true posterior variance (dominated by the inverse of prior information level; refer to Theorem \ref{theo: posterior variance}) under unidentified transformation models.

	As the AE and referees' sharp insights have helped us clarify, the application scope of \texttt{BuLTM} covers general data domains through the multiplicative error working model. 
	We want to emphasize that \texttt{BuLTM} offers a robust  alternative toolbox to survival analysis in addition to \texttt{spBayesSurv} \citep{zhou2018unified} and \texttt{TransModel} \citep{zhou2022transmodel}: 
	it can estimate conditional survival functions and conditional hazard functions for future data, and provides a reliable Bayesian estimate of identified regression coefficient $\bm{\beta}$ with a tractable unit-norm constraint.

	% Over the past decade, the relation between ``identifiability" and ``informative priors" remained to be inconclusive in the Bayesian community; see the links \href{https://statmodeling.stat.columbia.edu/2022/07/29/discussion-on-identifiability-and-bayesian-inference/}{Online discussion A} and \href{https://statmodeling.stat.columbia.edu/2014/02/12/think-identifiability-bayesian-inference/}{Online discussion B} for example. 
	% These discussions call for quantitative measures on the informativeness of data or priors. 
	% Most literature was contributed to Dawid's unidentifiability \citep{Dawid1979JRSSB} caused by so-called ``uninformative data" \citep[among others]{XieCarlin2006JSPI, koop2013identification}, which is different to the unidentifibility met by nonparametric transformation models.
	
	In our unidentified scenario caused by a flat likelihood, only a few discussions have mentioned that ``weakly informative priors" may resolve the poor mixing phenomenon (\cite{reich2019bayesian}; \cite{mcelreath2020rethinking}), but they did not quantify how ``weak" the priors can be. 
	In this sense, this article might be the first to quantitatively link prior informativeness with MCMC mixing under unidentified models: 
	we rigorously demonstrate that the variance of multi-modal posteriors does not vanish with increasing data size, but rather is dominated by the prior information level; 
	and we comprehensively illustrate how to achieve MCMC mixing by increasing the prior information through an analytic expression for prior information, an algorithm of hyperparameter tuning, and visualization of the whole procedure. 
	
	Our method addresses the poor mixing of PPD value chains under unidentified transformation models. 
	Nonetheless, mixing of other parameters such as the Dirichlet process mixture components remains unsolved due to the label-switching issue. 
	It is anticipated that an ordered Dirichlet process stick-breaking construction \citep{zarepour2012rapid} can resolve the problem and speed up our MCMC sampling, but the implementation is so far unavailable in \texttt{rstan} since it needs the Boost \texttt{C++} libraries. 
	Our method may be further extended to other unidentified models such as latent Dirichlet allocation \citep[LDA,][]{blei2003latent} and Bayesian additive regression trees \citep[BART,][]{chipman2012bart}, where new prior elicitation and new quantifications of prior information are needed.

	% Obtaining predictions from poorly mixed PPD chains through Bayesian stacking \citep{yao2022stacking} is parallel to our work. 
	% There are three major differences: i) our method is a prior adjustment scheme to draw mixed PPD chains, while Bayesian stacking is a post process on non-mixing MCMC chains; ii) our BPI inherits the original definition of PPD; 
	% Yao et al.'s stacked predictive distribution is another fashion that is suitable in the machine learning context, though it violates the PPD definition; iii) our method only requires drawing several MCMC chains, while the Bayesian stacking needs many (at least 8) parallel MCMC chains, which may lead to heavy computational burden especially under unidentified models. 

	\begin{center}
		{\large\bf SUPPLEMENTARY MATERIAL}
	\end{center}
	
	\begin{description}
		
		\item[Title:] Supplementary material for ``On MCMC mixing for predictive inference under unidentified transformation models" (PDF file)
		
	\end{description}

	\bibliographystyle{apalike}
	\linespread{0.1}
	\selectfont
	\bibliography{main}